\documentclass[12pt]{article}
\usepackage{amsfonts,amsmath,graphicx,amssymb,geometry,arydshln,textcomp,amsthm,hyperref,framed,color,appendix,lscape}
\usepackage[round, authoryear]{natbib}
\usepackage[nodisplayskipstretch]{setspace}
\usepackage{ragged2e}
\usepackage{xcolor,xspace,colortbl,rotating}
\usepackage{tabulary}
\usepackage{booktabs}
\usepackage{multirow}
\usepackage{bm}
\usepackage[ruled,linesnumbered]{algorithm2e} 
\usepackage{cleveref}
\usepackage{threeparttable}
\usepackage[subrefformat=parens,labelformat=parens]{subfig}
\usepackage{etoc}
\usepackage{enumerate}
\usepackage{ltablex,caption} 
\setcounter{MaxMatrixCols}{10}
\DeclareGraphicsExtensions{.pdf,.svg,.eps,.ps,.png,.jpg,.jpeg}
\geometry{margin=1in,head=1in,left=1in, right=1in, headsep=0in}
\theoremstyle{definition}
\newtheorem{theorem}{Theorem}

\newtheorem{assumption}{Assumption}

\newtheorem {corollary}{Corollary}

\newtheorem {lemma}{Lemma}

\newtheorem {remark}{Remark}

\crefname{subequation}{subequation}{subequations}
\Crefname{subequation}{Assumption}{assumptions}
\crefname{assumption}{Assumption}{Assumptions}
\hypersetup{colorlinks,linkcolor={blue},citecolor={blue},urlcolor={blue}}

\captionsetup[subfigure]{subrefformat=simple,labelformat=simple,listofformat=subsimple}

\DeclareMathOperator{\tr}{tr}

\setcounter{secnumdepth}{5}
\setcounter{tocdepth}{5}

\makeatletter
\newcommand\subsubsubsection{\@startsection{paragraph}{4}{\z@}{-2.5ex\@plus -1ex \@minus -.25ex}{1.25ex \@plus .25ex}{\normalfont\normalsize\bfseries}}
\newcommand\subsubsubsubsection{\@startsection{subparagraph}{5}{\z@}{-2.5ex\@plus -1ex \@minus -.25ex}{1.25ex \@plus .25ex}{\normalfont\normalsize\bfseries}}
\makeatother


\usepackage{etoolbox}
\BeforeBeginEnvironment{equation*}{\begin{singlespace}\vspace*{-\baselineskip}}
\AfterEndEnvironment{equation*}{\end{singlespace}\noindent\ignorespaces}
\BeforeBeginEnvironment{equation}{\begin{singlespace}\vspace*{-\baselineskip}}
\AfterEndEnvironment{equation}{\end{singlespace}\noindent\ignorespaces}
\BeforeBeginEnvironment{align*}{\begin{singlespace}\vspace*{-\baselineskip}}
\AfterEndEnvironment{align*}{\end{singlespace}\noindent\ignorespaces}
\BeforeBeginEnvironment{align}{\begin{singlespace}\vspace*{-\baselineskip}}
\AfterEndEnvironment{align}{\end{singlespace}\noindent\ignorespaces}

\begin{document} 

\author{Xiao Huang\thanks{Department of Economics, Finance, and Quantitative Analysis, Coles College of Business, Kennesaw State University, GA 30144, USA. Email: xhuang3@kennesaw.edu.}}
\title{\Large Boosted \textit{p}-Values for High-Dimensional Vector Autoregression}
\date{ \today}
\maketitle

\doublespacing

\begin{abstract}
	Assessing the statistical significance of parameter estimates is an important step in high-dimensional vector autoregression modeling. Using the least-squares boosting method, we compute the \textit{p}-value for each selected parameter at every boosting step in a linear model. The \textit{p}-values are asymptotically valid and also adapt to the iterative nature of the boosting procedure. Our simulation experiment shows that the \textit{p}-values can keep false positive rate under control in high-dimensional vector autoregressions. In an application with more than $100$ macroeconomic time series, we further show that the \textit{p}-values can not only select a sparser model with good prediction performance but also help control model stability. A companion R package \texttt{boostvar} is developed.

\end{abstract}

\bigskip

\textbf{JEL Classification}: C18, C32

\bigskip

\textbf{Keywords}: Least-squares boosting, linear regression, vector autoregression, \textit{p}-values

\newpage  

\normalsize

\doublespacing

\section{Introduction} \label{sec:intro}

Since its introduction as a macroeconometric framework in \cite{sims1980var}, vector autoregression (VAR) has become one of the most valuable econometric tools for describing the joint dynamics of multiple time series. Its application can also be found in many other disciplines such as biology and neuronscience. With increasingly available data, there have been many developments in estimating high-dimensional VARs. Ideas of using the lasso in \cite{tibshirani1996lasso} or imposing other penalties are flourishing. The use of the lasso or its variants can be found in \cite{fujita2007,lozano2009grplasso,basu2015lasso,medeiros2016lasso,nicholson2020grplasso,wong2020lasso}, among others. The Dantzig selector in \cite{han2015directVAR} and the tensor method in \cite{wang2021tensor} also directly address the estimation problem in high-dimensional VARs. Methods such as those in \cite{chen2012srrr,uematsu2019sofar} are designed for multivariate linear regression, and they can also be used for estimation. A data analyst now has many methods for estimation in high-dimensional VARs.

One unanswered question in this active area of research is how to assign \textit{p}-values to parameter estimates in a high-dimensional VAR. In a standard linear regression, we provide both parameter estimates and their \textit{p}-values, and the same should be expected in high-dimensional VARs. Without a \textit{p}-value, it is not possible to assess the statistical uncertainty associated with an estimate. If the research purpose is model interpretation, obtaining a valid \textit{p}-value is perhaps as important as having a parameter estimate. If the goal is prediction, the model is typically selected via cross-validation (c.v.). However, cross-validation itself guarantees no statistical significance, and knowing the \textit{p}-values can help improve the prediction for certain data. Given the abundant evidence in the literature on cross-section regression that tools like the lasso can have high false positive rate, it is reasonable to conjecture that many of the non-zero parameter estimates produced by the lasso (and some other methods such as boosting) can be statistically insignificant. A \textit{p}-value can certainly help reveal better the underlying structure in the data.

Since each estimation method generates its own sparsity, it is clear that we need different ways of computing the standard error (s.e.) to reflect the unique uncertainty embedded in each method. We choose to work with the least-squares boosting (LS-Boost) method in this paper. LS-Boost is introduced in \cite{friedman2001gbm} as an example of gradient boosting algorithm. It can easily handle a regression model with high dimension. Its application in high-dimensional VAR estimation is discussed in \cite{lutz2006boosting}. The focus of this paper is not to develop another algorithm to generate a new sparsity in high-dimensional VARs. Instead, we study the following question: If LS-Boost is used in a high-dimensional VAR for estimation, can we derive a standard error and compute the \textit{p}-value for each selected parameter at every boosting step? We give an affirmative answer to this question.

When the number of boosting step is large, the LS-Boost estimator converges to the LS estimator. We provide a simulation example to show our \textit{p}-value can converge to the \textit{p}-value of the LS estimator in a bivariate VAR. When the VAR dimension is large, there is no unique LS solution to a VAR; the LS-Boost estimator converges to one of the LS solutions as boosting step increases. For LS-Boost, our method can compute the \textit{p}-value for every estimate at a given boosting step. Just like the lasso and many other penalized estimator, the LS-Boost estimator is biased at a given boosting step. In this case, the \textit{t}-test uses a biased parameter estimator to test the null hypothesis of whether the cumulative incremental changes in a parameter is statistically different from $0$ at every boosting step (see more detailed discussions in \Cref{sec:discussions}). Let $\beta_j$ be a scalar parameter for the variable $X_j$. The classical fixed hypothesis is $H_0{:}\;\beta_j = 0$. One is supposed to use an unbiased estimator $\hat{\beta}_j$ for hypothesis testing. In this regard, our null hypothesis is not equivalent to $H_0{:}\;\beta_j = 0$. This, however, should not be the ground on which to discount the usefulness of our \textit{p}-value. Consider a case of a cross-section regression with $1,000$ variables and $100$ observations. To test $H_0{:}\;\beta_j = 0$, we need a reliable $\hat{\beta}_j$ and its s.e. But to obtain a $\hat{\beta}_j$ in a high-dimensional regression, we will inevitably resort to some adaptive procedures such as the lasso. As soon as one uses the lasso for even an initial estimate, the lasso selection and estimation risk starts to contribute the overall estimation risk of $\hat{\beta}_j$, and these additional risks must be accounted for when testing the null, an intrinsically hard problem. In addition, no matter which penalized method is used in high-dimensional model, one will likely end up with a biased estimator for $\beta_j$ and use it in fixed hypothesis testing. As a result, hypothesis testing in high dimensional models is challenging. Various approaches have been proposed for fixed hypotheses testing in high-dimensional cross-section regressions, including \cite{wasserman2009hdselection,meinshausen2009pval,zhangzhang2014JRSSBlowdimension,vandeGeer2014AOSoptimalinterval}, among others. These approaches typically make some assumptions on the sparsity and the variables of the model. These papers also have one thing in common: they all target the unknown \textit{population} parameter $\beta_j$ and try to test the hypothesis $H_0{:}\;\beta_j = 0$.


We take a different approach in this paper. In high dimensional linear models, since it is practically impossible to obtain an unbiased estimator for $\beta_{j}$, can we estimate $\beta_j$ partially, in an incremental way, and perform a \textit{t}-test on these partial estimates? The new perspective this paper explores is we can use a sequence of biased LS-Boost estimates to continuously check the statistical significance of a coefficient up to each boosting step. Put differently, if the numerator of a \textit{t}-statistic is biased, as long as we can obtain a valid s.e. for the biased estimator and correctly quantify the statistical uncertainty of the numerator, the \textit{t}-statistic will still have an asymptotic normal distribution. This argument will be made more precise in the discussion in \Cref{sec:discussions}.


A related line of research is the recent development in post-selection inference for the lasso, where works in \cite{lockhart2014lassopval,lee2016lassopval,tibshiranietal2016lassopval} focus on how to assign a valid \textit{p}-value to the lasso estimator by adaptively incorporating the uncertainty in variable selection. Our work differs from theirs in several aspects: (i) this paper studies the issue of assigning \textit{p}-values in a VAR for LS-Boost while their work is focused on the lasso (and several other sequential procedures) in a cross-section regression; (ii) more importantly, \cite{lee2016lassopval,tibshiranietal2016lassopval} assume that data are normal and use polyhedral conditioning sets to characterize the lasso selection event and to address the post-selection inference in the lasso; whereas we make no assumption of data normality and derive the estimator's variance directly from its closed-form solution; (iii) we also provide a tractable asymptotic distribution for each parameter estimate at every boosting step. Our method of computing the \textit{p}-values in VAR nests cross-section regression as a special case. In this special case, we use three common data sets to demonstrate the computation of \textit{p}-values in a cross-section regression. The discussion can be found in the online supplement.

This paper makes the following contributions to the literature on high-dimensional VARs. First, we provide an asymptotic distribution result for every selected parameter at each boosting step for a stationary VAR. We do not make any sparsity assumption on the parameter; nor do we impose any assumption on the relative growth rate between sample size and the dimension of the model. The asymptotic variance allows us to construct a \textit{t}-test easily. Second, we extend the computational bound results for boosting in \cite{freundetal2017boosting} to VAR. When combined with the asymptotic result, this allows us to characterize the behavior of LS-Boost estimator when both the boosting step and sample size go to infinity. Third, our simulation result shows that using a \textit{p}-value can help reduce the false positive rate (FPR) and improve the F score of a model. Our application using the macroeconomic data in \cite{mccracken2016data} further demonstrates that a \textit{p}-value-adjusted model can give good prediction performance and good model stability. With the help of \textit{p}-value, we can remove more than $95\%$ of the nonzero LS-Boost estimates while still improving the prediction in this application.  An R package \texttt{boostvar} that implements our method can be found at \url{https://github.com/xhuang20/boostvar}.

The rest of the paper is organized as follows. Section 2 discusses two LS-Boost algorithms in VAR. Section 3 derives the standard error and asymptotic distribution for the LS-Boost estimator. Section 4 presents the computation bounds for boosting in a VAR model. Simulation results are discussed in Section 5, and Section 6 includes an application. Section 7 concludes. The online supplement contains all proofs, additional discussions and tables.

\section{Two LS-Boost algorithms in VAR} \label{sec:LS-Boost}

We begin by introducing the standard VAR model and some notations. Consider a $d\times1$ vector time series $y_t$ with $t=1,\cdots,T$. A $p$th-order VAR (VAR($p$)) is defined as
\begin{equation} \label{eq:varp}
	y_t = \phi_1 y_{t-1} + \cdots + \phi_p y_{t-p} + u_t,
\end{equation}
where $\phi_1,\cdots, \phi_p$ are $d \times d$ matrices of coefficients and $u_t$ is a $d \times 1$ vector of error terms. We omit the intercept in formulating the model by assuming data are demeaned. The total number of parameters is $p \times d^2$, which can easily exceed the number of observations $T$ and make the LS method infeasible.

It is often convenient to rewrite \cref{eq:varp} in a matrix format. Let $'$ denote matrix/vector transpose. Define

\begin{equation*} \label{eq:some matrices}
	\underset{T \times d}{\mathbf{Y}} = \begin{bmatrix}
	y_1'\\
	\vdots\\
	y_T'
	\end{bmatrix},
	\underset{T \times d}{\mathbf{Y}_{-1}} = \begin{bmatrix}
	y_0'\\
	\vdots\\
	y_{T-1}'
	\end{bmatrix},
	\cdots,
	\underset{T \times d}{\mathbf{Y}_{-p}} = \begin{bmatrix}
	y_{1-p}'\\
	\vdots\\
	y_{T-p}'
	\end{bmatrix}, 
	\underset{T \times d}{\mathbf{u}} = \begin{bmatrix}
	u_1'\\
	\vdots\\
	u_T'
	\end{bmatrix}, \text{ and }
	\underset{pd \times d}{\bm{\phi}} = \begin{bmatrix}
	\phi_1'\\
	\vdots \\
	\phi_p'
	\end{bmatrix}.
\end{equation*}
In matrix form, \cref{eq:varp} can be written as
\begin{equation} \label{eq:varp_transpose}
	\mathbf{Y} =  \mathbf{Y}_{-1} \phi_1' + \cdots + \mathbf{Y}_{-p} \phi_p' + \mathbf{u}.
\end{equation}
The multivariate regression format is given by
\begin{equation} \label{eq:varp_matrix}
	\mathbf{Y} = \mathbf{X} \bm{\phi} + \mathbf{u},
\end{equation}
where 
\begin{equation} \label{eq:X}
	\mathbf{X}=\left[\mathbf{Y}_{-1},\cdots,\mathbf{Y}_{-p}\right] \text{ and } \bm{\phi} = [\phi_1, \cdots, \phi_p]'.
\end{equation} 

The basic idea of the LS-Boost is to recursively select a column of the covariates in $\mathbf{X}$ that gives the best LS fit of $\mathbf{Y}$ in the first boosting step or the current residuals in later boosting steps.
There is no unique way to implement the LS-Boost algorithm in a VAR. At each boosting step, one can select a column in $\mathbf{X}$ that gives the best fit for a selected column in $\mathbf{Y}$ or the residual matrix and update a single element in $\bm{\phi}$, which is the componentwise linear least squares procedure described in \cite{lutz2006boosting}. Alternatively, one can use the ``row-boosting" procedure in \cite{lutz2006boosting} to select a column in $\mathbf{X}$ to gives the best fit for the entire $\mathbf{Y}$ or the residual matrix and update a row of entries in $\bm{\phi}$. In the following, we describe a slightly more general procedure that nests the ``row-boosting" as a special case.

\subsection{A Group (by variable) LS-Boost Algorithm} \label{sec:group boosting}

To motivate the group LS-boost procedure, let us consider a VAR(2) model with three variables: unemployment rate (UNR), consumer price index (CPI), and inflation rate (INF). \Cref{eq:varp_matrix} becomes

\begin{equation}
	[\underbrace{\text{UNR}, \text{CPI}, \text{INF}}_{T \times 3}] = [\underbrace{\text{UNR}_{-1}, \text{CPI}_{-1}, \text{INF}_{-1},\text{UNR}_{-2}, \text{CPI}_{-2}, \text{INF}_{-2}}_{T \times 6}] \bm{\phi} + \mathbf{u}.
\end{equation}

Our first algorithm works by selecting a variable with all its lags that gives the best fit at each boosting step. Hence, if the variable CPI is found to give the best fit for $\mathbf{Y}$, the matrix $[\text{CPI}_{-1},\text{CPI}_{-2}]$ will be the selected variable matrix for that step. Since we select each variable with all its lags as a group, it will be helpful to rearrange columns in \cref{eq:X} so that they are grouped by variables
\begin{equation} \label{eq:Xg}
	\mathbf{X}_g = [\mathbf{X}_{(1)},\cdots,\mathbf{X}_{(j)},\cdots,\mathbf{X}_{(d)}]_{T \times pd},
\end{equation}
where $X_{(j)}$ is a $T \times p$ matrix that collects $p$ columns of lags in variable $j$ such as $[\text{CPI}_{-1},\text{CPI}_{-2}]$. And the VAR($p$) process in \cref{eq:varp} can be written as
\begin{align}
	\mathbf{Y} &=  \mathbf{X}_{(1)} \phi_{(1)}' + \cdots + \mathbf{X}_{(j)} \phi_{(j)}' + \cdots + \mathbf{X}_{(d)} \phi_{(d)}' + \mathbf{u}  \nonumber \\
	&= \mathbf{X}_g \bm{\phi}_g + \mathbf{u},  \label{eq:varp_grped}
\end{align}
where $\bm{\phi}_g = [\phi_{(1)},\cdots,\phi_{(j)},\cdots, \phi_{(d)}]'$. $\phi_{(j)}'$ is a $p \times d$ coefficient matrix, each row of which comes from the $j$th row of $\phi_1',\cdots,\phi_p'$.

Let $k$ denote the iteration step and $k=0,1,\cdots,k_{\text{stop}}$, and $k_{\text{stop}}$ is a prespecified stopping number in boosting iteration. Let the superscript $(k)$ denote a quantity associated with step $k$ so that, for example, $\hat{\mathbf{R}}^{(k)}$ is the $T \times d$ residual matrix generated at step $k$. Because we select all lags of a variable at each step, the LS regression at step $k$ will take the following form for variable $j$
\begin{equation} \label{eq:boost reg}
	\underset{T \times d}{\vphantom{\phi_{(j)}'}\hat{\mathbf{R}}^{(k-1)}} = \underset{T \times p}{\vphantom{\phi_{(j)}'}\mathbf{X}_{(j)}} \underset{p \times d}{\phi_{(j)}'} + \text{error term},
\end{equation}
where $ \hat{\mathbf{R}}^{(k-1)} = \hat{\mathbf{R}}^{(0)} = \mathbf{Y} $ if $ k=1 $.

If we define the $p \times d$ matrix
\begin{equation} \label{eq:A mat}
	\mathbf{A}_j = (\mathbf{X}_{(j)}'\mathbf{X}_{(j)})^{-1}\mathbf{X}_{(j)}',
\end{equation}
the estimate in \cref{eq:boost reg} with a learning rate $\nu$, can be updated as
\begin{equation} \label{eq:phi update}
	\hat{\phi}_{(j)}^{(k)\prime} = \hat{\phi}_{(j)}^{(k-1)\prime} + \nu \mathbf{A}_j \hat{\mathbf{R}}^{(k-1)}.
\end{equation}
Initialize $\hat{\phi}_{(1)}^{(0)} = \cdots = \hat{\phi}_{(d)}^{(0)} = \mathbf{0}_{d \times p} $. Let $\lVert \cdot \rVert_2$ denote the matrix and vector Euclidean norm. For each boosting iteration $k \ge 1$, the algorithm becomes
\begin{enumerate}[]
	\item[] \textit{Step 1}. Select the variable $\mathbf{X}_{(j_k)}$ with 
	\begin{equation} \label{eq:group boost obj}
	j_k \in \operatorname*{argmin}_{1 \leq j \leq d} \frac{1}{Td} \lVert \hat{\mathbf{R}}^{(k-1)} - \mathbf{X}_{(j)} \beta_{(j)}^{(k)\prime} \rVert_{2}^{2} \text{ with } \hat{\beta}_{(j)}^{(k)\prime} =\mathbf{A}_j \hat{\mathbf{R}}^{(k-1)}.
	\end{equation}
	\item[] \textit{Step 2}. Update $\hat{\phi}_{(j)}^{(k)}$ and $\hat{\mathbf{R}}^{(k)}$.
	\begin{align}
	\hat{\phi}_{(j)}^{(k)\prime} &= \begin{cases}
	\hat{\phi}_{(j_k)}^{(k-1)\prime}  + \nu \hat{\beta}_{(j_k)}^{(k)\prime} & \text{if } j = j_k, \\
	\hat{\phi}_{(j)}^{(k-1)\prime} & \text{if } j \neq j_k,
	\end{cases} \label{eq:group phi update} \\
	\hat{\mathbf{R}}^{(k)} &= \hat{\mathbf{R}}^{(k-1)} - \nu \mathbf{X}_{(j_k)} \hat{\beta}_{(j_k)}^{(k)\prime}. \label{eq:group R update}
	\end{align}
	Update several other quantities.
	\begin{align}
		\mathbf{A}_j^{(k)} &=\begin{cases}
		\mathbf{A}_{j_k}& \text{if } j = j_k,\\
		0& \text{if } j \ne j_k,
		\end{cases} \label{eq:group Ajk}\\
		\mathbf{H}^{(k)} &= \mathbf{X}_{(j_k)}(\mathbf{X}_{(j_k)}'\mathbf{X}_{(j_k)})^{-1}\mathbf{X}_{(j_k)}',\label{eq:group Hk}\\
		\mathbf{B}^{(k)} &= \mathbf{I}_T - (\mathbf{I}_T-\nu \mathbf{H}^{(k)})\cdots
		(\mathbf{I}_T-\nu \mathbf{H}^{(1)}), \label{eq:group Bk}\\
		\hat{\mathbf{F}}^{(k)} &= \hat{\mathbf{F}}^{(1)} + \cdots + \hat{\mathbf{F}}^{(k-1)} + \nu \mathbf{H}^{(k)} \mathbf{R}^{(k-1)} = \mathbf{B}^{(k)} \mathbf{Y},\label{eq:group Fk}\\
		\hat{\mathbf{u}}^{(k)} &= \mathbf{Y} - \hat{\mathbf{F}}^{(k)}= (\mathbf{I}_T - \mathbf{B}^{(k)}) \mathbf{Y} \label{eq:group uk}.
	\end{align}
	\item[] \textit{Step 3}. Repeat Steps 1 and 2 many times until step $k_{\text{stop}}$.
\end{enumerate}
We call this algorithm LS-Boost1. The matrix $\mathbf{B}^{(k)}$ in \cref{eq:group Bk} is the common hat matrix, and the sum of its diagonal elements gives the degree of freedom up to boosting step $k$.

\subsection{A sparser LS-Boost Algorithm} \label{sec:second alg}

An approach with a sparser solution is to, instead of using all lags of variable $\mathbf{X}_{(j)}$ in \cref{eq:boost reg} at each boosting step $k$, select only a single column of $\mathbf{X}_{(j)}$ to fit $\hat{\mathbf{R}}^{k-1}$. There is no guarantee that a recent lag will enter the model earlier than a distant lag. For example, the column $\text{CPI}_{-2}$ can enter the model earlier than $\text{CPI}_{-1}$ does. Let $\mathbf{X}_{(j)s}$ be the $s$th lag of the $j$th variable (the $s$th column in $\mathbf{X}_{(j)}$) and $\phi_{(j)s}'$ be the $s$th row of $\phi_{(j)}'$, the regression equation for the ``row-boosting" algorithm can be written as 
\begin{equation} \label{eq:row boost reg}
	\underset{T \times d}{\vphantom{\phi_{(j)}'}\hat{\mathbf{R}}^{(k-1)}} = \underset{T \times 1}{\vphantom{\phi_{(j)}'}\mathbf{X}_{(j)s}} \underset{1 \times d}{\phi_{(j)s}'} + \text{error term},
\end{equation}
where $\phi_{(j)s}'$ is a $1 \times d$ row vector, which is the $s$th column of $\phi_{(s)}$. Define
\begin{equation} \label{eq:A mat js}
	\underset{1 \times T}{\vphantom{\mathbf{X}_{(j)s}'}\mathbf{A}_{js}} = (\mathbf{X}_{(j)s}'\mathbf{X}_{(j)s})^{-1}\mathbf{X}_{(j)s}'.
\end{equation}
The update equation for the coefficient estimate at step $k$ is given by
\begin{equation} \label{eq:phi update alg2}
\hat{\phi}_{(j)s}^{(k)\prime} = \hat{\phi}_{(j)s}^{(k-1)\prime} + \nu \mathbf{A}_{js} \hat{\mathbf{R}}^{(k-1)}.
\end{equation}
Initialize $\hat{\phi}_{(j)s}^{(0)} = \mathbf{0}_{d \times 1} \forall j=1,\cdots,d$ and $s = 1,\cdots,p$.  For each boosting step $k \ge 1$, the algorithm becomes
\begin{enumerate}[]
	\item[] \textit{Step 1}. Select the variable $\mathbf{X}_{(j_k)s_k}$ with 
	\begin{equation} \label{eq:group boost obj js}
	j_k,s_k \in \operatorname*{argmin}_{\substack{1 \leq j \leq d \\ 1 \leq s \leq p}} \frac{1}{Td} \lVert \hat{\mathbf{R}}^{(k-1)} - \mathbf{X}_{(j)s} \beta_{(j)s}^{(k)\prime} \rVert_{2}^{2} \text{ with } \hat{\beta}_{(j)s}^{(k)\prime} =\mathbf{A}_{js} \hat{\mathbf{R}}^{(k-1)}.
	\end{equation}
	\item[] \textit{Step 2}. Update $\hat{\phi}_{(j)s}^{(k)}$ and $\mathbf{R}^{(k)}$.
	\begin{align}
	\hat{\phi}_{(j)s}^{(k)\prime} &= \begin{cases}
	\hat{\phi}_{(j_k)s_k}^{(k-1)\prime}  + \nu \hat{\beta}_{(j_k)s_k}^{(k)\prime} & \text{if } j = j_k \text{ and } s = s_k, \\
	\hat{\phi}_{(j)s}^{(k-1)\prime} & \text{otherwise},
	\end{cases} \label{eq:group phi update js} \\
	\hat{\mathbf{R}}^{(k)} &= \hat{\mathbf{R}}^{(k-1)} - \nu \mathbf{X}_{(j_k)s_k} \hat{\beta}_{(j_k)s_k}^{(k)\prime}. \label{eq:group R update js}
	\end{align}
	Update several other quantities.
	\begin{align}
	\mathbf{A}_{js}^{(k)} &=\begin{cases}
	\mathbf{A}_{j_ks_k}& \text{if } j = j_k \text{ and } s = s_k\\
	0& \text{otherwise},
	\end{cases} \label{eq:group Ajks}\\
	\mathbf{H}^{(k)} &= \mathbf{X}_{(j_k)s_k}(\mathbf{X}_{(j_k)s_k}'\mathbf{X}_{(j_k)s_k})^{-1}\mathbf{X}_{(j_k)s_k}',\label{eq:group Hks}\\
	\mathbf{B}^{(k)} &= \mathbf{I}_T - (\mathbf{I}_T-\nu \mathbf{H}^{(k)})\cdots
	(\mathbf{I}_T-\nu \mathbf{H}^{(1)}), \label{eq:group Bks}\\
	\hat{\mathbf{F}}^{(k)} &= \hat{\mathbf{F}}^{(1)} + \cdots + \hat{\mathbf{F}}^{(k-1)} + \nu \mathbf{H}^{(k)} \mathbf{R}^{(k-1)} = \mathbf{B}^{(k)} \mathbf{Y},\label{eq:group Fks}\\
	\hat{\mathbf{u}}^{(k)} &= \mathbf{Y} - \hat{\mathbf{F}}^{(k)}= (\mathbf{I}_T - \mathbf{B}^{(k)}) \mathbf{Y} \label{eq:group uks}.
	\end{align}
	\item[] \textit{Step 3}. Repeat Steps 1 and 2 many times until step $k_{\text{stop}}$.
\end{enumerate}
We call this algorithm LS-Boost2. Note that, although \cref{eq:group Bks,eq:group Fks,eq:group uks} may look the same as \cref{eq:group Bk,eq:group Fk,eq:group uk}, their values are different since \cref{eq:group Hks} is based on $\mathbf{X}_{(j_k)s_k}$.


A few remarks are in order.
\begin{remark}
	Since the dimension of $\phi_{(j)}$ in \cref{eq:boost reg} is $d \times p$, the total number of nonzero parameters will increase by $dp$ if a new variable is added. When the dimension of $d$ is large, $dp$ can be large. However, since we have $d$ equations, on average, each equation will add $p$ more parameters, and $p$ is usually a small number such as $2$ or $3$. Thus the model size for each of the $d$ equation will not increase significantly when using LS-Boost1.
\end{remark}

\begin{remark}
	LS-Boost2 is more flexible than LS-Boost1 and will generate a sparser model in general.  LS-Boost1 preserves the time series structure throughout variable selection and estimation, and it can work better for certain data.
\end{remark}

\begin{remark}
	The most flexible boosting algorithm is to update the parameters one at a time. While the previous two algorithms will add either $dp$ or $d$ parameters for a selected variable, we can choose to update a single parameter at each boosting iteration step, which will require us to take many sweeps in the data. This approach, called the componentwise procedure in \cite{lutz2006boosting}, completely ignores the time series structure of the data. Its implementation will be similar to that of LS-Boost in a cross-section regression. We focus on the use of LS-Boost1 and LS-Boost2 in this paper.
\end{remark}

\section{The asymptotic standard error and the \textit{p}-value}
We discuss the asymptotic result for the two boosting methods introduced in \Cref{sec:LS-Boost}.
\subsection{Asymptotic results for LS-Boost1 when $k$ is fixed} \label{sec:asym_lsboost1}
We first discuss the computation of the standard error for LS-Boost1 at each boosting step. Using the definition of $\hat{\beta}_{(j)}^{(k)\prime}$ in \cref{eq:group boost obj}, rewrite \cref{eq:phi update} in a recursive form to have
\begin{align} \label{eq:phi jk1}
	\hat{\phi}_{(j)}^{(k)\prime} &= \hat{\phi}_{(j)}^{(k-1)\prime} + \nu \hat{\beta}_{(j)}^{(k)\prime} = \hat{\phi}_{(j)}^{(k-2)\prime} + \nu \hat{\beta}_{(j)}^{(k-1)\prime} + \nu \hat{\beta}_{(j)}^{(k)\prime} \nonumber = \cdots \nonumber \\
	&= \hat{\phi}_{(j)}^{(0)\prime} + (\nu \hat{\beta}_{(j)}^{(1)\prime} + \cdots + \nu \hat{\beta}_{(j)}^{(k)\prime}) \nonumber \\
	&= \nu \hat{\beta}_{(j)}^{(1)\prime} + \cdots + \nu \hat{\beta}_{(j)}^{(k)\prime} \qquad\text{ since } \hat{\phi}_{(j)}^{(0)\prime}=0.
\end{align}
Following the definitions in \cref{eq:group Ajk,eq:group Hk,eq:group Bk,eq:group Fk,eq:group uk}, \cref{eq:phi jk1} can be further written as
\begin{align} \label{eq:phi jk2}
	\hat{\phi}_{(j)}^{(k)\prime} &=\nu \mathbf{A}_j^{(1)} \mathbf{Y} + \nu \mathbf{A}_j^{(2)}(\mathbf{I}_T - \nu \mathbf{H}^{(1)}) \mathbf{Y}+\nu \mathbf{A}_j^{(3)}(\mathbf{I}_T - \nu \mathbf{H}^{(2)})(\mathbf{I}_T - \nu \mathbf{H}^{(1)}) \mathbf{Y} \nonumber \\
	&\quad + \cdots + \nu \mathbf{A}_j^{(k)}(\mathbf{I}_T - \nu \mathbf{H}^{(k-1)})\cdots(\mathbf{I}_T - \nu \mathbf{H}^{(1)}) \mathbf{Y} \nonumber \\
	&=\tilde{\mathbf{A}}_{j}^{(k)} \mathbf{Y}, 
\end{align}
where we define
\begin{equation} \label{eq: A tilde jk}
	\tilde{\mathbf{A}}_{j}^{(k)} = \left[\nu \mathbf{A}_j^{(1)} + \nu \mathbf{A}_j^{(2)}(\mathbf{I}_T - \nu \mathbf{H}^{(1)})+ \cdots + \nu \mathbf{A}_j^{(k)}(\mathbf{I}_T - \nu \mathbf{H}^{(k-1)})\cdots(\mathbf{I}_T - \nu \mathbf{H}^{(1)})\right].
\end{equation}
\Cref{eq:phi jk2} is a matrix form and it holds for all $j=1,\cdots,d$ and any iteration step $k$, regardless of whether the $j$th variable (and its lags) is selected at boosting step $k$. Recall the definition of $\mathbf{A}_{j}^{(k)}$ in \cref{eq:group Ajk}. If the $j$th variable is never selected up to the iteration step $k$, we have $\mathbf{A}_{j}^{(1)} = \cdots = \mathbf{A}_{j}^{(k)}=0$ and $\hat{\phi}_{(j)}^{(k)\prime}=0$; if the $j$th variable is selected in step $k-1$ but not $k$, we have $\mathbf{A}_{j}^{(k-1)} \neq 0$ and $\mathbf{A}_{j}^{(k)} = 0$.

For LS-Boost1, we can use $\mathbf{X}_g$ in \cref{eq:Xg} to rewrite
\begin{align} \label{eq:phi jk3}
	\hat{\phi}_{(j)}^{(k)\prime} &= \tilde{\mathbf{A}}_{j}^{(k)} \left[\mathbf{X}_{(1)} \phi_{(1)}'+\cdots+\mathbf{X}_{(d)} \phi_{(d)}' + \mathbf{u}\right] \nonumber \\
	&=\tilde{\mathbf{A}}_{j}^{(k)} \mathbf{X}_{(1)} \phi_{(1)}'+ \cdots + \tilde{\mathbf{A}}_{j}^{(k)} \mathbf{X}_{(d)} \phi_{(d)}' + \tilde{\mathbf{A}}_{j}^{(k)} \mathbf{u} 
\end{align}
It can be shown that terms such as $\tilde{\mathbf{A}}_{j}^{(k)} \mathbf{X}_{(1)} \phi_{(1)}'$ will converge to a constant as $T \rightarrow \infty$, and we only need to consider the term $\tilde{\mathbf{A}}_{j}^{(k)} \mathbf{u}$ when computing the variance of the elements in $\hat{\phi}_{(j)}^{(k)\prime}$. In fact, we will derive the asymptotic distribution of $\hat{\phi}_{(j)}^{(k)\prime}$, which shows the standard error is asymptotically valid.

We make the following assumptions. 
\begin{assumption} \label{assumption:stationary}
	 $y_t$ is a VAR($p$) process defined in \cref{eq:varp}  and is strictly stationary. 
\end{assumption}
\begin{assumption} \label{assumption:error}
	The errors $\left\{u_t\right\}$ are i.i.d. with $E(u_t) = 0$ and $\text{Var}(u_t) = \Omega$. The errors have bounded fourth moment, $E(u_{i_1t}u_{i_2t}u_{i_3t}u_{i_4t}) < \infty$ for all $i_1,i_2,i_3,i_4=1,\cdots,d$. 
\end{assumption}
\Cref{assumption:stationary,assumption:error} are standard in VAR modeling. In \Cref{assumption:stationary}, we implicitly assume the lag order $p$ is fixed. In practice, $p$ is usually a small number and can be easily tuned or selected by an information criterion. The bounded fourth moment condition in \Cref{assumption:error} implies that the process $\{y_t\}$ itself has a bounded fourth moment, a required condition to apply the central limit theory for a vector martingale difference sequence. If one is only interested in applying the LS-Boost algorithm, the i.i.d. assumption can be relaxed; we only need uncorrelated errors in a VAR($p$) model and the errors may be heteroskedastic or conditional heteroskedastic. The i.i.d. assumption are imposed for the convenience of deriving the standard errors.

Let $\text{vec}$ be the matrix vectorization operator and $\otimes$ be the Kronecker product.
\begin{theorem} \label{thm:group boosting asymp}
	Under \Cref{assumption:stationary,assumption:error}, as $T \rightarrow \infty$, for the selected variable $j$ at boosting step $k \geq 1$,
	\begin{equation} \label{eq:lsbg_asymp}
		\sqrt{T}(\text{vec}(\hat{\phi}_{(j)}^{(k)\prime}) - c_{(j)}^{(k)}) \rightarrow N\left(\mathbf{0}, \Omega^{(k)} \otimes \mathbf{Q}_{(j)}^{(k)}\right),
	\end{equation}
	where $c_{(j)}^{(k)}$, $\Omega^{(k)}$, and $\mathbf{Q}_{(j)}^{(k)}$ are constants that are unique to the $j$th variable at boosting step $k$.
\end{theorem}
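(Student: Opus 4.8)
The plan is to read the limiting distribution off the closed form $\hat{\phi}_{(j)}^{(k)\prime}=\tilde{\mathbf{A}}_{j}^{(k)}\mathbf{Y}$ in \cref{eq:phi jk2}, using that — once the selected indices are held fixed — $\tilde{\mathbf{A}}_{j}^{(k)}$ is a rational function of finitely many sample second-moment matrices of the columns of $\mathbf{X}$. The first step is to show the selection path stabilizes. With $k$, $d$, and $p$ fixed, each step-$\ell$ criterion in \cref{eq:group boost obj} is a continuous function of a fixed, finite collection of sample cross-moments of $\{y_t\}$; since the process is strictly stationary and ergodic (it is a stable VAR driven by i.i.d.\ innovations under \Cref{assumption:stationary,assumption:error}), the ergodic theorem gives convergence in probability of every criterion to a population analogue, uniformly over the finitely many candidates, and, under a mild genericity condition (a unique population minimizer at each step), an induction on $\ell$ yields that $(j_1,\dots,j_k)=(j_1^\ast,\dots,j_k^\ast)$ with probability tending to one. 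On this event one may write $\tilde{\mathbf{A}}_{j}^{(k)}=\tfrac1T\sum_{i=1}^{d}\mathbf{W}_{(j),i}^{(k)}\mathbf{X}_{(i)}'$, where each $p\times p$ block $\mathbf{W}_{(j),i}^{(k)}$ is a continuous function of $\{\tfrac1T\mathbf{X}_{(a)}'\mathbf{X}_{(b)}\}$ and hence $\mathbf{W}_{(j),i}^{(k)}\xrightarrow{p}W_{(j),i}^{(k)\ast}$, a constant.

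The second step splits $\hat{\phi}_{(j)}^{(k)\prime}=\tilde{\mathbf{A}}_{j}^{(k)}\mathbf{X}_g\bm{\phi}_g+\tilde{\mathbf{A}}_{j}^{(k)}\mathbf{u}$. Each term $\tilde{\mathbf{A}}_{j}^{(k)}\mathbf{X}_{(i)}\phi_{(i)}'$ is a continuous function of sample second moments and converges in probability to a constant; collecting these limits defines $C_{(j)}^{(k)}$ with $c_{(j)}^{(k)}=\text{vec}(C_{(j)}^{(k)\prime})$, while $\Omega^{(k)}$ and $\mathbf{Q}_{(j)}^{(k)}$ are identified as the probability limits appearing in the covariance below. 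For the stochastic part, $\sqrt T\,\tilde{\mathbf{A}}_{j}^{(k)}\mathbf{u}=\sum_{i}\mathbf{W}_{(j),i}^{(k)}\cdot\tfrac1{\sqrt T}\mathbf{X}_{(i)}'\mathbf{u}$, and since the $t$-th row of each $\mathbf{X}_{(i)}$ lies in $\mathcal{F}_{t-1}$ while $u_t$ is a martingale difference with $\text{Var}(u_t)=\Omega$, the stacked vector $\tfrac1{\sqrt T}\sum_{t}(u_t\otimes\tilde x_t)$, with $\tilde x_t$ the full $pd\times1$ regressor vector, is a normalized square-integrable vector martingale difference sequence. \Cref{assumption:error} supplies the Lindeberg/moment condition and strict stationarity gives $\tfrac1T\sum_{t}E[(u_tu_t')\otimes(\tilde x_t\tilde x_t')\mid\mathcal{F}_{t-1}]\xrightarrow{p}\Omega\otimes E[\tilde x_t\tilde x_t']$, so a vector martingale CLT yields convergence in distribution to a Gaussian vector whose covariance has the Kronecker structure inherited from $\text{Var}(\text{vec}(\mathbf{u}))=\Omega\otimes\mathbf{I}_T$. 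Combining this through Slutsky's theorem and the continuous mapping theorem with $\mathbf{W}_{(j),i}^{(k)}\xrightarrow{p}W_{(j),i}^{(k)\ast}$, and using $\text{vec}(AB)=(\mathbf{I}\otimes A)\text{vec}(B)$, delivers $\sqrt T\big(\text{vec}(\hat{\phi}_{(j)}^{(k)\prime})-c_{(j)}^{(k)}\big)\xrightarrow{d} N(\mathbf{0},\Omega^{(k)}\otimes\mathbf{Q}_{(j)}^{(k)})$, the right-hand covariance being the image of $\Omega\otimes\mathbf{I}_T$ under the (asymptotically constant) linear map $\tilde{\mathbf{A}}_{j}^{(k)}$.

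The step I expect to be the main obstacle is making rigorous the assertion, stated informally after \cref{eq:phi jk3}, that only $\tilde{\mathbf{A}}_{j}^{(k)}\mathbf{u}$ contributes to the limiting variance — that is, that the residual randomness in $\hat{\phi}_{(j)}^{(k)\prime}$, after subtracting $c_{(j)}^{(k)}$, is negligible at the $\sqrt T$ scale. This requires controlling jointly the two sources of randomness in $\tilde{\mathbf{A}}_{j}^{(k)}$: the selection events, which the stabilization step reduces to an event of probability tending to one on which the algebraic form of $\tilde{\mathbf{A}}_{j}^{(k)}$ is fixed; and the sampling error in the inverse moment matrices, which enters the stochastic term only as an $O_p(T^{-1/2})$ perturbation of the weights $\mathbf{W}_{(j),i}^{(k)}$ multiplying an $O_p(1)$ martingale score and is therefore $o_p(1)$. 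One must in addition argue that the predictable component $\tilde{\mathbf{A}}_{j}^{(k)}\mathbf{X}_g\bm{\phi}_g$ is cleanly absorbed into $c_{(j)}^{(k)}$; this is precisely where the predetermined, martingale-difference structure of the VAR innovations is used, and is the reason \Cref{assumption:error} (rather than merely uncorrelated errors) is imposed for deriving the standard errors. The remaining ingredients — the vector martingale CLT via a Cram\'er--Wold reduction, the uniform ergodic-theorem convergence of the step criteria, and the induction establishing index stabilization — are standard under strict stationarity and bounded fourth moments.
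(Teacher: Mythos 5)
Your proposal follows essentially the same route as the paper's proof: decompose $\hat{\phi}_{(j)}^{(k)\prime}=\tilde{\mathbf{A}}_{j}^{(k)}\mathbf{X}_{(1)}\phi_{(1)}'+\cdots+\tilde{\mathbf{A}}_{j}^{(k)}\mathbf{X}_{(d)}\phi_{(d)}'+\tilde{\mathbf{A}}_{j}^{(k)}\mathbf{u}$, send the deterministic piece to the constant $c_{(j)}^{(k)}$ by ergodicity of the sample second moments, replace the weights in the stochastic piece by their probability limits, and apply a vector martingale-difference CLT (the paper cites Proposition 7.9 of Hamilton) to a sum of terms of the form $a_{(j),t}^{(k)}u_{ti}$, which yields exactly the Kronecker covariance $\Omega^{(k)}\otimes\mathbf{Q}_{(j)}^{(k)}$. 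The one place you go beyond the paper is the selection-path stabilization step: the paper simply conditions on the realized indices $(j_1,\dots,j_k)$ and treats the weights $c_{jk,j_q}$ as nonrandom, whereas you prove, under an added genericity condition (unique population minimizer of each step criterion), that the selected path is deterministic with probability tending to one; this is genuinely needed to justify treating $\tilde{\mathbf{A}}_{j}^{(k)}$ as an asymptotically constant linear map rather than a path-dependent random one, and it buys you an unconditional statement at the cost of an extra assumption. The obstacle you flag at the end is real and is not resolved by the paper either: the deviation of $\tilde{\mathbf{A}}_{j}^{(k)}\mathbf{X}_g\bm{\phi}_g$ from its limit $c_{(j)}^{(k)}$ is driven by $O_p(T^{-1/2})$ fluctuations of sample second moments multiplying an $O(1)$ quantity, so after scaling by $\sqrt{T}$ it is $O_p(1)$ rather than $o_p(1)$; your observation that an $O_p(T^{-1/2})$ perturbation of the weights is harmless applies to the martingale score term (which is itself $O_p(T^{-1/2})$) but not to this centering term, so a fully rigorous proof of either version would need to absorb that fluctuation into the asymptotic variance or else center at the random quantity $\tilde{\mathbf{A}}_{j}^{(k)}\mathbf{X}_g\bm{\phi}_g$ rather than at its probability limit.
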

The explicit expressions of both $c_{(j)}^{(k)}$ and $\mathbf{Q}_{(j)}^{(k)}$ are given in the proof in the online supplement. All quantities such as $c_{(j)}^{(k)}$, $\Omega^{(k)}$, and $\mathbf{Q}_{(j)}^{(k)}$ are constants for a given $k$ and they do not depend on the sample data (see the proof in \cref{eq:phi term1,eq:phi term1 proof,eq:phi jk5,eq:phi jk5 cjk term,eq:Omega and Q}). \Cref{thm:group boosting asymp} gives a familiar form of the asymptotic result for a VAR that can be found in standard textbooks such as \cite{hamilton1994time,lutkepohl2005time}. What is new here are the expressions for $c_{(j)}^{(k)}$, $\Omega^{(k)}$, and $\mathbf{Q}_{(j)}^{(k)}$. These constants adapt to the iterative nature of the boosting algorithm, and $\mathbf{Q}_{(j)}^{(k)}$ keeps accumulating the statistical uncertainty as the boosting procedure progresses. The diagonal elements of $\hat{\Omega}^{(k)} \otimes \hat{\mathbf{Q}}_{(j)}^{(k)}$ will allow us to construct the \textit{t} statistic and compute the asymptotically valid \textit{p}-value.  In writing \cref{eq:lsbg_asymp}, we assume the \textit{j}th variable is selected as least once up to step \textit{k}; otherwise, $ \phi_{(j)}^{(k)} $ is not estimated and its value is simply $ \mathbf{0} $.

\begin{remark}
	It is clear from \Cref{thm:group boosting asymp} that, for a given $k$, the boosting estimate $\hat{\phi}_{(j)}^{(k)}$ is biased and inconsistent since $c_{(j)}^{(k)} \neq 0$. In high-dimensional modeling, working with a biased estimate in practice is quite common for many adaptive methods such as the lasso and LS-boosting. The new approach this paper adopts is to use the standard error in \cref{eq:lsbg_asymp} to construct the t-statistic for a biased estimate. Since obtaining a full and unbiased LS solution is practically impossible, working with a biased estimate for hypothesis testing provides an alternative. Consequently, the hypothesis is not for the population parameter $H_0{:}\;\text{vec}(\phi_{(j)}') = \boldmath{0}$. Instead, we test the statistical significance of a LS boosting estimate up to every step $k$. See Section 5 for further discussions.
\end{remark}

\begin{remark}
	In estimation, $\hat{\Omega}^{(k)}$ can be obtained from the cross product of the residuals ($\hat{\mathbf{u}}^{(k)}$). When the dimension $d$ is large, this estimate becomes singular and one has to introduce sparsity in $\Omega^{(k)}$ to get a feasible estimate. There is a large body of research on estimating high-dimensional variance covariance matrix. Notice that we only need  to estimate the diagonal elements of $\hat{\Omega}^{(k)} \otimes \hat{\mathbf{Q}}_{(j)}^{(k)}$ for standard errors, but not the entire matrix. This greatly reduces the number of parameters and makes our procedure feasible for high-dimensional VARs.
\end{remark}

To gain more insights into \Cref{thm:group boosting asymp}, let us consider the case at boosting step $k=1$ when the $j_1$th variable is selected ($j=j_1$). When $k = 1$, we have
\begin{align*}
	\hat{\phi}_{(j_1)}^{(1)\prime} &= \nu \mathbf{A}_{j_1}^{(1)} \mathbf{Y}\\
	&= \nu (\mathbf{X}_{(j_1)}'\mathbf{X}_{(j_1)})^{-1}\mathbf{X}_{(j_1)}' \mathbf{Y}\\
	&=\nu (\mathbf{X}_{(j_1)}'\mathbf{X}_{(j_1)})^{-1}\mathbf{X}_{(j_1)}' (\mathbf{X}_{(1)} \phi_{(1)}' + \cdots + \mathbf{X}_{(j_1)} \phi_{(j_1)}' + \cdots + \mathbf{X}_{(d)} \phi_{(d)}'+ \mathbf{u})\\
	&= \nu \left(\frac{\mathbf{X}_{(j_1)}'\mathbf{X}_{(j_1)}}{T}\right)^{-1} \left(\frac{\mathbf{X}_{(j_1)}' \mathbf{X}_{(1)}}{T}\right) \phi_{(1)}' + \cdots + \nu \phi_{(j_1)}' + \cdots \\
	&\quad + \nu \left(\frac{\mathbf{X}_{(j_1)}'\mathbf{X}_{(j_1)}}{T}\right)^{-1} \left(\frac{\mathbf{X}_{(j_1)}' \mathbf{X}_{(d)}}{T}\right) \phi_{(d)}' + \nu (\mathbf{X}_{(j_1)}'\mathbf{X}_{(j_1)})^{-1}\mathbf{X}_{(j_1)}' \mathbf{u}.
\end{align*}

Under \Cref{assumption:stationary,assumption:error}, all cross product terms can be shown to converge to a constant as $T \rightarrow \infty$, and we have
\begin{equation} \label{eq:phij11}
	\hat{\phi}_{(j_1)}^{(1)\prime} - c_{(j_1)}^{(1)} = \nu \left(\frac{\mathbf{X}_{(j_1)}'\mathbf{X}_{(j_1)}}{T}\right)^{-1} \frac{1}{T} \sum_{t=1}^{T} \mathbf{X}_{(j_1),t}' u_t,
\end{equation}
where $\mathbf{X}_{(j_1),t}'$ is the $t$th column of $\mathbf{X}_{(j_1)}'$, $u_t$ is the $t$th row of $\mathbf{u}$, and 
\begin{align} 
	c_{(j_1)}^{(1)} &= \lim\limits_{T \rightarrow \infty} \nu \left(\frac{\mathbf{X}_{(j_1)}'\mathbf{X}_{(j_1)}}{T}\right)^{-1} \left(\frac{\mathbf{X}_{(j_1)}' \mathbf{X}_{(1)}}{T}\right) \phi_{(1)}' + \cdots + \nu \phi_{(j_1)}' + \cdots \nonumber \\
	&\quad+ \nu \left(\frac{\mathbf{X}_{(j_1)}'\mathbf{X}_{(j_1)}}{T}\right)^{-1} \left(\frac{\mathbf{X}_{(j_1)}' \mathbf{X}_{(d)}}{T}\right) \phi_{(d)}'. \label{eq:c11}
\end{align}
Let the limit of $ \nu \left(\frac{\mathbf{X}_{(j_1)}'\mathbf{X}_{(j_1)}}{T}\right)^{-1}$ be $c$. Vectorizing \cref{eq:phij11} gives
\begin{equation} \label{eq:phij11 vec}
	\begin{bmatrix}
	\hat{\phi}_{(j_1),1}^{(1)\prime} - c_{(j_1),1}^{(1)}\\
	\vdots\\
	\hat{\phi}_{(j_1),i}^{(1)\prime} - c_{(j_1),i}^{(1)}\\
	\vdots\\
	\hat{\phi}_{(j_1),d}^{(1)\prime} - c_{(j_1),d}^{(1)}
	\end{bmatrix}
	=
	\begin{bmatrix}
	\frac{1}{T} \sum_{t=1}^{T} c\mathbf{X}_{(j_1),t}' u_{t_1}\\
	\vdots\\
	\frac{1}{T} \sum_{t=1}^{T} c\mathbf{X}_{(j_1),t}' u_{t_i}\\
	\vdots\\
	\frac{1}{T} \sum_{t=1}^{T} c\mathbf{X}_{(j_1),t}' u_{t_d}
	\end{bmatrix}_{dp \times 1},
\end{equation}
where $\hat{\phi}_{(j_1),i}^{(1)\prime}$ and $c_{(j_1),i}^{(1)}$ are the $i$th column of $\hat{\phi}_{(j_1)}^{(1)\prime}$ and $c_{(j_1)}^{(1)}$, respectively; $u_{t_i}$ is the $i$th element of the $1 \times d$ row vector $u_t$. \Cref{eq:phij11 vec} gives a convenient form to derive the asymptotic distribution. To compute the (asymptotic) standard error, consider the $i$th $p \times 1$ vector $\hat{\phi}_{(j_1),i}^{(1)\prime} - c_{(j_1),i}^{(1)}$ on the l.h.s. of \cref{eq:phij11 vec}. It can be shown that $E(u_{t_i}|\mathbf{X}_{(j_1),t}') = 0$, because $u_{t_i}$ is the error at time $t$ while $\mathbf{X}_{(j_1),t}'$ includes $p$ lags before time $t$. 
\begin{align*}
	\text{Var}(\hat{\phi}_{(j_1),i}^{(1)\prime} - c_{(j_1),i}^{(1)}) &= \text{Var}\left(\frac{1}{T} \sum_{t=1}^{T} c\mathbf{X}_{(j_1),t}' u_{t_i}\right)\\
	&= \frac{1}{T} \text{Var}(c\mathbf{X}_{(j_1),t}' u_{t_i}) \qquad \qquad\text{ because $u_t$ is i.i.d.}\\
	&= \frac{1}{T} E(c\mathbf{X}_{(j_1),t}' u_{t_i}^2 \mathbf{X}_{(j_1),t}c') \qquad \text{because $E(u_{t_i}|\mathbf{X}_{(j_1),t}') = 0$}\\
	&=\frac{1}{T} c E(\mathbf{X}_{(j_1),t}' \mathbf{X}_{(j_1),t})c' \sigma_i^2,
\end{align*}
where we replace $c$ with $\nu (\frac{\mathbf{X}_{(j_1)}'\mathbf{X}_{(j_1)}}{T})$, $E(\mathbf{X}_{(j_1),t}' \mathbf{X}_{(j_1),t})$ with $\mathbf{X}_{(j_1)}' \mathbf{X}_{(j_1)}/T$ in estimation. $\sigma_i^2$ is the $i$th diagonal element of $\Omega^{(1)}$ and can be estimated based on the residuals at step $ k $.

When $k > 1$, the computation for the standard error becomes more involved but essentially follows the same steps.

\begin{remark}
	From the definition of $c_{(j_1)}^{(1)}$ in \cref{eq:c11}, it is clear that the parameter estimator $\hat{\phi}_{(j_1)}^{(1)\prime}$ in \cref{eq:phij11} is biased, which is expected for an iterative procedure like the LS-Boost. The key result is, despite of the bias, \cref{eq:phij11} allows us to compute the asymptotic variance of $\hat{\phi}_{(j_1)}^{(1)\prime}$ at step $k=1$, and this asymptotic variance characterizes the statistical uncertainty in estimating $\hat{\phi}_{(j_1)}^{(1)\prime}$.
	
\end{remark}

Next, we use a special case to illustrate the behavior of $\hat{\phi}_{(j_1)}^{(1)\prime}$ when $k$ is large. Consider the expression for $\hat{\phi}_{(j)}^{k\prime}$ in \cref{eq:phi jk2}. Assume variable $j$ is always chosen throughout all $k$ steps so that $\mathbf{A}_j^{(1)}=\cdots=\mathbf{A}_j^{(k)} = \mathbf{A}_j$ and $\mathbf{H}^{(1)} = \cdots = \mathbf{H}^{(k)} = \mathbf{H}$, and \cref{eq:phi jk2} becomes

\begin{align} \label{eq:phi jk7}
\hat{\phi}_{(j)}^{(k)\prime} &=\nu \mathbf{A}_j \mathbf{Y} + \nu \mathbf{A}_j(\mathbf{I}_T - \nu \mathbf{H}) \mathbf{Y}+\nu \mathbf{A}_j(\mathbf{I}_T - \nu \mathbf{H})(\mathbf{I}_T - \nu \mathbf{H}) \mathbf{Y} \nonumber \\
&\quad + \cdots + \nu \mathbf{A}_j(\mathbf{I}_T - \nu \mathbf{H})\cdots(\mathbf{I}_T - \nu \mathbf{H}) \mathbf{Y}  \nonumber\\
&=\nu \mathbf{A}_j \left[ (\mathbf{I}_T - \nu \mathbf{H})^0 + (\mathbf{I}_T - \nu \mathbf{H})^1 + \cdots + (\mathbf{I}_T - \nu \mathbf{H})^{(k-1)} \right] \mathbf{Y} \nonumber\\
&\rightarrow \nu \mathbf{A}_j  \left[\mathbf{I}_T - (\mathbf{I}_T - \nu\mathbf{H})\right]^{-1} \mathbf{Y}   \quad\text{  as } k \rightarrow \infty \nonumber\\
&= \mathbf{A}_j \mathbf{Y} =  (\mathbf{X}_{(j)}'\mathbf{X}_{(j)}^{-1} \mathbf{X}_{(j)}'\mathbf{Y}.
\end{align}
Since $\mathbf{H}$ is idempotent with eigenvalues $0$ and $1$, Theorem 4.3.1 in \cite{hornjohnson1985matrixanalysis} to show the eigenvalues of $\mathbf{I}_T-\nu \mathbf{H}$ is between $0$ and $1$. Diagonalization gives $ (\mathbf{I}_T - \nu \mathbf{H})^k = \mathbf{U} \mathbf{D}_k \mathbf{U}'$, where $ \mathbf{U} $ is orthonormal and $ \mathbf{D}_k $ is diagonal with all eigenvalues of $ (\mathbf{I}_T - \nu \mathbf{H})^k\ $, which will allows us to derive a bias expression similar to that in Proposition 3 and Theorem 1 of \cite{buhlmannandyu2003JASAL2boosting} for the case of multivariate regression. Since $ \hat{\phi}_{(j)}^{(k)\prime}  $ converges to the LS estimator, it is unbiased as $k \rightarrow \infty$ and will have the same variance as the LS estimator.

In practice, we will have many variables and the boosting algorithm will not update the same variable at every step so that some (possibly many) of the $\mathbf{A}_j^{(k)}$ in \cref{eq:phi jk2} are zero, But it can be roughly seen that, as long as the boosting step $k$ is large enough, coefficient matrix for variable $j$ will get many updates. These incremental updates constitute a subsequence of the geometric matrix series in \cref{eq:phi jk2} and will also converge to the same limit. In \Cref{sec:computation_bounds}, we provide a different perspective on the convergence property of the LS-Boost estimator when $k$ is large.


\subsection{Asymptotic results for LS-Boost2 when $k$ is fixed} \label{sec:asym_lsboost2}
Next we discuss the asymptotic results for the estimator described in \Cref{sec:second alg}, and it will be similar to \Cref{thm:group boosting asymp}.

By the definition of $\hat{\beta}_{(j)s}^{(k)\prime}$ in \cref{eq:group boost obj js}, we have
\begin{align} \label{eq:phi jk1 js}
\hat{\phi}_{(j)s}^{(k)\prime} &= \hat{\phi}_{(j)s}^{(k-1)\prime} + \nu \hat{\beta}_{(j)s}^{(k)\prime} \nonumber \\
&= \nu \hat{\beta}_{(j)s}^{(1)\prime} + \cdots + \nu \hat{\beta}_{(j)s}^{(k)\prime} \nonumber \\
&= \tilde{\mathbf{A}}_{js}^{(k)} \mathbf{Y}, \text{ similar to }\cref{eq:phi jk2}
\end{align}
where
\begin{equation} \label{eq: A tilde jks}
\tilde{\mathbf{A}}_{js}^{(k)} = \left[\nu \mathbf{A}_{js}^{(1)} + \nu \mathbf{A}_{js}^{(2)}(\mathbf{I}_T - \nu \mathbf{H}^{(1)})+ \cdots + \nu \mathbf{A}_{js}^{(k)}(\mathbf{I}_T - \nu \mathbf{H}^{(k-1)})\cdots(\mathbf{I}_T - \nu \mathbf{H}^{(1)})\right].
\end{equation}

Hence, similar to \cref{eq:phi jk3}, we have
\begin{align} \label{eq:phi jk3 js}
\hat{\phi}_{(j)s}^{(k)\prime} 
&=\tilde{\mathbf{A}}_{js}^{(k)} \mathbf{X}_{(1)s} \phi_{(1)s}'+ \cdots + \tilde{\mathbf{A}}_{js}^{(k)} \mathbf{X}_{(d)s} \phi_{(d)s}' + \tilde{\mathbf{A}}_{js}^{(k)} \mathbf{u}.
\end{align}

The derivation of the asymptotic distribution result largely follows the proof of \Cref{thm:group boosting asymp}, and we give the result as a corollary in the following.
\begin{corollary} \label{cor:boost_asymp}
	Under \Cref{assumption:stationary,assumption:error}, as $T \rightarrow \infty$, for every selected variable $j$ and its lag $s$ at every boosting step $k \geq 1$,
	\begin{equation}
	\sqrt{T}(\hat{\phi}_{(j)s}^{(k)} - c_{(j)s}^{(k)}) \rightarrow N\left(\mathbf{0}, \Omega^{(k)} \cdot  \mathbf{Q}_{(j)s}^{(k)}\right),
	\end{equation}
	where $c_{(j)s}^{(k)}$, $\Omega^{(k)}$, and $\mathbf{Q}_{(j)s}^{(k)}$ are constant that are unique to the $j$th variable and its  $s$th lag at boosting step $k$.
\end{corollary}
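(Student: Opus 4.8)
The plan is to retrace the proof of \Cref{thm:group boosting asymp}, the only structural change being that a single lag-column $\mathbf{X}_{(j)s}$ (width $1$) plays the role of the block $\mathbf{X}_{(j)}$ (width $p$); consequently $\hat{\phi}_{(j)s}^{(k)}$ is a $d$-vector rather than a vectorized $d\times p$ matrix, the $p\times p$ matrix $\mathbf{Q}_{(j)}^{(k)}$ collapses to a scalar $\mathbf{Q}_{(j)s}^{(k)}$, and this is exactly why the limiting covariance is the ordinary product $\Omega^{(k)}\mathbf{Q}_{(j)s}^{(k)}$ instead of a Kronecker product. I would begin from the closed form \cref{eq:phi jk1 js}, $\hat{\phi}_{(j)s}^{(k)\prime}=\tilde{\mathbf{A}}_{js}^{(k)}\mathbf{Y}$ with $\tilde{\mathbf{A}}_{js}^{(k)}$ as in \cref{eq: A tilde jks}, substitute the VAR representation $\mathbf{Y}=\mathbf{X}_g\bm{\phi}_g+\mathbf{u}$ to reach \cref{eq:phi jk3 js}, and split the right-hand side into a \emph{signal} part $\tilde{\mathbf{A}}_{js}^{(k)}\mathbf{X}_g\bm{\phi}_g$ and a \emph{noise} part $\tilde{\mathbf{A}}_{js}^{(k)}\mathbf{u}$.

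For the signal part, each entry of $\tilde{\mathbf{A}}_{js}^{(k)}\mathbf{X}_g$ is a finite product of the scalar $(\mathbf{X}_{(j)s}'\mathbf{X}_{(j)s}/T)^{-1}$ and sample cross-moments $\mathbf{X}_{(a)b}'\mathbf{X}_{(c)d}/T$ arising from the factors $(\mathbf{I}_T-\nu\mathbf{H}^{(\ell)})$, with no other sample-dependent pieces; under \Cref{assumption:stationary} the ergodic theorem sends each of these to a deterministic limit, so the signal part converges to a constant, to be named $c_{(j)s}^{(k)}$. (As in \Cref{sec:asym_lsboost1}, the contribution of the selected lag itself is exactly $\nu\phi_{(j)s}'$ and carries no sample-moment factor, so $c_{(j)s}^{(k)}\neq\mathbf{0}$ and the estimator is biased.) For the noise part I would write $\sqrt{T}\,\tilde{\mathbf{A}}_{js}^{(k)}\mathbf{u}$ as a single sum $T^{-1/2}\sum_{t=1}^{T}v_t u_t$, where $v_t$ is a scalar built only from lagged innovations --- because every $\mathbf{A}_{js}^{(\ell)}$ and $\mathbf{H}^{(\ell)}$ is made from columns of $\mathbf{X}$, i.e.\ from $y$'s dated before $t$ --- and $v_t$ converges to a deterministic weight. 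Then $\{v_t u_t,\mathcal{F}_t\}$ is a vector martingale-difference sequence, and its central limit theorem applies because stationarity makes $T^{-1}\sum_t v_t^2\,\Omega$ converge while \Cref{assumption:error} (bounded fourth moments, hence $\{y_t\}$ has bounded fourth moments) gives the Lindeberg condition; reading off the limit yields the covariance $(\lim_T T^{-1}\sum_t E(v_t^2))\,\Omega^{(k)}=\Omega^{(k)}\mathbf{Q}_{(j)s}^{(k)}$, and Slutsky's theorem combines this with the convergence of the signal part to finish.

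The main obstacle, inherited verbatim from \Cref{thm:group boosting asymp}, is that the selection indices $(j_k,s_k)$ are data-dependent, so $\tilde{\mathbf{A}}_{js}^{(k)}$ is a different explicit matrix on each selection event; since $k$ is fixed there are only finitely many possible paths $((j_1,s_1),\dots,(j_k,s_k))$, and I would argue on each such event separately --- on it $\tilde{\mathbf{A}}_{js}^{(k)}$ is a fixed continuous function of the sample moments --- letting $c_{(j)s}^{(k)}$ and $\mathbf{Q}_{(j)s}^{(k)}$ depend on the realized path. The more delicate bookkeeping, also mirroring the LS-Boost1 proof, is to show that the $\sqrt{T}$-scaled fluctuation of the signal part is absorbed into the martingale sum $T^{-1/2}\sum_t v_t u_t$, so that $c_{(j)s}^{(k)}$ is a genuine non-random centering and $\mathbf{Q}_{(j)s}^{(k)}$ a genuine deterministic constant; this is precisely where the detailed expressions in the proof of \Cref{thm:group boosting asymp} do the work, and I would reproduce them line for line with $\mathbf{X}_{(j)}$ replaced throughout by $\mathbf{X}_{(j)s}$ and the $p\times p$ objects replaced by their scalar analogues.
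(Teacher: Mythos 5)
Your proposal is correct and follows essentially the same route as the paper's own proof: the paper likewise reruns the Theorem~\ref{thm:group boosting asymp} argument with $\mathbf{X}_{(j)s}$ in place of $\mathbf{X}_{(j)}$, splits $\hat{\phi}_{(j)s}^{(k)\prime}=\tilde{\mathbf{A}}_{js}^{(k)}\mathbf{Y}$ into a signal part converging to $c_{(j)s}^{(k)}$ and a noise part $\tilde{\mathbf{A}}_{js}^{(k)}\mathbf{u}$ rewritten as $T^{-1}\sum_t \xi_t$ with $\xi_t$ a martingale difference, and applies the martingale-difference CLT to obtain the scalar $\mathbf{Q}_{(j)s}^{(k)}$ and the ordinary product $\Omega^{(k)}\cdot\mathbf{Q}_{(j)s}^{(k)}$. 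Your additional remarks on conditioning over the finitely many selection paths and on the $\sqrt{T}$-scaled fluctuation of the signal term are more careful than what the paper writes out, but they do not change the approach.
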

See the supplement for the proof. Note that $\mathbf{Q}_{(j)s}^{(k)}$ is a scalar in the above corollary. Again we assume the parameter $\phi_{(j)s}$ gets updated at least once between steps $1$ and $k$.

\section{Convergence of LS-Boost1 estimator for fixed-\textit{T}} \label{sec:computation_bounds}

While \Cref{thm:group boosting asymp} and \Cref{cor:boost_asymp} characterize the behavior of the estimator when $k$ is fixed and $T \rightarrow \infty$, it will also be useful to study its behavior when $k \rightarrow \infty$ and $T$ is fixed. At the end of \Cref{sec:asym_lsboost1}, we briefly discuss the general case of a VAR with many variables and possibly multiple lags. The LS-boost procedure will likely select a different variable at each step, and the expression for $\tilde{\mathbf{A}}_{(j)}^{(k)}$ in \cref{eq: A tilde jk} becomes more complicated. To see the limit of the estimator when $k \rightarrow \infty$ and $T$ is fixed , we instead derive a computation bound result for the VAR LS-Boost estimator, similar to the cross-section result in \cite{freundetal2017boosting}. The following result is non-asymptotic, but it can characterize the behavior of the estimator as $k \rightarrow \infty$.

Consider the VAR model in \cref{eq:varp_grped}. Let $\lambda_{\text{pmin}}(\mathbf{X}_g'\mathbf{X}_g)$ be the smallest non-zero eigenvalue of $\mathbf{X}_g'\mathbf{X}_g$ and define the linear convergence rate coefficient
\begin{equation} \label{eq:linear rate}
	\gamma = 1 - \frac{\nu(2-\nu)\lambda_{\text{pmin}}(\mathbf{X}_g'\mathbf{X}_g)}{4d}.
\end{equation}
\cite{freundetal2017boosting} shows that $0.75 \leq \gamma < 1$ when columns of $\mathbf{X}_g$ are normalized with unit $\ell_2$ norm.  To handle time series correlation in a VAR, we further assume the data $ \mathbf{X}_{(j)} $ are re-scaled so that its inner product is an identity matrix. This normalization is only used for the convenience of deriving the theoretical results, and it is not needed for the actual implementation of the boosting algorithms. More details are provided in the proof of the following theorem. Let the LS solution at boosting step $k$ be $\bm{\phi}^{(k)}_{g,\text{LS}}$ and let its estimator be $\hat{\bm{\phi}}^{(k)}_{g,\text{LS}}$. Also let $ \hat{\bm{\phi}}_{g,\text{LS}} $ be the LS estimator for the VAR model in  \cref{eq:varp_grped}.

\begin{theorem} \label{thm:computational bounds}
	For the LS-Boost1 algorithm described in \Cref{sec:group boosting}, under \Cref{assumption:stationary} with i.i.d. errors, the estimated coefficient bound is given by
	\begin{equation} \label{eq:coef_bound}
		\lVert \hat{\bm{\phi}}_g^{(k)} - \hat{\bm{\phi}}^{(k)}_{g,\text{LS}} \rVert_{2} \leq \frac{\lVert \mathbf{X}_g \hat{\bm{\phi}}_{\text{g,LS}} \rVert_2^2}{\lambda_{\text{pmin}}(\mathbf{X}_g'\mathbf{X}_g)} \gamma^{k/2}.
	\end{equation}
	The prediction bound is given by
	\begin{equation} \label{eq:prediction_bound}
		\lVert \mathbf{X}_g \hat{\bm{\phi}}_g^{(k)} - \mathbf{X}_g \hat{\bm{\phi}}_{g,\text{LS}} \rVert_2 \leq \lVert \boldsymbol{\mathrm{X}} \hat{\bm{\phi}}_\text{g,LS} \rVert_2 \gamma^{k/2} .
	\end{equation}
\end{theorem}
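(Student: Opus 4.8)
The plan is to recognize LS-Boost1 as a greedy block-coordinate descent on the least-squares objective $L(\bm{\phi}_g)=\tfrac{1}{2}\lVert\mathbf{Y}-\mathbf{X}_g\bm{\phi}_g\rVert_2^2$ and then adapt, mutatis mutandis, the argument of \cite{freundetal2017boosting} for LS-Boost($\varepsilon$), the two genuinely new features being that the response $\mathbf{Y}$ is a matrix rather than a vector and that the greedy step ranges over the $d$ variable-blocks rather than over individual columns. Vectorizing, $\operatorname{vec}(\mathbf{Y})=(\mathbf{I}_d\otimes\mathbf{X}_g)\operatorname{vec}(\bm{\phi}_g)+\operatorname{vec}(\mathbf{u})$, so the block associated with variable $j$ carries the design $\mathbf{I}_d\otimes\mathbf{X}_{(j)}$, which under the re-scaling $\mathbf{X}_{(j)}'\mathbf{X}_{(j)}=\mathbf{I}_p$ assumed in the theorem has orthonormal columns, and the nonzero eigenvalues of $\mathbf{I}_d\otimes(\mathbf{X}_g'\mathbf{X}_g)$ coincide with those of $\mathbf{X}_g'\mathbf{X}_g$; this is why only $\lambda_{\text{pmin}}(\mathbf{X}_g'\mathbf{X}_g)$, and not a Kronecker'd quantity, enters $\gamma$. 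I would first record the bookkeeping facts that make the greedy-descent picture exact: $\hat{\mathbf{R}}^{(k)}=\mathbf{Y}-\mathbf{X}_g\hat{\bm{\phi}}_g^{(k)}$ by \cref{eq:group R update}; the selection rule in \cref{eq:group boost obj} is equivalent to $j_k\in\operatorname*{argmax}_j\lVert\mathbf{X}_{(j)}'\hat{\mathbf{R}}^{(k-1)}\rVert_2^2$ after one projection identity; and $-\nabla L(\hat{\bm{\phi}}_g^{(k)})=\mathbf{X}_g'\hat{\mathbf{R}}^{(k)}$, with $\lVert\nabla L\rVert_2^2=\sum_{j}\lVert\mathbf{X}_{(j)}'\hat{\mathbf{R}}^{(k)}\rVert_2^2$. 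The re-scaling is innocuous for the algorithm because the selection path and the fitted values $\mathbf{X}_g\hat{\bm{\phi}}_g^{(k)}$ depend on each $\mathbf{X}_{(j)}$ only through the projector onto its column space; it merely reparametrizes the reported coefficients, which is why it is not needed in implementation.

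With this in place, the core of the proof is the usual three-step chain. First, a \emph{per-step progress} identity: since $\hat{\mathbf{R}}^{(k)}=(\mathbf{I}_T-\nu\mathbf{H}^{(k)})\hat{\mathbf{R}}^{(k-1)}$ with $\mathbf{H}^{(k)}=\mathbf{X}_{(j_k)}\mathbf{X}_{(j_k)}'$ idempotent, expanding the Frobenius norm gives $L(\hat{\bm{\phi}}_g^{(k-1)})-L(\hat{\bm{\phi}}_g^{(k)})=\tfrac{\nu(2-\nu)}{2}\lVert\mathbf{X}_{(j_k)}'\hat{\mathbf{R}}^{(k-1)}\rVert_2^2$. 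Second, a \emph{greedy lower bound}: by the equivalent selection rule, $\lVert\mathbf{X}_{(j_k)}'\hat{\mathbf{R}}^{(k-1)}\rVert_2^2\ge\tfrac{1}{d}\sum_j\lVert\mathbf{X}_{(j)}'\hat{\mathbf{R}}^{(k-1)}\rVert_2^2=\tfrac{1}{d}\lVert\nabla L(\hat{\bm{\phi}}_g^{(k-1)})\rVert_2^2$. Third, the \emph{error bound} for least squares: for any $\bm{\phi}_g$, writing $\Delta=\bm{\phi}_g-\hat{\bm{\phi}}_{g,\text{LS}}$ and splitting off its component in $\ker(\mathbf{X}_g)$, one gets $\lVert\nabla L(\bm{\phi}_g)\rVert_2^2\ge\lambda_{\text{pmin}}(\mathbf{X}_g'\mathbf{X}_g)\lVert\mathbf{X}_g\Delta\rVert_2^2=2\lambda_{\text{pmin}}(\mathbf{X}_g'\mathbf{X}_g)\bigl(L(\bm{\phi}_g)-L^\ast\bigr)$, using $L(\bm{\phi}_g)-L^\ast=\tfrac{1}{2}\lVert\mathbf{X}_g\Delta\rVert_2^2$ by orthogonality of the LS residual to $\operatorname{col}(\mathbf{X}_g)$. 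Combining the three and following the constant bookkeeping of \cite{freundetal2017boosting} yields $L(\hat{\bm{\phi}}_g^{(k)})-L^\ast\le\gamma\bigl(L(\hat{\bm{\phi}}_g^{(k-1)})-L^\ast\bigr)$ with $\gamma$ as in \cref{eq:linear rate}, hence $L(\hat{\bm{\phi}}_g^{(k)})-L^\ast\le\gamma^k\bigl(L(\hat{\bm{\phi}}_g^{(0)})-L^\ast\bigr)$; and since $\hat{\bm{\phi}}_g^{(0)}=\mathbf{0}$, Pythagoras gives $L(\hat{\bm{\phi}}_g^{(0)})-L^\ast=\tfrac{1}{2}\lVert\mathbf{X}_g\hat{\bm{\phi}}_{g,\text{LS}}\rVert_2^2$.

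From the objective-gap bound the two displays follow. For prediction, $\lVert\mathbf{X}_g\hat{\bm{\phi}}_g^{(k)}-\mathbf{X}_g\hat{\bm{\phi}}_{g,\text{LS}}\rVert_2^2=2\bigl(L(\hat{\bm{\phi}}_g^{(k)})-L^\ast\bigr)\le\gamma^k\lVert\mathbf{X}_g\hat{\bm{\phi}}_{g,\text{LS}}\rVert_2^2$, and taking square roots gives \cref{eq:prediction_bound}. For the coefficients, one passes from the fit to the parameter via a restricted-eigenvalue step, $\lVert\hat{\bm{\phi}}_g^{(k)}-\hat{\bm{\phi}}^{(k)}_{g,\text{LS}}\rVert_2^2\le\lambda_{\text{pmin}}(\mathbf{X}_g'\mathbf{X}_g)^{-1}\lVert\mathbf{X}_g\hat{\bm{\phi}}_g^{(k)}-\mathbf{X}_g\hat{\bm{\phi}}^{(k)}_{g,\text{LS}}\rVert_2^2$, and substitutes the prediction bound, producing \cref{eq:coef_bound} in the form stated by \cite{freundetal2017boosting}. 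I expect the main obstacle to be not the quadratic algebra but handling the non-identifiability of the VAR least-squares solution when $pd>T$: the restricted-eigenvalue step is immediate when $\mathbf{X}_g$ has full column rank, but in the genuinely high-dimensional case the iterate $\hat{\bm{\phi}}_g^{(k)}$ need not lie in $\operatorname{row}(\mathbf{X}_g)$, so $\hat{\bm{\phi}}^{(k)}_{g,\text{LS}}$ must be read as the least-squares solution carried along the variables active up to step $k$ (equivalently, a minimum-norm choice on that active set), and the constant in \cref{eq:coef_bound} then has to be matched to \cite{freundetal2017boosting}'s accounting. It is worth noting that the statement is purely deterministic in $(\mathbf{X}_g,\mathbf{Y})$; \Cref{assumption:stationary} with i.i.d. errors is invoked only to place the result in the VAR setting and to guarantee $\lambda_{\text{pmin}}(\mathbf{X}_g'\mathbf{X}_g)>0$.
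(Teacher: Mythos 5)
Your proposal is correct and follows essentially the same route as the paper's proof: both adapt the three-step chain of \cite{freundetal2017boosting} (per-step progress identity from the residual update, greedy max-versus-average bound over the $d$ blocks, and a gradient-versus-suboptimality inequality at the minimum nonzero eigenvalue), with the block normalization $\mathbf{X}_{(j)}'\mathbf{X}_{(j)}=\mathbf{I}_p$ playing the same role as the paper's explicit transformation $\tilde{\mathbf{X}}_{(j_k)}=\mathbf{X}_{(j_k)}\tilde{P}_{j_k}$. The only substantive difference is that you derive the inequality $\lVert\nabla L\rVert_2^2 \geq 2\lambda_{\text{pmin}}(\mathbf{X}_g'\mathbf{X}_g)(L-L^*)$ directly from the normal equations (obtaining a slightly sharper constant, which still implies the stated $\gamma$), where the paper routes it through a standalone quadratic-programming lemma; your final coefficient bound, with $\lVert\mathbf{X}_g\hat{\bm{\phi}}_{g,\text{LS}}\rVert_2/\sqrt{\lambda_{\text{pmin}}}$, matches the last display of the paper's proof rather than the (apparently typographical) squared form in the theorem statement.
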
 
See the supplement for the proof. \Cref{thm:computational bounds} is similar to Theorem 2.1 in \cite{freundetal2017boosting}. Our results differ from theirs in two aspects: our results are derived for a VAR and we need to take care of additional dependence in time series. Similar to the definition of $\bm{\phi}_{g}$ below \cref{eq:varp_grped}, $\hat{\bm{\phi}}_g^{(k)}$ in \cref{eq:coef_bound} is defined as $\hat{\bm{\phi}}_g^{(k)} = \left[\hat{\phi}_{(1)}^{(k)}, \cdots, \hat{\phi}_{(d)}^{(k)}\right]'$.

Notice that results in \cref{eq:coef_bound,eq:prediction_bound} require very few assumptions to hold. They are derived for the fixed-$T$ case, and moment conditions in \Cref{assumption:error} are not needed here.

Normally, we will simply write $\bm{\phi}^{(k)}_{g}$ instead of $\bm{\phi}^{(k)}_{g,\text{LS}}$ for the unknown parameter. The subscript ``LS" in $\bm{\phi}^{(k)}_{g,\text{LS}}$ is added to emphasize that, at boosting step $k$ and corresponding to the specific sparsity at step $k$, there exist an (or multiple) unknown LS solution $\bm{\phi}^{(k)}_{g,\text{LS}}$.     \Cref{eq:coef_bound} states that $\hat{\bm{\phi}}_g^{(k)}$ converges to the closest (in $\ell_2$ norm) LS estimate . Notice that $\bm{\phi}^{(k)}_{g,\text{LS}}$ changes as $k$ changes. Depending on $k$,  $\bm{\phi}^{(k)}_{g,\text{LS}}$ may or may not be unique. When $\bm{\phi}^{(k)}_{g,\text{LS}}$ is non-unique, $\hat{\bm{\phi}}^{(k)}_{g,\text{LS}}$ estimates one of the LS solutions at step $k$ and $\hat{\bm{\phi}}_g^{(k)}$ converges to one of the many $\bm{\phi}^{(k)}_{g,\text{LS}}$s. To see why the limit $\bm{\phi}^{(k)}_{g,\text{LS}}$ may change with $k$, consider the following two scenarios: When $k$ is small, LS-Boost1 generates very few nonzeros in the coefficient matrices and a LS solution for these nonzeros is unique for the model at step $k$; when $k$ is large, LS-Boost1 may have generated a large number of nonzeros in the coefficient matrices and a LS solution for these nonzeros exists but is non-unique. In the case of prediction, \Cref{eq:prediction_bound} directly shows that the LS-Boost1 prediction converges to the LS prediction at a linear rate. Note that, by using an $\alpha$-mixing condition for the data and several other assumptions, Theorem 2 in \cite{lutz2006boosting} obtains a result for the consistency of LS-Boost prediction when $T$ and $k$ go to infinity simultaneously, whereas \Cref{thm:group boosting asymp,thm:computational bounds} in this paper permits the computation of the s.e. and can characterize precisely the convergence of LS-Boost prediction as a function of the boosting step $k$.

From \Cref{thm:computational bounds}, we can immediately deduce the behavior of $\hat{\bm{\phi}}_g^{(k)}$ when $k \rightarrow \infty$ with a fixed $T$, and \Cref{thm:group boosting asymp} gives the result when $T \rightarrow \infty$ with a fixed $k$. What would happen if both $k$ and $T \rightarrow \infty $? The result will depend on the relative speed at which $dp$, $T$ and $k \rightarrow \infty$. Loosely speaking, if $T > dp$, the estimator from LS-Boost1 will converge to the unique LS solution and to one of the non-unique LS solutions if $T < dp$ as $k \rightarrow \infty$. The final result will also be influenced by the interplay of all these factors and the actual sparsity of the coefficient matrices.

Our discussion in this section focuses on LS-Boost1 with the understanding the LS-Boost2 is a special case of LS-Boost1.

\section{Additional discussions on the \textit{p}-value} \label{sec:discussions}

We add some additional discussions on the estimator and its \textit{p}-value.

\textit{Incremental hypothesis testing}. It is mentioned in \Cref{sec:intro} that the incremental hypothesis studied in this paper is not equivalent to the classical hypothesis for the population regression parameters. Consider the $k=1$ example discussed in \Cref{sec:asym_lsboost1}. \Cref{thm:group boosting asymp} implies that when $k=1$, as $T \rightarrow \infty$, $\text{vec}(\hat{\phi}_{(j)}^{(k)\prime})$ converges to $c_{(j_1)}^{(1)}$ in \cref{eq:c11} and $c_{(j_1)}^{(1)}$ can be viewed as the vectorized version of $\phi_{(j_1),\text{LS}}^{(1)}$, the $j_1$th block of $\bm{\phi}_{\text{g,LS}}^{(1)}$ in \Cref{thm:computational bounds}. Hence, for a fixed boosting step $k$, the true unknown value of $\hat{\phi}_{(j)}^{(k)\prime}$ is $\phi_{\text{(j),LS}}^{(k)}$, which is different from the LS solution to \cref{eq:varp_grped}, $\phi_{(j),\text{LS}}$. At each boosting step, the null hypothesis of the \textit{t}-test for each element of the coefficient matrix of the $j$th variable is 
\begin{equation} \label{eq:null}
	H_0{:}\;\phi_{\text{(j),LS},rq}^{(k)} = 0, \text{ for }r=1,\cdots,d \text{ and }q = 1,\cdots,p,
\end{equation}
where $\phi_{\text{(j),LS},rq}^{(k)}$ is the $rq$th element of the matrix $\phi_{\text{(j),LS}}^{(k)}$, whose transpose, $\phi_{\text{(j),LS}}^{(k)\prime}$, is the $j$th $p \times d$ block of $\bm{\phi}_{\text{g,LS}}^{(k)}$, and the value of $\phi_{\text{(j),LS}}^{(k)}$ may change with $k$. The LS-Boost1 estimator $\hat{\phi}_{(j)}^{(k)}$ is an estimator for $\phi_{\text{(j),LS}}^{(k)}$ and it reflects the cumulative incremental changes in the $j$th coefficient matrix. Hence, the \textit{t}-test and its \textit{p}-value repeatedly test the null in \cref{eq:null} for each nonzero entry in $\hat{\phi}_{(j)}^{(k)}$ at every boosting step. If any element in $\hat{\phi}_{(j)}^{(k)}$ is $0$, we do not have a boosting estimate yet and cannot perform a \textit{t}-test for the corresponding element in $\phi_{\text{(j),LS}}^{(k)}$.

\textit{Interpretation of an incremental \textit{p}-value}. Note that  $\phi_{\text{(j),LS},rq}^{(k)} \neq \phi_{\text{(j),LS},rq}$, i.e., the null in \cref{eq:null} is not about the parameter in the population regression equation. Thus, when rejecting the null in \cref{eq:null} at $5\%$ significance level, we conclude that $\phi_{\text{(j),LS},rq}^{(k)}$ is statistically significant but with no implication on the significance of $\phi_{\text{(j),LS},rq}$. Alternatively, since the \textit{t}-statistic is
\begin{equation} \label{eq:t-stat}
    t\text{-statistic} = \frac{\hat{\phi}_{(j),rq}^{(k)} - 0}{\sqrt{\widehat{\text{var}}(\hat{\phi}_{(j),rq}^{(k)})}},
\end{equation}
we conclude that, given the model up to step $k$, the cumulative incremental change, $\hat{\phi}_{(j),rq}^{(k)}$, is statistically different from $0$ when the null is rejected. The denominator of \cref{eq:t-stat} is obtained from the asymptotic distribution in \Cref{thm:group boosting asymp}. If a parameter estimate is $0.1$ and is significant at step $k$, when it becomes $0.2$ at step $k+1$, is it also significant at step $k+1$? Not necessary. \Cref{fig:plot_pval} shows some examples where estimates are significant when $k$ is small but become insignificant when $k$ is large. Through repeated hypothesis testing we can track the significance of a parameter as it evolves along the path of the boosting algorithm. Our s.e. and \textit{p}-value are computed by assuming the model up to step $k$ is given and they do not quantify the statistical uncertainty associated with model selection at step $k$. Explicitly incorporating the model selection uncertainty can sometimes lead to wide confidence intervals, see examples for cross-section regression in Figure 6.12 in \cite{hastie2015slsparcity} and Figure 1 in \cite{tibshiranietal2016lassopval}. 


Since $\phi_{\text{(j),LS},rq}^{(k)}$ in \cref{eq:t-stat} changes with $k$, it is clear that our null hypothesis also changes at each boosting step when LS-Boost adds a new variable in the current step. Hence, the null in our test is not as \textit{fixed} as the classical null of $H_0{:}\;\phi_{\text{(j)},rq} = 0$, but they are similar: In classical hypothesis testing, given all the variables and a linear model, we test if $\phi_{\text{(j)},rq}=0$ without questioning where does the model (or all the variables) come from; in the case of LS-Boost, we test, given the model at a boosting step, if $\phi_{\text{(j),LS},rq}^{(k)} = 0$. 

The superscript $(k)$ makes the null hypothesis in \cref{eq:null} a little unusual. But it is no more unusual than some of the common practices in statistics. We draw an analogy between our incremental hypothesis testing and the use of many shrinkage methods in linear regression. Let $\phi_{(j),rq}$ be a true unknown parameter in a linear regression, say, VAR. A shrinkage method such as the lasso or LS-Boost suggests that, even when the LS estimate is available, it is better not to use it in prediction because a LS solution only minimizes the error on the training data but not the error on the test data (\textit{test error}); instead, use c.v. or AIC to screen the solution path spawn by the algorithm to get a biased estimate of $\phi_{(j),rq}$, which typically gives better prediction result and can minimize the test error. It would seem unusual not to use a LS estimate, but it reflects the practice of bias-variance trade-off in statistics. Similarly in hypothesis testing, when testing the significance of a coefficient, we propose not to use the full LS solution --- even when such a LS solution is unique and/or computable. Instead, use the sequential s.e. and \textit{p}-value to inspect when a fraction of $\phi_{(j),rq}$ is significant and when it is not. Hence, it is proper to use a superscript so that it becomes $\phi_{(j),rq}^{(k)}$. The purpose of sequential hypothesis testing is similar to the use of c.v. in model selection for the lasso and many other shrinkage methods: to uncover better solutions in spite of their bias.

\textit{Proper use of the \textit{p}-value}. We give two examples of proper use of the proposed \textit{p}-value. Example $1$: Given the LS-Boost solution paths of many parameters, an analyst can choose any boosting step $k$ and use the \textit{p}-value to inspect the statistical significance of a selected parameter with a possible Bonferroni correction. Confidence intervals can be constructed based on the associated standard errors. Example 2: Given the LS-Boost solution paths of many parameters, the analyst can use the \textit{p}-values to remove non-significant parameters at each boosting step and apply c.v. or validation procedure for model selection afterwards. Since the \textit{p}-value is used before c.v., the statistical uncertainty associated with c.v. will not affect the validity of the \textit{p}-value. An incorrect application of the \textit{p}-value is to use c.v. first to select a model then apply the \textit{p}-values to further pare down parameters. It is incorrect because the \textit{p}-values are computed at the same time when the boosting algorithm generates the solution path; using c.v. incurs additional uncertainty that are not reflected in the \textit{p}-values so that the \textit{p}-values cannot be used unless such additional uncertainty is small.

\textit{Trade-off between the false positive rate and the false negative rate}. Our simulation results suggest that we can use a $p$-value to keep the FPR of a model under control at the cost of increasing the FNR (FPR and FNR are defined in \Cref{sec:simulation_setup}). A data analyst has to decide whether FPR or FNR is of the most concern. Typically, it is FPR. An equally important topic is proper control of the false discovery rate (FDR). In a time series model like VAR, the \textit{p}-values are dependent not only across variables at each boosting step but also across boosting steps for each variable. It will be very useful to design a method to incorporate the standard Benjamini-Hochberg procedure for FDR control in high-dimensional VARs. This topic is beyond the scope of the current paper.

\textit{A choice between prediction and model interpretation}. When a \textit{p}-value is supplied, one generally expects it will improve the model. The specific nature of such improvement depends on the goal of research. Removing a variable based on its \textit{p}-value does not necessarily improve prediction. We provide some mixed results in the simulation exercise. Hence, it is understood that the goal leans more towards better model interpretation when a \textit{p}-value is applied. Our application, though, does provide an example where a \textit{p}-value-adjusted boosting model improves both prediction and model interpretation over an unadjusted boosting model.

\section{Simulation results}
We conduct Monte Carlo simulation in this section to study the use of the proposed \textit{p}-values in high-dimensional VAR modeling. We first consider a small model and compare the boosted \textit{p}-values to those obtained from the LS method, and then move onto a larger scale simulation study with high-dimensional VARs.

\subsection{A simple example of bivariate VAR} \label{sec:bivariate_simulation}
Let us consider the following stationary bivariate VAR selected from equation (4.2.1) in \cite{lutkepohl2005time}.
\begin{equation} \label{eq:bivariate_var}
	y_t = \begin{bmatrix}
	0.02 \\
	0.03
	\end{bmatrix} + \begin{bmatrix}
	0.5 & 0.1\\
	0.4 & 0.5
	\end{bmatrix} y_{t-1} +
	\begin{bmatrix}
	0 & 0 \\
	0.25 & 0
	\end{bmatrix} y_{t-2} +
	u_t  \text{ and } u_t \sim N\left(\mathbf{0}, \begin{bmatrix}
	0.09 & 0\\
	0    & 0.04
	\end{bmatrix}\right).
\end{equation}
We simulate a sample of $500$ observations and report in \Cref{tab:bivariate_VAR} the estimates and the \textit{p}-values from the LS method along with those of LS-Boost1 and LS-Boost2. The learning rate $\nu$ is chosen to be $0.1$. The selected stopping step at each replication is determined by the bias-corrected Akaike information criterion (AIC) in \cite{hurvichandtsai1989bcAIC} for the two boosting methods.

\begin{table}[htp] \centering
	\begin{center}
		\caption{Average estimates and \textit{p}-values for the bivariate VAR model} 
		\label{tab:bivariate_VAR} 
		\begin{threeparttable}
			\begin{tabular}{lllllllll}  
				\toprule
				&\multicolumn{2}{c}{$\hat{\phi}_1$} & \multicolumn{2}{c}{$\hat{\phi}_2$} & \multicolumn{2}{c}{\textit{p}-value for $\hat{\phi}_1$} &\multicolumn{2}{c}{\textit{p}-value for $\hat{\phi}_2$} \\
				\cmidrule{1-2} \cmidrule(lr){2-3} \cmidrule(lr){4-5} \cmidrule(lr){6-7} \cmidrule(lr){8-9}
				LS-Boost1 & 0.494 & 0.095 & -0.001 & -0.006 & 0.000 & 0.255 & 0.499 & 0.553 \\
				          & 0.403 & 0.476 & 0.258 & 0.009 & 0.000 & 0.000 & 0.000 & 0.513 \\
				\cmidrule{1-2} \cmidrule(lr){2-3} \cmidrule(lr){4-5} \cmidrule(lr){6-7} \cmidrule(lr){8-9}
				LS-Boost2 & 0.470 & 0.095 & 0.022 & -0.006 & 0.000 & 0.089 & 0.328 & 0.620 \\
				          & 0.419 & 0.478 & 0.234 & 0.010 & 0.000 & 0.000 & 0.000 & 0.465 \\
				\cmidrule{1-2} \cmidrule(lr){2-3} \cmidrule(lr){4-5} \cmidrule(lr){6-7} \cmidrule(lr){8-9}
				LS       & 0.495 & 0.092 & 0.002 & -0.005 & 0.000 & 0.288 & 0.499 & 0.554 \\
				          & 0.403 & 0.490 & 0.247 & 0.005 & 0.000 & 0.000 & 0.000 & 0.521 \\	
				\bottomrule
			\end{tabular}
			\begin{tablenotes}[flushleft]
				\setlength\labelsep{0pt}
				\item[] \textit{Notes}: This table reports estimation output with a sample size of $500$. All numbers are averages over $100$ replications. Results for the intercept are skipped in this table.
			\end{tablenotes}
		\end{threeparttable}
	\end{center} 
\end{table}

In \Cref{tab:bivariate_VAR}, the LS-Boost estimates are very close to those of the LS. More interestingly, the statistical decision based on LS-Boost \textit{p}-values gives the identical results compared to the LS if we use $5\%$ as the critical value. Compared to the result of LS-Boost2, the result of LS-Boost1 is slightly closer to that of the LS, perhaps due to the fact that, on average, columns $\mathbf{X}_g$ in LS-Boost1 is updated more often since LS-Boost1 selects all lags of a single variable.

\begin{figure}[htp]
	\centering   
	\subfloat[LS-Boost1 s.e. and LS s.e.  \label{fig:sebars}]{{\includegraphics[width=0.5\linewidth,keepaspectratio]{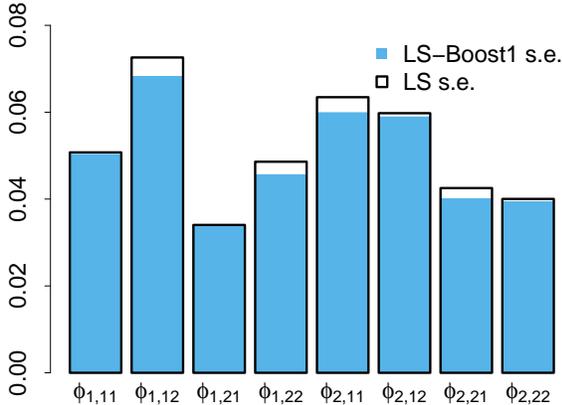} }}%
	\subfloat[Path of LS-Boost1 \textit{p}-values and the LS \textit{p}-value \label{fig:plines}]{{\includegraphics[width=0.5\linewidth,keepaspectratio]{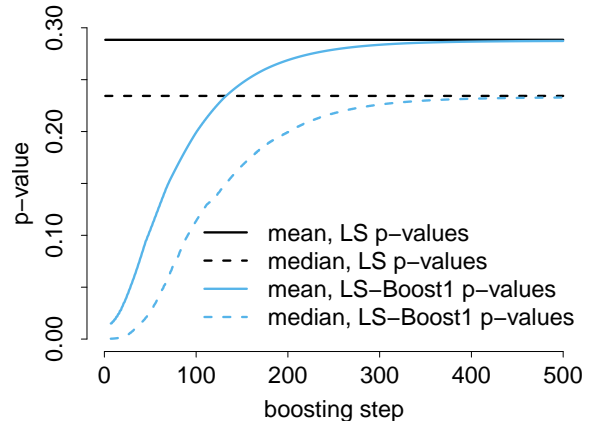} }}%
	\caption{\Cref{fig:sebars} plots the average s.e. of the $8$ coefficient estimators in \cref{eq:bivariate_var} of AIC-selected LS-Boost1 models in $100$ replications. The height of the white bar outlined with a black line is the average s.e. of the corresponding LS in $100$ replications. \Cref{fig:plines} plots the entire path of the mean (over $100$ replications) and the median \textit{p}-value of LS-Boost1. The black solid and dashed lines are the mean and median \textit{p}-values of the LS in $100$ replications.}%
\end{figure}

It is also interesting to observe the boosting \textit{p}-value can be very close to that of the LS. For example, the two \textit{p}-values of LS-Boost1, $0.499$ and $0.553$, are close to $0.499$ and $0.554$ of the LS. \Cref{fig:sebars} plots the average s.e. of $100$ A.I.C.-selected LS-Boost1 models for the $8$ parameters in \cref{eq:bivariate_var} vs. those of the LS. The s.e. of A.I.C.-selected models are close to those of the LS, in spite of the fact A.I.C. typically stops the boosting iteration very early in the simulation. \Cref{fig:plines} offers a more complete picture by plotting the average of LS-Boost1 s.e. for the entire $500$ steps for the parameter $\phi_{1,12}$. It demonstrates that our \textit{p}-value can converge to that of the LS when $k$ is large. This is not a surprise since the sample size ($500$) is large relative to the number of parameters in \cref{eq:bivariate_var}, the bivariate VAR has a unique LS solution and \Cref{thm:computational bounds} predicts the LS-Boost1 estimator converges to this unique LS solution.


\subsection{Simulation results for high-dimensional VARs}
In this section, we consider high-dimensional VARs and focus on using the $5\%$ cutoff to remove variables and study its impact on model performance. This is arguably the most familiar application of a \textit{p}-value.

\subsubsection{Simulation setup} \label{sec:simulation_setup}
We use a VAR(2) model with no intercept in our simulation. Consider the model
\begin{equation} \label{eq:var2_simu}
	y_t = \phi_1 y_{t-1} +  \phi_2 y_{t-2} + u_t,
\end{equation}
where we make the following specifications:

(i) Both $\phi_1$ and $\phi_2$ are $d \times d$ matrices and are generated according to a sparsity parameter $s$. Three combinations of sample size $(T)$, model dimension $(d)$, and coefficient sparsity $(s)$ are considered (see \Cref{tab:model_specs} for more details).

\begin{table}[htp]
	\centering
	\caption{Sample size, model dimension, and sparsity}
	\begin{tabular}{lrrrrrr}
		\toprule
		\text{model type}& $T$  & $d$ & $s$ & No. of nonzeros & No. of parameters & sparsity rate \\
		                 &      &     &     & $d \times s \times 2$ & $d \times d \times 2$ & \\
		\midrule
		model type 1 & 200 & 50 & 5 & 500 & 5,000 & 10\%\\
		model type 2 & 50 & 100 & 5 & 1,000 & 20,000 & 5\% \\
		model type 3 & 100 & 200  & 10 & 4,000 & 80,000 & 5\% \\
		\bottomrule
	\end{tabular}%
	\label{tab:model_specs}%
\end{table}%
For $\phi_1$, we randomly select $s$ columns to be nonzero and fill them with uniform random variables on $[-0.5,0.5]$. The same procedure is used to generate $\phi_2$. To guarantee stationarity of the simulated process, we need to impose some restriction on the maximum eigenvalue of the coefficient matrices. Rewrite the VAR(2) in \cref{eq:var2_simu} in a VAR(1) format (see equation 10.1.11 in \cite{hamilton1994time})
\begin{equation} \label{eq:var1_format}
	\mathcal{Y}_t = \mathbf{F} \mathcal{Y}_{t-1} + \mathbf{v},
\end{equation}
where
\begin{equation} \label{eq:F_mat}
	\mathbf{F} =
	\begin{bmatrix}
	\phi_1 & \phi_2\\
	\mathbf{I}_d & \mathbf{0} \\
	\end{bmatrix}.
\end{equation}
Given $\phi_1$ and $\phi_2$, we recursively shrink their value until the maximum eigenvalue of $\mathbf{F}$ is less than $1$, and this will make $y_t$ in  \cref{eq:var2_simu} covariance-stationary.

(iii) Create a Toeplitz matrix $\tilde{\Omega}$, whose $ij$th element is  $\rho^{|i-j|}$ with $\rho$ chosen to be $0.5$. For a given signal-to-noise (SNR) level, choose a value for $\sigma^2$ so that it meets the desired SNR level, i.e., $\text{SNR} = \frac{\lambda_{\text{max}}(\mathbf{F}) }{\sigma^2\lambda_{\text{max}}(\tilde{\Omega})}$. Let the error variance-covariance matrix be $\Omega = \sigma^2 \tilde{\Omega}$. Draw the i.i.d. $d \times 1$ error vector $u_t$ from $N(\mathbf{0}, \Omega)$. We consider three SNR levels in our study: $0.5$, $1.0$, and $3.0$.

(iv) With $\phi_1$, $\phi_2$, and $\Omega$, we can simulate the process according to \cref{eq:var2_simu}.

(v) Run various methods (discussed below), including LS-Boost1 and LS-Boost2, on the simulated data. For each method, we estimate the model based on a training data set, choose the tuning parameter on a validation set, and evaluate the model performance on a test set. The number of observations in both validation and test sets is $200$. All model metrics reported are averages of $100$ replications. Alternatively, one can use an information criterion to determine the tuning parameter. Since there are quite a few information criteria to choose from, using a validation set provides a simple and uniform approach for model selection across different methods.

(vi) For each method, we report the following metrics: Mean squared error (MSE) for estimated coefficients: $\text{MSE} = \lVert \hat{\bm{\phi}} - \bm{\phi} \rVert_{2}^2 /(p\times d^2)$; Mean squared prediction error (MSPE) on the test set: $\text{MSPE} = \lVert \mathbf{X}(\hat{\bm{\phi}} - \bm{\phi}) \rVert_{2}^2 /(T_{\text{test}}\times d)$, where $T_{\text{test}} = 200$; false positive rate (FPR): $\text{FPR} = \text{FP}/(\text{TN}+\text{FP})$ and false negative rate (FNR): $\text{FNR} = \text{FN}/(\text{TP}+\text{FN})$, where TN, FP, TP, and FN are number of true zeros, false nonzeros, true nonzeros, and false nonzeros; F-score: $\text{F-score} = 2\text{TP}/(2\text{TP} + \text{FP} + \text{FN})$. The F-score measures the accuracy of support recovery, and it balances FPR and FNR. Finally, we also report the model size by counting the number of nonzeros in the selected model from validation for each method. All evaluation metrics are averages over $100$ replications.

\subsubsection{Methods under consideration}

We mentioned in \Cref{sec:intro} that there is now a long list of statistical methods available for estimation in high-dimensional VARs. Our goal is not an extensive comparison of different methods but the study of estimation and use of the proposed \textit{p}-value in LS-Boost. Hence, we select only a few of the other methods for demonstration purposes.

The first method to consider is the lasso. Various papers have used the lasso in high-dimensional VARs, and the actual implementation of this technique varies; variants such as the adaptive lasso or the group lasso can also be used. One common approach is to apply the lasso to each column of $\mathbf{Y}$ in \cref{eq:varp_matrix} separately. However, in a VAR setup, it is important to do estimation jointly to incorporate possible correlation across the $d$ variables. We instead use the multi-response approach in the R package \texttt{glmnet} by setting the option $\texttt{family = "mgaussian"}$. For the $j$th column in $\mathbf{X}$ in \cref{eq:varp_matrix}, the algorithm uses a group lasso penalty to penalize the entire $j$th row of $\bm{\phi}$. Hence, the modeling strategy is: a column of $\mathbf{X}$ will be included in the model for all $d$ response variables or excluded for all $d$ response variables. This is exactly the same modeling strategy LS-Boost2 adopts. We note that a row in $\bm{\phi}$ is a column in either $\phi_1$ or $\phi_2$. Given that we simulate $\phi_1$ and $\phi_2$ by randomly assign $s$ nonzero columns, this way of applying the lasso to the simulated data will give the lasso a slight edge over LS-Boost1.

We also consider two other recent methods that are suitable for estimation in high-dimensional VARs: the sparse orthogonal factor regression (SOFAR) in \cite{uematsu2019sofar} and the sparse reduced-rank regression (SRRR) in \cite{chen2012srrr}, both of which are implemented in the R package \texttt{rrpack}. Both method requires a prespecified rank parameter. We tune this \texttt{nrank} parameter on the grid of $\{1,2,3,4,5\}$. These two methods also uses the lasso to penalize certain aspects of the coefficient matrix $\bm{\phi}$ to generate sparsity, and we use $100$ for the length of the lasso tuning sequence. All other model control parameters are the default values provided by the \texttt{rrpack} package. Both methods are very sophisticated and flexible in application. For example, the \texttt{sofar} function in the package has more then $10$ model control parameters. It is almost impossible to experiment with all model parameter configurations. We use the default values for simplicity purposes.  

Since we use $5$ rank parameters and a sequence of $100$ penalty parameters in both the SOFAR and the SRRR, we set the length of tuning sequence in the lasso and LS-Boost to $500$. In both LS-Boost1 and LS-Boost2, we use the learning rate $0.1$, and train the model up to $500$ iterations and send these solutions to a validation set for model selection and a test set for model evaluation.

Finally, after getting the boosting solution paths and the \textit{p}-values, we use $5\%$ as the \textit{p}-value cutoff to remove nonzeros in the boosting estimates and call these estimates LS-Boost1p and LS-Boost2p, respectively. Apply LS-Boost1p and LS-Boost2p to the validation set for model selection and to the test set for model evaluation. As it is discussed in \Cref{sec:discussions}, this is a valid way to use the proposed \textit{p}-value.

\subsubsection{The simulation results and the evaluation metrics}

\Cref{tab:modeltype1,tab:modeltype2,tab:modeltype3} report the simulation results for the three models described in \Cref{tab:model_specs} at three SNR levels. Our major conclusions are the proposed \textit{p}-value can help LS-Boost improve FPR and F score, and it can also help LS-Boost significantly reduce model size without compromising MSE or MSPE too much.

\textit{FPR, FNR and F score}. From these tables, it is clear that all regularized methods under consideration in high dimensional VAR can have very high FPR, implying that their Type I error rate would be too high to make these model useful for interpretation. After applying the \textit{p}-value to remove many nonzeros, the FPR for LS-Boost1p and LS-Boost2p is roughly around $5\%$, bringing the FPR under control. A downside is the FNR increases as the FPR decreases. Inspecting the F score will be helpful. Overall, the \textit{p}-value-adjusted models have a higher F score compared to that of LS-Boost1 and LS-Boost2 and the other two methods.

\textit{Model size, MSE and MSPE}. The MSE and MSPE of all methods are more or less on the same scale. When the model dimension is very large in \Cref{tab:modeltype3}, the lasso and LS-Boost start to outperform the SOFAR and SRRR by a small margin. Notably, LS-Boost1p and LS-Boost2p usually deliver similar MSE and MSPE with far fewer parameters, as can be seen from the Model size column in all tables. If we compare the MSPE of LS-Boost1 and LS-Boost1p, LS-Boost1p can improve MSPE in \Cref{tab:modeltype1} and its performance in \Cref{tab:modeltype3} is a little worse. Overall, based on these tables, we cannot conclude that applying \textit{p}-values to the original LS-Boost models will always improve MSE or MSPE. However, it should also be noted that estimators such as LS-Boost1p and LS-Boost2p are lot more \textit{effective} in using nonzero parameters. Consider the MSPE in \Cref{tab:modeltype3} when $\text{SNR}=3.0$, LS-Boost2 uses $12,832$ parameters to deliver an MSPE of $0.0823$ and the lasso uses $28,050$ parameters to obtain an MSPE of $0.0813$. But it takes only $4,501$ parameters for LS-Boost2p to yield a similar MSPE of $0.0862$. Similar pattern can be observed throughout all tables and also in our application example.

\begin{table}[htp] \centering
	\begin{center}
		\caption{Evaluation metrics for  model type $1$ with $T=200, d = 50, s=5$} 
		\label{tab:modeltype1} 
		\begin{threeparttable}
			\begin{tabular}{lrrrrrr} 
				\toprule
				Estimator& MSE & MSPE & FPR& FNR& F score &Model size\\\hline
				&			\multicolumn{6}{c}{$\text{SNR} = 0.5$} \\
				\cmidrule(lr){2-7}	
				SOFAR & 0.0039 & 0.3907 & 0.262 & 0.380 & 0.320 & 1487.7 \\
				SRRR  & 0.0043 & 0.4064 & 0.985 & 0.004 & 0.183 & 4929.5 \\
				Lasso & 0.0040 & 0.3939 & 0.423 & 0.274 & 0.264 & 2266.0 \\
				LS-Boost1 & 0.0043 & 0.4167 & 0.586 & 0.000 & 0.282 & 3135.0 \\
				LS-Boost2 & 0.0041 & 0.3975 & 0.321 & 0.324 & 0.305 & 1780.5 \\
				LS-Boost1p & 0.0042 & 0.4081 & 0.064 & 0.575 & 0.424 &  501.9 \\
				LS-Boost2p & 0.0041 & 0.3981 & 0.049 & 0.585 & 0.448 &  427.3 \\
				
				&			\multicolumn{6}{c}{$\text{SNR} = 1.0$} \\
				\cmidrule(lr){2-7}
				SOFAR & 0.0039 & 0.1953 & 0.259 & 0.382 & 0.320 & 1476.2 \\
				SRRR  & 0.0040 & 0.1958 & 0.849 & 0.075 & 0.194 & 4284.0 \\
				Lasso & 0.0040 & 0.1970 & 0.423 & 0.274 & 0.264 & 2266.0 \\
				LS-Boost1 & 0.0043 & 0.2084 & 0.586 & 0.000 & 0.282 & 3135.0 \\
				LS-Boost2 & 0.0041 & 0.1988 & 0.321 & 0.324 & 0.305 & 1780.5 \\
				LS-Boost1p & 0.0042 & 0.2041 & 0.064 & 0.575 & 0.424 &  501.9 \\
				LS-Boost2p & 0.0041 & 0.1990 & 0.049 & 0.585 & 0.448 &  427.3 \\
				
				&			\multicolumn{6}{c}{$\text{SNR} = 3.0$} \\
				\cmidrule(lr){2-7}	
				SOFAR & 0.0039 & 0.0651 & 0.243 & 0.391 & 0.332 & 1396.9 \\
				SRRR  & 0.0038 & 0.0637 & 0.225 & 0.404 & 0.335 & 1309.0 \\
				Lasso & 0.0040 & 0.0657 & 0.423 & 0.282 & 0.261 & 2263.0 \\
				LS-Boost1 & 0.0043 & 0.0695 & 0.584 & 0.000 & 0.282 & 3126.0 \\
				LS-Boost2 & 0.0041 & 0.0663 & 0.315 & 0.333 & 0.304 & 1753.0 \\
				LS-Boost1p & 0.0042 & 0.0680 & 0.065 & 0.572 & 0.426 &  505.3 \\
				LS-Boost2p & 0.0041 & 0.0664 & 0.050 & 0.582 & 0.448 &  434.1 \\
					
				\bottomrule 
			\end{tabular}
			\begin{tablenotes}[flushleft]
				\item[] \textit{Notes}: All evaluation metrics reported in the table are averages over $100$ replications. In model type 1, the true nonzeros are $500$ and the total number of parameters is $5,000$. See \Cref{sec:simulation_setup} for the computation of MSE, MSPE, FPR, FNR, and F score. Model size is the total number of nonzeros in the selected model for each method.
			\end{tablenotes}
		\end{threeparttable}
	\end{center} 
\end{table}

\begin{table}[htp] \centering
	\begin{center}
		\caption{Evaluation metrics for  model type $2$ with $T=50, d = 100, s=5$} 
		\label{tab:modeltype2} 
		\begin{threeparttable}
			\begin{tabular}{lrrrrrr} 
				\toprule
				Estimator& MSE & MSPE & FPR& FNR& F score&Model size\\\hline
				&			\multicolumn{6}{c}{$\text{SNR} = 0.5$} \\
				\cmidrule(lr){2-7}	
				
				SOFAR & 0.0028 & 0.8827 & 0.315 & 0.372 & 0.169 &  6609.9 \\
				SRRR  & 0.0025 & 0.8428 & 0.495 & 0.250 & 0.135 & 10156.0 \\
				Lasso & 0.0026 & 0.8553 & 0.287 & 0.398 & 0.173 &  6046.0 \\
				LS-Boost1 & 0.0031 & 0.9559 & 0.313 & 0.000 & 0.260 &  6952.0 \\
				LS-Boost2 & 0.0028 & 0.8802 & 0.183 & 0.446 & 0.230 &  4035.0 \\
				LS-Boost1p & 0.0031 & 0.9523 & 0.059 & 0.621 & 0.304 &  1502.1 \\
				LS-Boost2p & 0.0029 & 0.9000 & 0.040 & 0.666 & 0.320 &  1097.0 \\
				
				&			\multicolumn{6}{c}{$\text{SNR} = 1.0$} \\
				\cmidrule(lr){2-7}	
				SOFAR & 0.0028 & 0.4414 & 0.323 & 0.364 & 0.168 & 6777.4 \\
				SRRR  & 0.0025 & 0.4145 & 0.426 & 0.280 & 0.147 & 8812.0 \\
				Lasso & 0.0026 & 0.4277 & 0.286 & 0.398 & 0.173 & 6043.0 \\
				LS-Boost1 & 0.0031 & 0.4780 & 0.313 & 0.000 & 0.260 & 6952.0 \\
				LS-Boost2 & 0.0028 & 0.4401 & 0.183 & 0.446 & 0.230 & 4035.0 \\
				LS-Boost1p & 0.0031 & 0.4761 & 0.059 & 0.621 & 0.304 & 1502.1 \\
				LS-Boost2p & 0.0029 & 0.4500 & 0.040 & 0.666 & 0.320 & 1097.0 \\
				
				&			\multicolumn{6}{c}{$\text{SNR} = 3.0$} \\
				\cmidrule(lr){2-7}
				SOFAR & 0.0029 & 0.1466 & 0.334 & 0.364 & 0.163 & 6981.1 \\
				SRRR  & 0.0023 & 0.1329 & 0.195 & 0.387 & 0.235 & 4313.0 \\
				Lasso & 0.0026 & 0.1415 & 0.290 & 0.411 & 0.168 & 6106.0 \\
				LS-Boost1 & 0.0031 & 0.1578 & 0.332 & 0.000 & 0.250 & 7300.0 \\
				LS-Boost2 & 0.0028 & 0.1458 & 0.182 & 0.450 & 0.228 & 4009.0 \\
				LS-Boost1p & 0.0031 & 0.1579 & 0.058 & 0.623 & 0.305 & 1478.1 \\
				LS-Boost2p & 0.0029 & 0.1490 & 0.040 & 0.667 & 0.320 & 1092.4 \\
							
				\bottomrule 
			\end{tabular}
			\begin{tablenotes}[flushleft]
				\item[] \textit{Notes}: All evaluation metrics reported in the table are averages over $100$ replications. In model type 2, the true nonzeros are $1,000$ and the total number of parameters is $20,000$. See \Cref{sec:simulation_setup} for the computation of MSE, MSPE, FPR, FNR, and F score. Model size is the total number of nonzeros in the selected model for each method.
			\end{tablenotes}
		\end{threeparttable}
	\end{center} 
\end{table}

\begin{table}[htp] \centering
	\begin{center}
		\caption{Evaluation metrics for  model type $3$ with $T=100, d = 200, s=10$} 
		\label{tab:modeltype3} 
		\begin{threeparttable}
			\begin{tabular}{lrrrrrr} 
				\toprule
				Estimator& MSE & MSPE & FPR& FNR& F score&Model size\\\hline
				&			\multicolumn{6}{c}{$\text{SNR} = 0.5$} \\
				\cmidrule(lr){2-7}	
				SOFAR & 0.0015 & 0.5919 & 0.242 & 0.423 & 0.190 & 20678.7 \\
				SRRR  & 0.0015 & 0.5865 & 0.546 & 0.224 & 0.128 & 44628.0 \\
				Lasso & 0.0012 & 0.4889 & 0.331 & 0.349 & 0.164 & 27786.0 \\
				LS-Boost1 & 0.0013 & 0.5476 & 0.254 & 0.000 & 0.307 & 23292.0 \\
				LS-Boost2 & 0.0012 & 0.4951 & 0.136 & 0.446 & 0.279 & 12590.0 \\
				LS-Boost1p & 0.0014 & 0.5532 & 0.063 & 0.668 & 0.263 &  6137.0 \\
				LS-Boost2p & 0.0013 & 0.5181 & 0.042 & 0.701 & 0.286 &  4404.7 \\
				
				&			\multicolumn{6}{c}{$\text{SNR} = 1.0$} \\
				\cmidrule(lr){2-7}	
				SOFAR & 0.0015 & 0.2960 & 0.243 & 0.422 & 0.189 & 20778.0 \\
				SRRR  & 0.0014 & 0.2911 & 0.503 & 0.244 & 0.134 & 41238.0 \\
				Lasso & 0.0012 & 0.2444 & 0.331 & 0.349 & 0.164 & 27786.0 \\
				LS-Boost1 & 0.0013 & 0.2738 & 0.254 & 0.000 & 0.307 & 23292.0 \\
				LS-Boost2 & 0.0012 & 0.2476 & 0.136 & 0.446 & 0.279 & 12590.0 \\
				LS-Boost1p & 0.0014 & 0.2766 & 0.063 & 0.668 & 0.263 &  6137.0 \\
				LS-Boost2p & 0.0013 & 0.2591 & 0.042 & 0.701 & 0.286 &  4404.7 \\
				
					&			\multicolumn{6}{c}{$\text{SNR} = 3.0$} \\
				\cmidrule(lr){2-7}	
				SOFAR & 0.0015 & 0.0984 & 0.256 & 0.420 & 0.183 & 21777.4 \\
				SRRR  & 0.0014 & 0.0950 & 0.334 & 0.340 & 0.165 & 28030.0 \\
				Lasso & 0.0012 & 0.0813 & 0.335 & 0.344 & 0.164 & 28050.0 \\
				LS-Boost1 & 0.0013 & 0.0911 & 0.254 & 0.000 & 0.310 & 23340.0 \\
				LS-Boost2 & 0.0012 & 0.0823 & 0.139 & 0.440 & 0.278 & 12832.0 \\
				LS-Boost1p & 0.0014 & 0.0919 & 0.064 & 0.668 & 0.263 &  6159.0 \\
				LS-Boost2p & 0.0013 & 0.0862 & 0.043 & 0.700 & 0.284 &  4501.2 \\
				
				\bottomrule 
			\end{tabular}
			\begin{tablenotes}[flushleft]
				\item[] \textit{Notes}: All evaluation metrics reported in the table are averages over $100$ replications. In model type 3, the true nonzeros are $4,000$ and the total number of parameters is $80,000$. See \Cref{sec:simulation_setup} for the computation of MSE, MSPE, FPR, FNR, and F score. Model size is the total number of nonzeros in the selected model for each method.
			\end{tablenotes}
		\end{threeparttable}
	\end{center} 
\end{table}

\section{Application} \label{sec:application}

In this section, we apply the proposed \textit{p}-value to the VAR modeling of the monthly macroeconomic data set provided in \cite{mccracken2016data}. This dataset is frequently updated and well-maintained, and we use version 2022-02 that records $128$ economic variables from $1959/01$ to $2022/01$. After removing missing observations and performing the suggested transformation such as difference and twice difference as described by the \texttt{TCODE} variable in the paper, we end up with $104$ variables and $758$ observations. Out of these $758$ observations, $50\%$, $25\%$ and $25\%$ are used for training, validation and test, respectively. An intercept is included in all model fitting.

Exactly the same tuning process used in the simulation section is applied to each method in this application except that for SOFAR and SRRR. We increase the penalty parameter sequence for these two methods from $100$ to $500$ while still tuning over five rank values, giving these two methods a large grid for the search of a good solution. SOFAR sometimes does not identify any model on the validation set and we will use the model selected by AIC instead. Both SOFAR and SRRR have many model parameters to tune in the R package \texttt{rrpack}. We will not make further effort to tune these parameters in their R code while keeping in mind that their predication performance in our reported table can be likely improved if more tuning is implemented. We focus on the proposed \textit{p}-value and its impact on the LS-Boost procedure.

\begin{table}[htp] \centering
	\begin{center}
		\caption{MSPE and model size in application} 
		\label{tab:application_MSPE} 
		\begin{threeparttable}
			\begin{tabular}{rrrrrrcc}  
				\toprule
				& SOFAR & SRRR & Lasso &LS-Boost1& LS-Boost2 & LS-Boost1p &  LS-Boost2p \\
				\midrule
				
				&			\multicolumn{7}{c}{MSPE} \\
				\cmidrule(lr){2-8}
				
				VAR(1) & 3.399 & 0.176 & 270.718 & 391.494 & 391.494 & 0.157 & \textbf{0.157} \\
				VAR(2) & 6.750  & 0.190  & 182.781 & 255.170 & 354.143 & 0.159 & \textbf{0.157} \\
				VAR(3) & 3.247 & 0.211 & 105.392 & 230.217 & 200.226 & 0.168 & \textbf{0.158} \\
				VAR(4) & 3.788 & 0.653 & 87.728 & 187.724 & 213.099 & 0.171 & \textbf{0.159} \\
				&			\multicolumn{7}{c}{Model size (No. of nonzero parameter estimates)} \\
				\cmidrule(lr){2-8}
				VAR(1) & 528   & 1144  & 6240  & 7072  & 7072  & 411   & \textbf{411} \\
				VAR(2) & 968   & 1352  & 8216  & 14352 & 10296 & 425   & \textbf{410} \\
				VAR(3) & 32448 & 1560  & 6656  & 15288 & 7488  & 519   & \textbf{411} \\
				VAR(4) & 43264 & 2184  & 7904  & 20384 & 10608 & 618   & \textbf{410} \\
				
				\bottomrule
			\end{tabular}
			\begin{tablenotes}[flushleft]
				\setlength\labelsep{0pt}
				\item[] \textit{Notes}: The sample size for the training, validation and test sets are $378,189$ and $189$. The MSPE is computed for the test set. Intercept is excluded when computing the model size of each method. The total number of parameters, excluding the intercept, for VAR(1), VAR(2), VAR(3), and VAR(4) are $10816, 21632, 32448$, and $43264$, respectively.
			\end{tablenotes}
		\end{threeparttable}
	\end{center} 
\end{table}

\Cref{tab:application_MSPE} reports the main results of the application. Two model evaluation metrics are included: MSPE and model size. The MSPE is computed based on the test set and model size is the number of nonzero parameter estimates for the model selected from the validation data set for each method. The MSPE shows that both Lasso and LS-Boost fail to produce a result comparable to that of  SOFAR or SRRR. This should not be construed as a failure of the lasso method in general. Our earlier discussion points out that there is no unique way to implement the lasso in VAR. Other forms of the lasso can also be implemented. We discuss a few aspects of \Cref{tab:application_MSPE} in the following.

\subsection{Model sparsity} 

One reason that LS-Boost fails to deliver any meaningful result is probably due to the noise in the data. Noise and correlation make it very difficult for the boosting algorithm to focus on the most essential subset of variables; instead, the algorithm keeps adding new variables to the model. The model size quickly becomes out of control with the number of nonzeros running well into thousands or tens of thousands. For example, in a VAR(4) model, LS-Boost1 gives a model with $2,0384$ nonzero parameter estimates. Given the VAR system have $104$ equations, it implies, on average, each equation has $196$ nonzero parameters on the r.h.s. The possible high FPR (and possibly high FDR) translates to a very poor performance on the test set. Using the \textit{p}-value salvages the boosting method. By simply removing a coefficient estimate with a \textit{p}-value larger than $5\%$, we significantly reduces model size. To appreciate the reduction in model size, let us consider the VAR(4) model. A VAR(4) has a total of $43,264 (=104\times 104 \times 4)$ parameters, excluding the intercept. The algorithm LS-Boost2p gives only $410$ nonzeros, achieving a sparsity of about $0.9\%$, reducing the average number of variables per equation from $196$ to about $4$, making the selected model highly interpretable.

\subsection{Model stability} 

A surprising benefit is \textit{p}-values can help control model stability. Use the LS-Boost2p as an example. As the model changes from VAR(1) to VAR(4), the model size barely changes, indicating that the $5\%$ critical value removes most of the nonzeros. In fact, for LS-Boost2p, all the nonzeros appear in the coefficient matrix for $\mathbf{Y}_{t-1}$ and all coefficients in the second, third, and fourth lags are zero. \Cref{fig:circosNEG,fig:circosPOS} give an illustration of the VAR(1) coefficients. Variables appear in \Cref{fig:circosNEG,fig:circosPOS} are described in \Cref{tab:var_names}.

\begin{table}[htp] \centering
	\begin{center}
		\caption{Variable definitions in \Cref{fig:circosNEG,fig:circosPOS,fig:circosPOSlarge}} 
		\label{tab:var_names} 
		\begin{threeparttable}
			\begin{tabular}{lcl} 
				\toprule
				\text{Variable}& Group No.  & Description  \\
				\midrule
				\texttt{AAAFFM} & 6 I.R.E.R. & Moody's Aaa corporate bond minus federal funds rate \\
				\texttt{BAAFFM} & 6 I.R.E.R. & Moody's Baa corporate bond minus federal funds rate \\
				\texttt{GS1} & 6 I.R.E.R.  & 1-year treasury rate  \\
				\texttt{T10YFMM} & 6 I.R.E.R.  & 10-year treasury constant maturity minus federal funds rate \\
				\texttt{T1YFMM} & 6  I.R.E.R.  & 1-year treasury constant maturity minus federal funds rate  \\
				\texttt{T5YFMM} & 6  I.R.E.R.  & 5-year treasury constant maturity minus federal funds rate  \\
				\texttt{TB3SMFMM} & 6  I.R.E.R.  & 3-month treasury constant maturity minus federal funds rate  \\
				\texttt{TB6SMFMM} & 6  I.R.E.R.  & 6-month treasury constant maturity minus federal funds rate  \\
				\bottomrule				 
			\end{tabular}
			\begin{tablenotes}[flushleft]
				\item[] \textit{Notes}: All variables are from Group $6$ described in \cite{mccracken2016data}. I.R.E.R. is the abbreviation for the group name Interest Rate and Exchange Rates. 
			\end{tablenotes}
		\end{threeparttable}
	\end{center} 
\end{table}

\begin{figure}[htp]
	\centering
	\subfloat[Negative links between $\mathbf{Y}_t$ and $\mathbf{Y}_{t-1}$ 
	\label{fig:circosNEG}]{{\includegraphics[width=0.5\linewidth,scale=1.0]{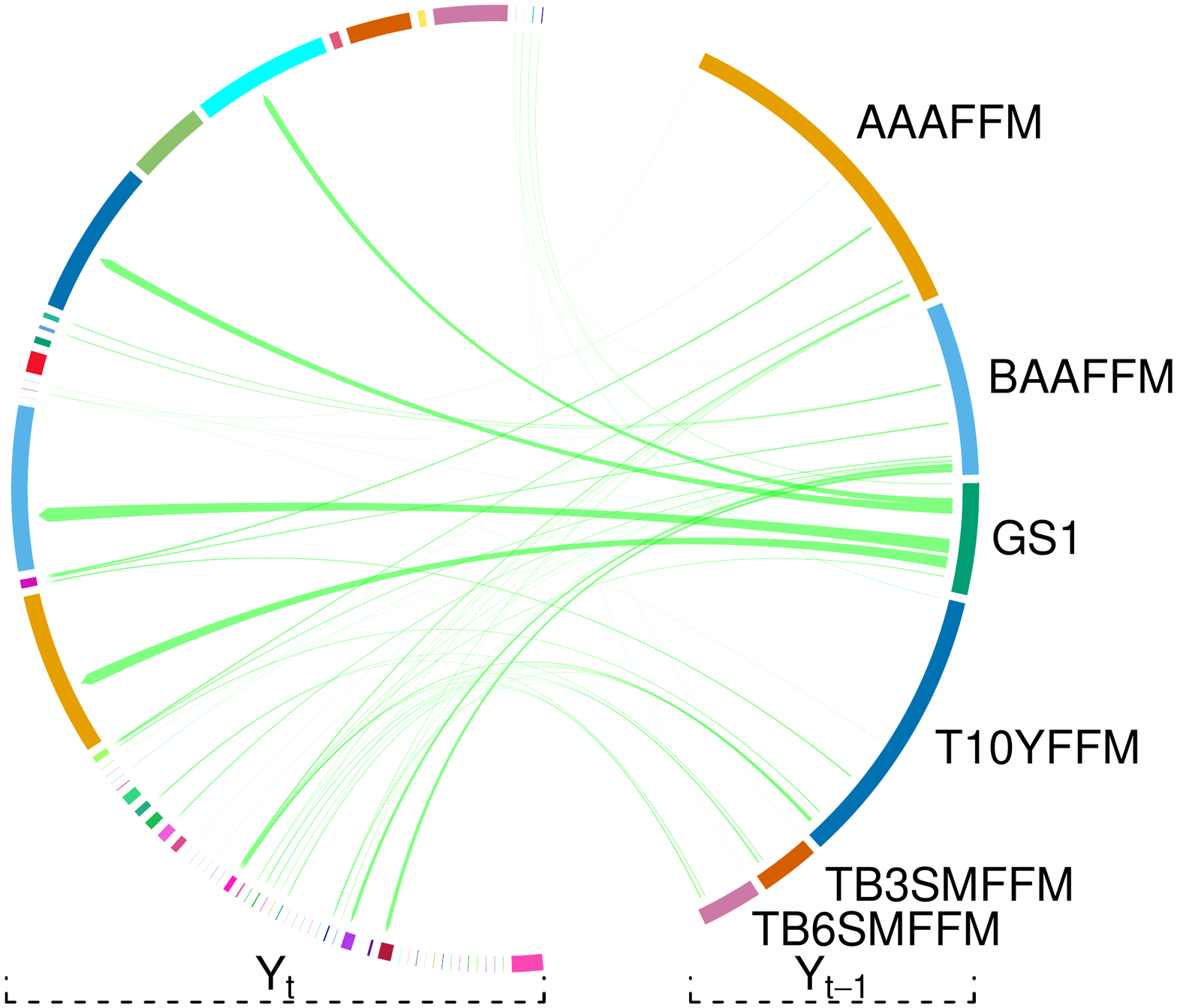} }}%
	\subfloat[Positive links between $\mathbf{Y}_t$ and $\mathbf{Y}_{t-1}$ 
	\label{fig:circosPOS}]{{\includegraphics[width=0.5\linewidth,scale=1.0]{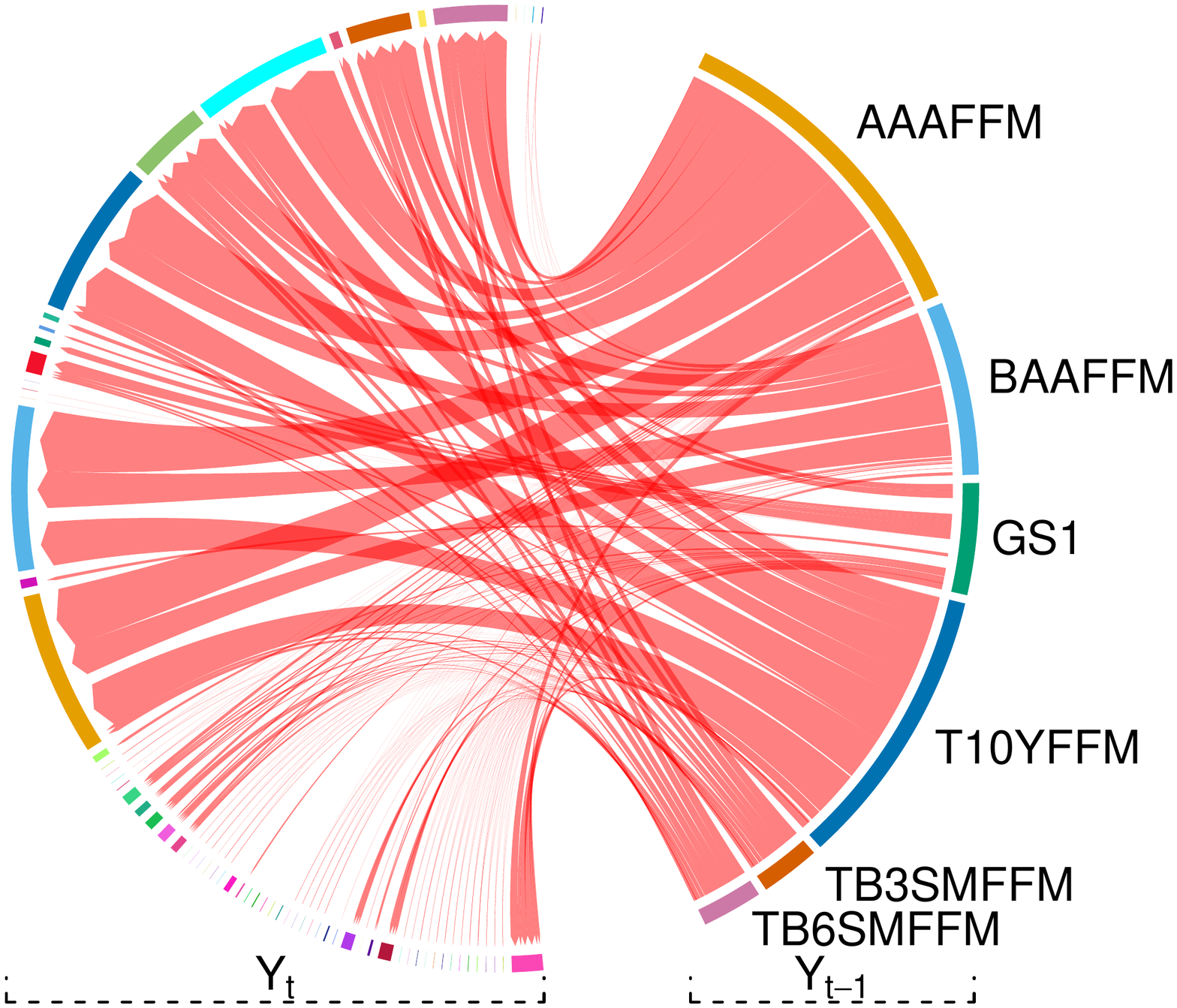} }}%
	\caption{\Cref{fig:circosNEG,fig:circosPOS} describe the negative and positive links in the VAR(1) coefficient matrix from LS-Boost2p, respectively. In each subfigure, the right half circle includes variables in $\mathbf{Y}_{t-1}$ with a nonzero (negative or positive) coefficient on $\mathbf{Y}_{t}$; the left half circle includes variables in $\mathbf{Y}_t$ that the $6$ variables affect. Green and red links indicate negative and positive coefficient estimates, respectively. There are a total of $411$ links, equivalent to $411$ nonzeros reported in \Cref{tab:application_MSPE} for LS-Boost2p.}%
\end{figure}

Out of the $104$ variables, $6$ of them are consistently selected by the boosting procedure. It is important to recall that, for LS-Boost2p, all nonzeros concentrate in the coefficient for $\mathbf{Y}_{t-1}$. Hence, these $6$ variables are the only (statistically significant) source that drives $\mathbf{Y}_t$, not only in VAR(1) but also in higher-order VARs (at least up to oder $4$). Consider \Cref{fig:circosNEG} that plots only the negative coefficients in the VAR(1) model based on LS-Boost2p. The green color for the links indicates a negative coefficient and all links start from the right half circle to the left half circle with an arrow pointing to the left, mimicking the usual setup in a regression where $\mathbf{Y}_{t-1}$ appears on the r.h.s. and $\mathbf{Y}_t$ on the l.h.s. The width of a link is proportional to the absolute value of the corresponding coefficient estimate, and a wider link indicates a negative coefficient with a larger absolute value. The colored sectors have different width. A wider sector on the right half circle suggests a variable has a larger overall influence (sum of absolute values of the coefficients). For example, \texttt{AAAFFM} has the widest sector due to its overall large coefficients. The sector for \texttt{AAAFFM} is only partially filled in \Cref{fig:circosNEG} because most of the corresponding coefficients are positive; the sector will be completely filled if we combine the links in \Cref{fig:circosNEG,fig:circosPOS}.

The links reveals various properties in the data. Consider the \texttt{GS1} sector in \Cref{fig:circosNEG} as an example. \texttt{GS1} is the 1-year treasury rate, a key variable for short-term interest rate. There are four relatively wide arrows leaving the sector \texttt{GS1} and pointing at four variables in $\mathbf{Y}_t$. These four negative links may puzzle a researcher at a first look, but they are largely due to the way variables are defined in \cite{mccracken2016data}. For example, we see \texttt{GS1} has a relatively large, negative impact on \texttt{AAAFFM}, the corporate bond yield spread. We would expect that bond yield will increase as the short-term interest rate increases. A negative link is counter-intuitive. Notice that \texttt{AAAFFM} is defined as corporate bond yield minus the federal funds rate (\texttt{FEDFUNDS}). Using the $758$ observations in the data, we add \texttt{FEDFUNDS} back to \texttt{AAAFFM} and compute its correlation with \texttt{GS1}, and the result is $0.899$, indicating these two time series are positively correlated. Hence, it is important to interpret the result between \texttt{GS1} and \texttt{AAAFFM} in \Cref{fig:circosNEG} as the negative relationship between the short-term interest rate and corporate bond yield \textit{spread}. Similar explanations applies to other links between \texttt{GS1} and other bond yields or long-term interest rates in that figure.

\Cref{fig:circosPOS} describes all positive links between the $6$ variables and $\mathbf{Y}_t$. We observe that most of the $411$ coefficient estimates from LS-Boost2p are positive. The $6$ variables have relatively large autoregressive coefficients, but there are also many positive links to other variables. In a VAR(1) model with $104$ variables, a dense coefficient matrix will have $10,816 (=104 \times 104)$ links. Hence \Cref{fig:circosPOS} is already a very sparse representation of the coefficient matrix. To further help interpret the result, we remove all coefficients that are $< 0.1$ and re-draw the links in \Cref{fig:circosPOSlarge}.

\begin{figure}[htp]
	\centering
	\includegraphics[width=0.8\linewidth]{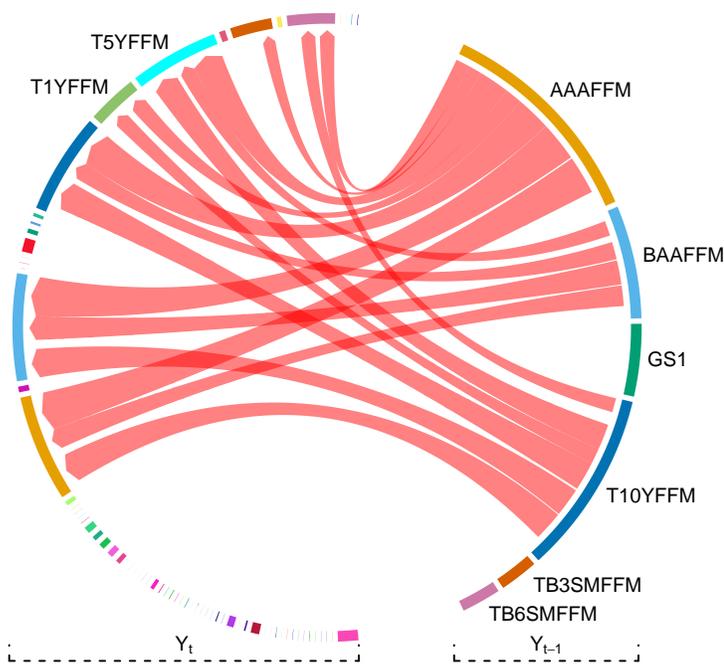}
	\caption{Large positive links with coefficient estimates $\geq 0.1$ in \Cref{fig:circosPOS}.}
	\label{fig:circosPOSlarge}
\end{figure}

\Cref{fig:circosPOSlarge} is now even more interpretable. It clearly reveals that a risk premium such as \texttt{AAAFFM}, \texttt{BAAFFM}, and \texttt{T10YFFM} has a highly persistent pattern with its own lag, and their impact on other variables is also obvious. The dynamic feedback mechanism is asymmetric in a VAR. For example, \texttt{T10YFFM} has a positive effect on \texttt{T1YFFM} but not vice versa. Although we selectively choose positive coefficients $\geq 0.1$, one should also take into consideration the scale of the data when measuring the impact of a lag variable.

\Cref{fig:circosNEG,fig:circosPOS,fig:circosPOSlarge} illustrate many important economic relationships uncovered from high-dimensional VAR analysis. The \textit{p}-value approach significantly shrinks the model size and keep the model stable in high dimension. It provides a valuable tool for empirical researcher to investigate large data sets.

\subsection{Sequential testing}


\Cref{sec:discussions} points out the sequential nature of our testing procedure. We further illustrate it with a full plot of the $500$ \textit{p}-values for each of the $6$ variables shown in \Cref{fig:circosNEG,fig:circosPOS,fig:circosPOSlarge}.  Our empirical example has $104$ equations. We select the equation for civilian unemployment rate (\texttt{UNRATE}) as an example.

\begin{figure}[t]
	\centering
	\includegraphics[width=0.8\linewidth]{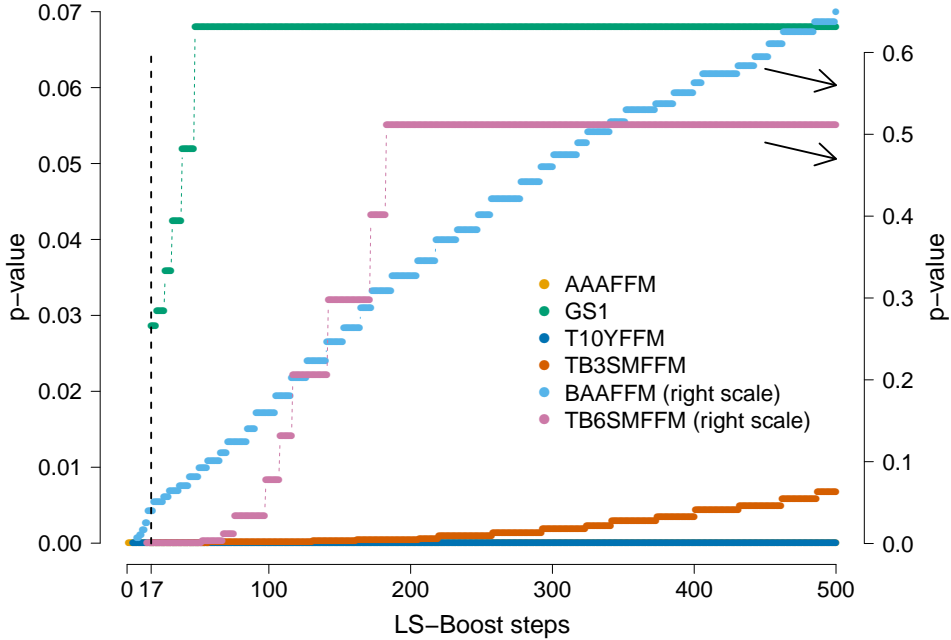}
	\caption{\Cref{fig:plot_pval} plots the entire path of the \textit{p}-values for the $6$ variables in the equation for Civilians Unemployment Rate (\texttt{UNRATE}). Use the right scale to read the \textit{p}-values for $\texttt{BAAFFM}$ and \texttt{TB6SMFFM}. The validation step chooses to stop at step $17$, marked by a vertical dotted line.}
	\label{fig:plot_pval}
\end{figure}

\Cref{fig:plot_pval} plots the path of \textit{p}-values for all $6$ variables. The validation step selects the model at step $17$. Three variables, \texttt{AAAFFM}, \texttt{T10YFFM} and \texttt{TB3SMFFM} have very small \textit{p}-values throughout the $500$ boosting steps. The \textit{p}-values for \texttt{GS1} is also small, but climbed above the $5\%$ cutoff after around step $50$, but is below $5\%$ at step $17$. The \textit{p}-values for \texttt{BAAFFM} and \texttt{TB6SMFFM}, however, become very large in later steps and can exceed $50\%$. Had the validation procedure stopped at, say, step $300$, both variables would have been excluded from the model.

No Bonferroni correction is used at each step when excluding variables. The Bonferroni correction will give very conservative results and will likely yield an even sparser model. Whether one should use such adjustment is still an open question in practice.



\section{Conclusions}

 This paper proposes a \textit{p}-value for the LS-Boost algorithm in high-dimensional VARs that adapts to its iterative nature. Our \textit{p}-value can be used for incremental hypothesis testing that tests the statistical significance of every selected parameter at each boosting step. We derive the asymptotic distribution for the estimator for a given boosting step and also discuss its convergence when the boosting step is large. Simulation results indicate a \textit{p}-value-adjusted model can improve the FPR and F score of a model produced by LS-Boost. The application further reveals that the proposed \textit{p}-value can help control both model size and stability, making model interpretation relatively easy in  high dimensions.
 
 LS-Boost is a classical regression algorithm and our results complement its wide use in practice. The proposed \textit{p}-value is conceptually simple and technically practical, and our R package \texttt{boostvar} also makes it computationally easy. In addition, when both sample size and boosting step are large, we show the \textit{p}-value can converge to that of LS estimator in both VAR and cross-section regression (see tables in the online supplement), providing an implicit computational guarantee for the \textit{p}-value.
 
It will be interesting to design a procedure to properly control the FDR based on the sequential \textit{p}-values in a VAR. In addition, more applications in both cross-section regression and VAR are needed to further study the property of the \textit{p}-value. We leave these topics for future work.

\section*{Acknowledgments}

The author thanks the Department of Economics, Finance, and Quantitative Analysis for financial support and the Office of Research at Kennesaw State University for computation support.

\newpage
\spacing{1.45}
\bibliographystyle{ecca}
\bibliography{reference}

\newpage
\setcounter{page}{1}
\spacing{1.42}

\begin{appendices}
{
	
	\centering \title{\large\MakeUppercase{Supplementary Material to ``Boosted p-Values for High-Dimensional Vector Autoregression"}\footnote{Email: xhuang3@kennesaw.edu}\\[10pt]} }
	\begin{center}
		\large
		\author{\textsc{Xiao Huang}}\\
		\date{\today}
	\end{center}
	\maketitle
    
\bigskip    
    
This supplement contains all proofs, additional discussions, and tables

\setstretch{2}
\bigskip

\setstretch{1.3}
\localtableofcontents
\setstretch{2}

\newpage 
\setstretch{1.35}
\setcounter{section}{19}
\setcounter{equation}{0}
\renewcommand{\theequation}{S.\arabic{equation}}
\subsection{Proofs} \label{supp:asymptotic}
\subsubsection{Proof of \Cref{thm:group boosting asymp}}
\begin{proof}[Proof of \Cref{thm:group boosting asymp}]
	\Cref{eq:phi jk3} offers a starting point to compute the variance of $\hat{\phi}_{(j)}^{(k)\prime}$.
	\begin{equation} \label{eq:phi jk4}
		\hat{\phi}_{(j)}^{(k)\prime} = \tilde{\mathbf{A}}_{j}^{(k)} \mathbf{X}_{(1)} \phi_{(1)}'+ \cdots + \tilde{\mathbf{A}}_{j}^{(k)} \mathbf{X}_{(d)} \phi_{(d)}' + \tilde{\mathbf{A}}_{j}^{(k)} \mathbf{u}.
	\end{equation}
	Consider the term $\tilde{\mathbf{A}}_{j}^{(k)} \mathbf{X}_{(1)} \phi_{(1)}'$.
	 \begin{equation} \label{eq:phi term1}
	 	\tilde{\mathbf{A}}_{j}^{(k)} \mathbf{X}_{(1)} \phi_{(1)}' = \left[\nu \mathbf{A}_j^{(1)} + \nu \mathbf{A}_j^{(2)}(\mathbf{I}_T - \nu \mathbf{H}^{(1)})+ \cdots + \nu \mathbf{A}_j^{(k)}(\mathbf{I}_T - \nu \mathbf{H}^{(k-1)})\cdots(\mathbf{I}_T - \nu \mathbf{H}^{(1)})\right] \mathbf{X}_{(1)} \phi_{(1)}'.
	 \end{equation}
	Assume all $\mathbf{A}_j^{(k)}$ are nonzero. If $\mathbf{A}_j^{(k)} = 0$, the corresponding term will be dropped from the summation. If $\mathbf{A}_j^{(k)} \neq 0$, we can remove the superscript $(k)$ and simply write $\mathbf{A}_j^{(k)} = \mathbf{A}_j $. Consider the last summand.
	\begin{align}
		&\nu \mathbf{A}_j^{(k)}(\mathbf{I}_T - \nu \mathbf{H}^{(k-1)})\cdots(\mathbf{I}_T - \nu \mathbf{H}^{(1)}) \mathbf{X}_{(1)} \phi_{(1)}' \nonumber\\
		&= \nu \mathbf{A}_j^{(k)} \mathbf{X}_{(1)} \phi_{(1)}' + \cdots +  (-\nu)^{k} \mathbf{A}_j^{(k)} \mathbf{H}^{(k-1)} \cdots \mathbf{H}^{(1)} \mathbf{X}_{(1)} \phi_{(1)}' \nonumber\\
		&= \nu \mathbf{A}_j \mathbf{X}_{(1)} \phi_{(1)}' + \cdots +  (-\nu)^{k} \mathbf{A}_j \mathbf{H}^{(k-1)} \cdots \mathbf{H}^{(1)} \mathbf{X}_{(1)} \phi_{(1)}' \nonumber\\
		&= \nu (\mathbf{X}_{(j)}'\mathbf{X}_{(j)})^{-1}\mathbf{X}_{(j)}' \mathbf{X}_{(1)} \phi_{(1)}' + \cdots \nonumber\\
		&\quad + (-\nu)^{k} (\mathbf{X}_{(j)}'\mathbf{X}_{(j)})^{-1}\mathbf{X}_{(j)}' \cdot \mathbf{X}_{(j_{k-1})}(\mathbf{X}_{(j_{k-1})}'\mathbf{X}_{(j_{k-1})})^{-1}\mathbf{X}_{(j_{k-1})}' \cdots \mathbf{X}_{(j_1)}(\mathbf{X}_{(j_1)}'\mathbf{X}_{(j_1)})^{-1}\mathbf{X}_{(j_1)}' \mathbf{X}_{(1)} \phi_{(1)}' \nonumber\\
		&=\nu \left(\frac{\mathbf{X}_{(j)}'\mathbf{X}_{(j)}}{T}\right)^{-1} \left(\frac{\mathbf{X}_{(j)}' \mathbf{X}_{(1)}}{T}\right) \phi_{(1)}' + \cdots + (-\nu)^{k} \left(\frac{\mathbf{X}_{(j)}'\mathbf{X}_{(j)}}{T}\right)^{-1} \frac{\mathbf{X}_{(j)}'  \mathbf{X}_{(j_{k-1})}}{T} \left(\frac{\mathbf{X}_{(j_{k-1})}'\mathbf{X}_{(j_{k-1})}}{T}\right)^{-1} \nonumber\\
		&\quad \cdots \left(\frac{\mathbf{X}_{(j_1)}'\mathbf{X}_{(j_1)}}{T}\right)^{-1}\frac{\mathbf{X}_{(j_1)}' \mathbf{X}_{(1)}}{T} \phi_{(1)}' \nonumber\\
		&\rightarrow \text{some constant} \text{ as } T \rightarrow \infty. \label{eq:phi term1 proof}
	\end{align}
	The last line follows because $\mathbf{X}_{(j)}$ collects $p$ columns of lags for the $j$th variable and the vector process in $\mathbf{X}_{(j)}$ are ergodic for both the first and second moments --- a standard result in time series (see, for example, Propositions 10.2 and 10.5 in \cite{hamilton1994time}). Consequently, the r.h.s. of \cref{eq:phi term1} will be a constant as $T \rightarrow \infty$. Similarly, $\tilde{\mathbf{A}}_{j}^{(k)} \mathbf{X}_{(2)} \phi_{(2)}',\cdots,\tilde{\mathbf{A}}_{j}^{(k)} \mathbf{X}_{(d)} \phi_{(d)}'$ all converge to some constant as $T \rightarrow \infty$. Immediately, we have, for \cref{eq:phi jk4},
	\begin{equation} \label{eq:phi jk5}
		\hat{\phi}_{(j)}^{(k)\prime} = c_{(j)}^{(k)} + \tilde{\mathbf{A}}_{j}^{(k)} \mathbf{u} + o_p(1),
	\end{equation}
	where $c_{(j)}^{(k)}$ is a constant matrix with
	\begin{equation} \label{eq:phi jk5 cjk term}
		\tilde{\mathbf{A}}_{j}^{(k)} \mathbf{X}_{(1)} \phi_{(1)}'+ \cdots + \tilde{\mathbf{A}}_{j}^{(k)} \mathbf{X}_{(d)} \phi_{(d)}' \rightarrow c_{(j)}^{(k)}.
	\end{equation}
	\Cref{eq:phi jk5} greatly simplifies the computation of $\text{Var}(\hat{\phi}_{(j)}^{(k)\prime})$ and allows us to focus on the last term $\tilde{\mathbf{A}}_{j}^{(k)} \mathbf{u}$.
	
	Next, omitting the $o_p(1)$ term in \cref{eq:phi jk5}, we write
	\begin{align*}
		&\hat{\phi}_{(j)}^{(k)\prime} - c_{(j)}^{(k)} = \nu \mathbf{A}_j^{(1)} \mathbf{u}\\
		&\quad +\nu \mathbf{A}_j^{(2)} (\mathbf{I} - \nu \mathbf{H}^{(1)}) \mathbf{u}\\
		&\quad +\nu \mathbf{A}_j^{(3)} (\mathbf{I} - \nu \mathbf{H}^{(2)}) (\mathbf{I} - \nu \mathbf{H}^{(1)}) \mathbf{u}\\
		&\quad +\cdots\\
		&\quad + \nu \mathbf{A}_j^{(k)} (\mathbf{I} - \nu \mathbf{H}^{(k-1)}) \cdots (\mathbf{I} - \nu \mathbf{H}^{(1)}) \mathbf{u}\\
		&= \nu (\mathbf{X}_{(j_1)}'\mathbf{X}_{(j_1)})^{-1}\mathbf{X}_{(j_1)}' \mathbf{u} \\
		&\quad + \nu (\mathbf{X}_{(j_2)}'\mathbf{X}_{(j_2)})^{-1}\mathbf{X}_{(j_2)}'(\mathbf{I} - \nu \mathbf{H}^{(j_1)}) \mathbf{u}\\
		&\quad +\nu (\mathbf{X}_{(j_3)}'\mathbf{X}_{(j_3)})^{-1}\mathbf{X}_{(j_3)}'(\mathbf{I} - \nu \mathbf{H}^{(j_2)}) (\mathbf{I} - \nu \mathbf{H}^{(j_1)}) \mathbf{u}\\
		&\quad \cdots \\
		&\quad +\nu (\mathbf{X}_{(j_{k})}'\mathbf{X}_{(j_{k})})^{-1}\mathbf{X}_{(j_{k})}'(\mathbf{I} - \nu \mathbf{H}^{(j_{k-1})}) \cdots (\mathbf{I} - \nu \mathbf{H}^{(j_1)}) \mathbf{u}\\
		&= \nu (\mathbf{X}_{(j_1)}'\mathbf{X}_{(j_1)})^{-1}\mathbf{X}_{(j_1)}' \mathbf{u}\\
		&\quad + \nu (\mathbf{X}_{(j_2)}'\mathbf{X}_{(j_2)})^{-1}\mathbf{X}_{(j_2)}'(\mathbf{I} - \nu \mathbf{X}_{(j_1)} (\mathbf{X}_{(j_1)}'\mathbf{X}_{(j_1)})^{-1}\mathbf{X}_{(j_1)}') \mathbf{u}\\
		&\quad +\nu (\mathbf{X}_{(j_3)}'\mathbf{X}_{(j_3)})^{-1}\mathbf{X}_{(j_3)}'(\mathbf{I} - \nu \mathbf{X}_{(j_2)} (\mathbf{X}_{(j_2)}'\mathbf{X}_{(j_2)})^{-1}\mathbf{X}_{(j_2)}') (\mathbf{I} - \nu \mathbf{X}_{(j_1)} (\mathbf{X}_{(j_1)}'\mathbf{X}_{(j_1)})^{-1}\mathbf{X}_{(j_1)}') \mathbf{u}\\
		&\quad \cdots\\
	&\quad +\nu (\mathbf{X}_{(j_k)}'\mathbf{X}_{(j_k)})^{-1}\mathbf{X}_{(j_k)}'(\mathbf{I} - \nu \mathbf{X}_{(j_{k-1})} (\mathbf{X}_{(j_{k-1})}'\mathbf{X}_{(j_{k-1})})^{-1}\mathbf{X}_{(j_{k-1})}')\cdots (\mathbf{I} - \nu \mathbf{X}_{(j_1)} (\mathbf{X}_{(j_1)}'\mathbf{X}_{(j_1)})^{-1}\mathbf{X}_{(j_1)}') \mathbf{u}
	\end{align*}
	
	Again, as $T \rightarrow \infty$, most of the cross product matrices converge to a constant when divided by $T$. As a result, we can write
	\begin{align} 
		\hat{\phi}_{(j)}^{(k)\prime} - c_{(j)}^{(k)} &=c_{jk,j_{1}1} \frac{\mathbf{X}_{(j_{1})}' \mathbf{u}}{T} \nonumber \\
		&+c_{jk,j_{2}1}\frac{\mathbf{X}_{(j_{2})}' \mathbf{u}}{T} + c_{jk,j_{1}2}\frac{\mathbf{X}_{(j_{1})}' \mathbf{u}}{T} \nonumber \\
		&+c_{jk,j_{3}1}\frac{\mathbf{X}_{(j_{3})}' \mathbf{u}}{T} + c_{jk,j_{2}2}\frac{\mathbf{X}_{(j_{2})}' \mathbf{u}}{T} + c_{jk,j_{1}3}\frac{\mathbf{X}_{(j_{1})}' \mathbf{u}}{T} \nonumber \\
		&+ \cdots \nonumber \\
		&+c_{jk,j_{k}1}\frac{\mathbf{X}_{(j_{k})}' \mathbf{u}}{T} + c_{jk,j_{k-1}2}\frac{\mathbf{X}_{(j_{k-1})}' \mathbf{u}}{T} + \cdots + c_{jk,j_{1}k}\frac{\mathbf{X}_{(j_{1})}' \mathbf{u}}{T} \nonumber \\
		&= c_{jk,j_1}\frac{\mathbf{X}_{(j_{1})}' \mathbf{u}}{T} + c_{jk,j_2}\frac{\mathbf{X}_{(j_{2})}' \mathbf{u}}{T} + \cdots + c_{jk,j_k} \frac{\mathbf{X}_{(j_{k})}' \mathbf{u}}{T}, \label{eq:phi jk6}
	\end{align}
	where
	\begin{align*}
		c_{jk,j_1} &= c_{jk,j_{1}1} + \cdots + c_{jk,j_{1}(k-1)} + c_{jk,j_{1}k},\\
		c_{jk,j_2} &= c_{jk,j_{2}1} + \cdots + c_{jk,j_{2}(k-1)},\\
		\vdots\\
		c_{jk,j_k} &= c_{jk,j_{k}1}.
	\end{align*}
	Rewrite the $p \times T$ matrices $\mathbf{X}_{(j_{1})}',\cdots,\mathbf{X}_{(j_{k})}'$ in a column format so that, for example,
	\begin{equation*}
		\mathbf{X}_{(j_{1})}' = \left[\mathbf{X}_{(j_{1}),1}',\cdots,\mathbf{X}_{(j_{1}),t}',\cdots,\mathbf{X}_{(j_{1}),T}'\right],
	\end{equation*}
	and \cref{eq:phi jk6} becomes
	\begin{equation}
		\hat{\phi}_{(j)}^{(k)\prime} - c_{(j)}^{(k)} = \sum_{t=1}^{T} \left(c_{jk,j_1}\mathbf{X}_{(j_{1}),t}' + \cdots + c_{jk,j_k}\mathbf{X}_{(j_{k}),t}' \right) u_t /T,
	\end{equation}
	where $u_t$ is the $t$th row of $\mathbf{u}$ and is a $1 \times d$ row vector.
	
	Vectorization gives
	\begin{equation}
		\text{vec}(\hat{\phi}_{(j)}^{(k)\prime} - c_{(j)}^{(k)}) = \begin{bmatrix}
		\frac{1}{T}\sum_{t=1}^{T} \left(c_{jk,j_1} \mathbf{X}_{(j_{1}),t}' + \cdots + c_{jk,j_k} \mathbf{X}_{(j_{k}),t}' \right) u_{t1}\\
		\vdots \\
		\frac{1}{T}\sum_{t=1}^{T} \left(c_{jk,j_1} \mathbf{X}_{(j_{1}),t}' + \cdots + c_{jk,j_k} \mathbf{X}_{(j_{k}),t}' \right) u_{td}
		\end{bmatrix}_{dp \times 1}.
	\end{equation}
	Define
	\begin{equation*}
		\xi_t = \begin{bmatrix}
		\left(c_{jk,j_1} \mathbf{X}_{(j_{1}),t}' + \cdots + c_{jk,j_k} \mathbf{X}_{(j_{k}),t}' \right) u_{t1}\\
		\vdots \\
		\left(c_{jk,j_1} \mathbf{X}_{(j_{1}),t}' + \cdots + c_{jk,j_k} \mathbf{X}_{(j_{k}),t}' \right) u_{td}
		\end{bmatrix}
	\end{equation*}
	and we have
	\begin{equation}
			\text{vec}(\hat{\phi}_{(j)}^{(k)\prime} - c_{(j)}^{(k)}) = \frac{1}{T} \sum_{t=1}^{T} \xi_t.
	\end{equation}
	Next, we compute the first two moments of $\xi_t$. First we show $E(\xi_t) = \mathbf{0}$. To see that, it will be helpful to show, for example, $E(\mathbf{X}_{(j_{1}),t}' u_{t1}) = 0$. Given the explicit expression for $\mathbf{X}_{(j_{1})}'$,
	\begin{equation*}
		\mathbf{X}_{(j_{1})}' = \begin{bmatrix}
		y_{j_1,1-1} &\cdots &y_{j_1,t-1} &\cdots &y_{j_1,T-1}\\
		y_{j_1,1-2} &\cdots &y_{j_1,t-2} &\cdots &y_{j-1,T-2}\\
		\vdots      &\vdots &\vdots      &\vdots &\vdots\\
		y_{j_1,1-p} &\cdots &y_{j_1,t-p} &\cdots &y_{j-1,T-p}
		\end{bmatrix}_{p \times T},
	\end{equation*}
	the time subscripts of elements in its $t$th column are all less than $t$, which proves that $E(\mathbf{X}_{(j_{1}),t}' u_{t1}) = 0$. As a result, we have $E(\xi_t) = \mathbf{0}$. On the other hand, let 
	\begin{equation*}
		a_{(j),t}^{(k)} = c_{jk,j_1} \mathbf{X}_{(j_{1}),t}' + \cdots + c_{jk,j_k} \mathbf{X}_{(j_{k}),t}', 
	\end{equation*}
	and the second moment is
	\begin{align*}
		E\left(\xi_t \xi_t'\right) &= \begin{bmatrix}
		E\left(a_{(j),t}^{(k)}a_{(j),t}^{(k)\prime} \right)E\left(u_{t_1}^2\right) &\cdots &E\left(a_{(j),t}^{(k)}a_{(j),t}^{(k)\prime} \right)E\left(u_{t_1} u_{t_d}\right)\\
		\vdots &\vdots &\vdots\\
		E\left(a_{(j),t}^{(k)}a_{(j),t}^{(k)\prime} \right)E\left(u_{t_1} u_{t_d}\right) &\cdots &E\left(a_{(j),t}^{(k)}a_{(j),t}^{(k)\prime} \right)E\left(u_{t_d}^2\right)
		\end{bmatrix}\\
		&= \Omega^{(k)} \otimes \mathbf{Q}_{(j)}^{(k)},
	\end{align*}
	where
	\begin{equation} \label{eq:Omega and Q}
		\Omega^{(k)} = E\left(u_t^{(k)\prime} u_t^{(k)}\right), \mathbf{Q}_{(j)}^{(k)} = E\left(a_{(j),t}^{(k)}a_{(j),t}^{(k)\prime}\right) = \sum_{q_1 = 1}^{k} \sum_{q_2 = 1}^{k} c_{jk,j_{q_1}}E\left(\mathbf{X}_{(j_{q_1}),t}'\mathbf{X}_{(j_{q_2}),t}\right)c_{jk,j_{q_2}}',
	\end{equation}
	and $u_t^{(k)}$ is the residual at boosting step $k$.
	
    By the Law of Large Numbers for martingale difference sequence, we obtain
    \begin{equation}
	    	\frac{1}{T} \sum_{t=1}^{T} \xi_t \xi_t' \rightarrow \Omega^{(k)} \otimes \mathbf{Q}_{(j)}^{(k)}.
    \end{equation}
	The Central Limit Theory for martingale difference sequence (see Proposition 7.9 in \cite{hamilton1994time}) directly gives
	\begin{equation}
		\frac{1}{\sqrt{T}} \sum_{t=1}^{T} \xi_t \rightarrow N \left(\mathbf{0}, \Omega^{(k)} \otimes \mathbf{Q}_{(j)}^{(k)} \right).
	\end{equation}

	
\end{proof}

\subsubsection{Proof of \Cref{cor:boost_asymp}}
\begin{proof}[Proof of \Cref{cor:boost_asymp}]
	The proof follows the same steps in the proof of \Cref{thm:group boosting asymp}. By adapting the derivation for \cref{eq:phi jk5}, we have
	\begin{equation} \label{eq:phi jk5 js}
	\hat{\phi}_{(j)s}^{(k)\prime} = c_{(j)s}^{(k)} + \tilde{\mathbf{A}}_{js}^{(k)} \mathbf{u} + o_p(1),
	\end{equation}
	where $c_{(j)s}^{(k)}$ is a $1 \times d$ constant vector with
		\begin{equation}
	\tilde{\mathbf{A}}_{js}^{(k)} \mathbf{X}_{(1)s} \phi_{(1)s}'+ \cdots + \tilde{\mathbf{A}}_{js}^{(k)} \mathbf{X}_{(d)s} \phi_{(d)}' \rightarrow c_{(j)s}^{(k)} \text{ as } T \rightarrow \infty.
	\end{equation}
	
	Next, omitting the $o_p(1)$ term in \cref{eq:phi jk5 js}, we write
	\begin{align*}
	&\hat{\phi}_{(j)s}^{(k)\prime} - c_{(j)s}^{(k)} = \nu \mathbf{A}_{js}^{(1)} \mathbf{u}\\
	&\quad +\nu \mathbf{A}_j^{(2)} (\mathbf{I} - \nu \mathbf{H}^{(1)}) \mathbf{u}\\
	&\quad +\nu \mathbf{A}_j^{(3)} (\mathbf{I} - \nu \mathbf{H}^{(2)}) (\mathbf{I} - \nu \mathbf{H}^{(1)}) \mathbf{u}\\
	&\quad +\cdots\\
	&\quad + \nu \mathbf{A}_j^{(k)} (\mathbf{I} - \nu \mathbf{H}^{(k-1)}) \cdots (\mathbf{I} - \nu \mathbf{H}^{(1)}) \mathbf{u}\\
	&= \nu (\mathbf{X}_{(j_1)s_1}'\mathbf{X}_{(j_1)s_1})^{-1}\mathbf{X}_{(j_1)s_1}' \mathbf{u} \\
	&\quad + \nu (\mathbf{X}_{(j_2)s_2}'\mathbf{X}_{(j_2)s_2})^{-1}\mathbf{X}_{(j_2)s_2}'(\mathbf{I} - \nu \mathbf{H}^{(j_1)}) \mathbf{u}\\
	&\quad +\nu (\mathbf{X}_{(j_3)s_3}'\mathbf{X}_{(j_3)s_3})^{-1}\mathbf{X}_{(j_3)s_3}'(\mathbf{I} - \nu \mathbf{H}^{(j_2)}) (\mathbf{I} - \nu \mathbf{H}^{(j_1)}) \mathbf{u}\\
	&\quad +\cdots \\
	&\quad +\nu (\mathbf{X}_{(j_{k}s_k)}'\mathbf{X}_{(j_{k}s_k)})^{-1}\mathbf{X}_{(j_{k}s_k)}'(\mathbf{I} - \nu \mathbf{H}^{(j_{k-1})}) \cdots (\mathbf{I} - \nu \mathbf{H}^{(j_1)}) \mathbf{u}\\
	&= \nu (\mathbf{X}_{(j_1)s_1}'\mathbf{X}_{(j_1)s_1})^{-1}\mathbf{X}_{(j_1)s_1}' \mathbf{u}\\
	&\quad + \nu (\mathbf{X}_{(j_2)s_2}'\mathbf{X}_{(j_2)s_2})^{-1}\mathbf{X}_{(j_2)s_2}'(\mathbf{I} - \nu \mathbf{X}_{(j_1)s_1} (\mathbf{X}_{(j_1)s_1}'\mathbf{X}_{(j_1)s_1})^{-1}\mathbf{X}_{(j_1)s_1}') \mathbf{u}\\
	&\quad +\nu (\mathbf{X}_{(j_3)s_3}'\mathbf{X}_{(j_3)s_3})^{-1}\mathbf{X}_{(j_3)s_3}'(\mathbf{I} - \nu \mathbf{X}_{(j_2)s_2} (\mathbf{X}_{(j_2)s_2}'\mathbf{X}_{(j_2)s_2})^{-1}\mathbf{X}_{(j_2)s_2}')\\
	 &\qquad \cdot (\mathbf{I} - \nu \mathbf{X}_{(j_1)s_1} (\mathbf{X}_{(j_1)s_1}'\mathbf{X}_{(j_1)s_1})^{-1}\mathbf{X}_{(j_1)s_1}') \mathbf{u}\\
	&\quad +\cdots\\
	&\quad +\nu (\mathbf{X}_{(j_k)s_k}'\mathbf{X}_{(j_k)s_k})^{-1}\mathbf{X}_{(j_k)s_k}'(\mathbf{I} - \nu \mathbf{X}_{(j_{k-1})s_{k-1}} (\mathbf{X}_{(j_{k-1}s_{k-1})}'\mathbf{X}_{(j_{k-1}s_{k-1})})^{-1}\mathbf{X}_{(j_{k-1}s_{k-1})}')\cdots \\
	&\qquad \cdot (\mathbf{I} - \nu \mathbf{X}_{(j_1)s_1} (\mathbf{X}_{(j_1)s_1}'\mathbf{X}_{(j_1)s_1})^{-1}\mathbf{X}_{(j_1)s_1}') \mathbf{u}
	\end{align*}

	Following the derivation in \cref{eq:phi jk6}, we have
	\begin{align} 
	\hat{\phi}_{(j)s}^{(k)\prime} - c_{(j)s}^{(k)} 
	&= c_{jk,j_1s_1}\frac{\mathbf{X}_{(j_{1})s_1}' \mathbf{u}}{T} + c_{jk,j_2s_2}\frac{\mathbf{X}_{(j_{2})s_2}' \mathbf{u}}{T} + \cdots + c{jk,j_ks_k} \frac{\mathbf{X}_{(j_{k})s_k}' \mathbf{u}}{T}, \label{eq:phi jk6 js}
	\end{align}
	where
	\begin{align*}
	c_{jk,j_1s_1} &= c_{jk,j_{1}s_11} + \cdots + c_{jk,j_{1}s_1(k-1)} + c_{jk,j_{1}s_1k},\\
	c_{jk,j_2s_2} &= c_{jk,j_{2}s_21} + \cdots + c_{jk,j_{2}s_2(k-1)},\\
	\vdots\\
	c_{jk,j_ks_k} &= c_{jk,j_{k}s_k1}.
	\end{align*}
	Hence, we have
	\begin{equation}
		\hat{\phi}_{(j)s}^{(k)\prime} - c_{(j)s}^{(k)} = \sum_{t=1}^{T} \left(c_{jk,j_1s_1}\mathbf{X}_{(j_{1})s_1,t}' + \cdots + c_{jk,j_ks_k}\mathbf{X}_{(j_{k})s_k,t}' \right) u_t / T,
	\end{equation}
	where $\mathbf{X}_{(j_{k})s_k,t}'$ is the element on the $s_k$th row and $t$th column of the matrix $\mathbf{X}'$.
		Rewrite the above equation as
	\begin{equation}
	\hat{\phi}_{(j)s}^{(k)} - c_{(j)s}^{(k)\prime} = \begin{bmatrix}
	\frac{1}{T}\sum_{t=1}^{T} \left(c_{jk,j_1s_1}\mathbf{X}_{(j_{1})s_1,t}' + \cdots + c_{jk,j_ks_k}\mathbf{X}_{(j_{k})s_k,t}' \right) u_{t1}\\
	\vdots \\
	\frac{1}{T}\sum_{t=1}^{T} \left(c_{jk,j_1s_1}\mathbf{X}_{(j_{1})s_1,t}' + \cdots + c_{jk,j_ks_k}\mathbf{X}_{(j_{k})s_k,t}' \right) u_{td}
	\end{bmatrix}_{dp \times 1}.
	\end{equation}
	
	Define
	\begin{equation*}
	\xi_t = \begin{bmatrix}
	\left(c_{jk,j_1s_1}\mathbf{X}_{(j_{1})s_1,t}' + \cdots + c_{jk,j_ks_k}\mathbf{X}_{(j_{k})s_k,t}' \right) u_{t1}\\
	\vdots \\
	\left(c_{jk,j_1s_1}\mathbf{X}_{(j_{1})s_1,t}' + \cdots + c_{jk,j_ks_k}\mathbf{X}_{(j_{k})s_k,t}' \right) u_{td}
	\end{bmatrix}
	\end{equation*}
	and we have
	\begin{equation}
	\hat{\phi}_{(j)}^{(k)} - c_{(j)}^{(k)\prime} = \frac{1}{T} \sum_{t=1}^{T} \xi_t.
	\end{equation}
	Similar to the proof of \Cref{thm:group boosting asymp}, let
	\begin{equation*}
	a_{(j)s,t}^{(k)} = c_{jk,j_1s_1} \mathbf{X}_{(j_{1})s_1,t}' + \cdots + c_{jk,j_ks_k} \mathbf{X}_{(j_{k})s_k,t}'.
	\end{equation*}
	The first moment of  $\xi_t$ is zero and the second moment of $\xi_t$ is given by
	\begin{equation*}
	E\left(\xi_t \xi_t'\right) = \Omega^{(k)} \cdot \mathbf{Q}_{(j)s}^{(k)},
	\end{equation*}
	with $\mathbf{Q}_{(j)s}^{(k)} = E\left(a_{(j)s,t}^{(k)}a_{(j)s,t}^{(k)\prime}\right) = \sum_{q_1 = 1}^{k} \sum_{q_2 = 1}^{k} c_{jk,j_{q_1}s_{q-1}}E\left(\mathbf{X}_{(j_{q_1})s_{q_1},t}'\mathbf{X}_{(j_{q_2})s_{q_2},t}\right)c_{jk,j_{q_2}s_{q_2}}'.$ And $\mathbf{Q}_{(j)s}^{(k)}$ is a scalar.
	
	Again, use the Central Limit Theory for martingale difference sequence (see Proposition 7.9 in \cite{hamilton1994time}) to obtain
	\begin{equation}
	\frac{1}{\sqrt{T}} \sum_{t=1}^{T} \xi_t \rightarrow N \left(\mathbf{0}, \Omega^{(k)} \cdot \mathbf{Q}_{(j)s}^{(k)} \right).
	\end{equation}
\end{proof}

\subsubsection{Lemma 1}
Let $\tr$ be the matrix trace operator. Consider the convex quadratic optimization problem
\begin{equation} \label{eq:quadratic}
	h^* = \min_{x \in R^{pd} \times R^{d}} h(x):=\frac{1}{2}\tr(x'Qx) + \tr(q'x) + q^0,
\end{equation}
where $Q$ is a $pd \times pd$ symmetric, positive semi-definite matrix, $q$ is a $pd \times d$ matrix, and we can set the constant $q^0=0$ without loss of generality. Let $\lambda_{\text{pmin}}(Q)$ be the smallest non-zero eigenvalue of $Q$.
\begin{lemma} \label{lemma:1}
	If $h^* > -\infty$, then for any $pd \times d$ matrix $x$, there exists an optimal solution of the convex quadratic optimization problem so that
	\begin{equation} \label{eq:QP x result}
		\lVert x - x^* \rVert_2 \leq \sqrt{\frac{2(h(x) - h^*)}{\lambda_{\text{pmin}}(Q)}}
	\end{equation}
	and 
	\begin{equation} \label{eq:QP gradient result}
		\lVert \nabla h(x) \rVert \geq \sqrt{\frac{\lambda_{\text{pmin}}(Q) (h(x) - h^*)}{2}}.
	\end{equation}
\end{lemma}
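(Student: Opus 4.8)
The plan is to reduce \Cref{lemma:1} to an exact second-order Taylor expansion of the quadratic $h$ around a carefully chosen minimizer, followed by a Rayleigh-quotient estimate restricted to the column space of $Q$. Throughout, I would treat $Q$ as a linear map acting columnwise on $pd\times d$ matrices, so that $\ker$ and $\operatorname{range}$ are understood columnwise and the Frobenius inner product $\tr(a'b)$ decomposes as a sum over the $d$ columns.

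First I would record $\nabla h(x) = Qx + q$ and note that, since $Q\succeq 0$ makes $h$ convex, the hypothesis $h^* > -\infty$ forces $q$ to lie (columnwise) in $\operatorname{range}(Q)$: otherwise one can pick $z$ whose columns lie in $\ker(Q)$ with $\tr(q'z)\neq 0$, and then $h(tz)\to-\infty$ as $t\to\pm\infty$. Hence the normal equations $Qx=-q$ are solvable, the set of minimizers is an affine subspace $\mathcal{X}^*$ parallel to $\{z:\text{columns of }z\in\ker(Q)\}$, and every $x^*\in\mathcal{X}^*$ satisfies $\nabla h(x^*)=0$.

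For the given $x$, I would take $x^*$ to be the minimizer in $\mathcal{X}^*$ closest to $x$ in Frobenius norm. Because $Q$ is symmetric, $\operatorname{range}(Q)=\ker(Q)^{\perp}$, so this choice guarantees that every column of $x-x^*$ lies in $\operatorname{range}(Q)$. Since $h$ is quadratic, the Taylor expansion around $x^*$ is exact and, using $\nabla h(x^*)=0$, gives $h(x)-h^* = \tfrac12\tr\!\big((x-x^*)'Q(x-x^*)\big)$. Summing the Rayleigh bound $v'Qv\ge\lambda_{\text{pmin}}(Q)\lVert v\rVert_2^2$ over the columns $v$ of $x-x^*$ (each in $\operatorname{range}(Q)$) yields $h(x)-h^* \ge \tfrac12\lambda_{\text{pmin}}(Q)\lVert x-x^*\rVert_2^2$, which rearranges to \cref{eq:QP x result}. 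For \cref{eq:QP gradient result}, observe that $\nabla h(x) = Qx+q = Q(x-x^*)$, so $\lVert\nabla h(x)\rVert_2^2 = \tr\!\big((x-x^*)'Q^2(x-x^*)\big)$; since $Q^2\succeq\lambda_{\text{pmin}}(Q)\,Q$ on $\operatorname{range}(Q)$ and $x-x^*$ sits there columnwise, the right-hand side is at least $\lambda_{\text{pmin}}(Q)\tr\!\big((x-x^*)'Q(x-x^*)\big) = 2\lambda_{\text{pmin}}(Q)\big(h(x)-h^*\big)$, which is in fact stronger than the stated inequality.

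I expect the only genuine obstacle to be the matrix-versus-vector bookkeeping: one must be disciplined about $Q$ acting columnwise, about splitting $\tr(\cdot'\,\cdot)$ over columns, and — most importantly — about why $\lambda_{\text{pmin}}(Q)$, the smallest \emph{nonzero} eigenvalue, is the correct constant. That last point is precisely the role of choosing $x^*$ as the closest minimizer: it strips off the $\ker(Q)$ component of $x-x^*$, without which both lower bounds would collapse to $0$. The remaining ingredients (the gradient formula, the exact Taylor identity, and the comparison $Q^2\succeq\lambda_{\text{pmin}}(Q)Q$ on the range) are routine linear algebra.
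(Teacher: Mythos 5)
Your proof is correct, and it is worth separating the two halves. For \cref{eq:QP x result} you are doing essentially what the paper does: your ``nearest minimizer'' $x^*$ is exactly the point $(I-PP')x - PD^{-1}P'q$ that the paper constructs from the sparse eigendecomposition $Q=PDP'$ (keeping the $\ker(Q)$ component of $x$ and replacing the range component by the particular solution), and your Taylor-plus-Rayleigh argument $h(x)-h^*=\tfrac12\tr\bigl((x-x^*)'Q(x-x^*)\bigr)\ge\tfrac12\lambda_{\text{pmin}}(Q)\lVert x-x^*\rVert_2^2$ is the coordinate-free version of the paper's explicit computation with $D^{\pm 1/2}$; your preliminary observation that $h^*>-\infty$ forces $q$ into $\operatorname{range}(Q)$ columnwise is a nice addition, since the paper simply posits a solution $\tilde{x}$ of $Qx=-q$. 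For \cref{eq:QP gradient result} you genuinely diverge: the paper combines the convexity inequality $h(x^*)\ge h(x)+\tr\bigl(\nabla h(x)'(x^*-x)\bigr)$ with Cauchy--Schwarz and the already-proved bound on $\lVert x-x^*\rVert_2$, whereas you use the identity $\nabla h(x)=Q(x-x^*)$ together with $Q^2\succeq\lambda_{\text{pmin}}(Q)\,Q$ on $\operatorname{range}(Q)$. Your route is shorter, does not need the first inequality as an input, and yields $\lVert\nabla h(x)\rVert_2^2\ge 2\lambda_{\text{pmin}}(Q)(h(x)-h^*)$, a factor of $4$ better than the stated constant; the paper's route has the (mild) advantage of working for any convex $h$, not just quadratics. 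Either argument suffices for the way the lemma is used in the proof of \Cref{thm:computational bounds}.
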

\begin{proof}
	Consider the gradient
	\begin{align*}
		\nabla h(x) &= \frac{1}{2}(Q'+Q)x + q\\
		&= Qx+q \qquad \text{ because $Q$ is symmetric.}
	\end{align*}
	The first-order condition gives
	\begin{equation} \label{eq:lemma first-order cond}
		\nabla h(x) = Qx+q = \mathbf{0}_{pd \times d} \Rightarrow Qx = -q.
	\end{equation}
	Given the sparse eigendecomposition of $Q$, $Q = PDP'$ with $D$ being the diagonal matrix of non-zero eigenvalues of $Q$ and $P$ is orthonormal, $P'P=I$. Let $\tilde{x}$ one of the many solutions. We can verify that
	\begin{equation*}
		PP'q = -PP'Q\tilde{x} = -PP'PDP'\tilde{x} = -Q\tilde{x}=q.
	\end{equation*}
	Define $\hat{x}=-PD^{-1}P'q$, and it is the optimal solution since $Q\hat{x} = -q$. Substitute $\hat{x}$ into $h(x)$ to obtain the minimum value $h^* = -\frac{1}{2}q'PD^{-1}P'q$.
	Let $x^* = (I - PP')x - PD^{-1}P'q$. Similar to the proof of Proposition A.2.1 in \cite{freundetal2017boosting}, we have 
	\begin{align*}
		\lVert x - x^* \rVert_{2} &= \tr \left[(x'PP'+q'PD^{-1}P')(PP'x + PD^{-1}P'q)\right]\\
		&=\tr \left[(x'PD^{\frac{1}{2}}+q'PD^{-\frac{1}{2}})D^{-\frac{1}{2}}P'PD^{-\frac{1}{2}}(D^{\frac{1}{2}}P'x + D^{-\frac{1}{2}}P'q)\right]\\
		&\leq \frac{1}{\lambda_{\text{pmin}}(Q)}\tr \left[(x'PD^{\frac{1}{2}}+q'PD^{-\frac{1}{2}})(D^{\frac{1}{2}}P'x + D^{-\frac{1}{2}}P'q)\right]\\
		&=\frac{2}{\lambda_{\text{pmin}}(Q)}(h(x) - h(x^*)). \qquad \text{because $h^* = -\frac{1}{2}q'PD^{-1}P'q$}
	\end{align*}
	To prove \cref{eq:QP gradient result}, we use the fact that the trace operator is convex and the usual convex inequality holds.
	\begin{align*}
		h(x^*) &\geq h(x) + \tr(\nabla h(x)' (x^* - x))\\
		&\geq h(x) - \sqrt{\tr(\nabla h(x)'\nabla h(x))} \sqrt{\tr((x^* - x)'(x^* - x))}\\
		&=h(x) - \lVert \nabla h(x) \rVert_{2} \lVert x^* - x\rVert_{2}\\
		&\leq h(x) - \lVert \nabla h(x) \rVert_{2} \sqrt{\frac{2(h(x) - h(x^*))}{\lambda_{\text{pmin}}(Q)}}.
	\end{align*}
	Rearrange the above inequality gives \cref{eq:QP gradient result}.
\end{proof}

\subsubsection{Proof of \Cref{thm:computational bounds}}
Our proof draws idea from \cite{freundetal2017boosting} and we make no claims of originality for the proof. Because of time series dependence in the data and the matrix format of $\mathbf{Y}$, our proof is slightly more complicated than that in \cite{freundetal2017boosting}.

For LS-Boost1, define the following loss function at the parameter value $\bm{\phi}_g$:

\begin{align} \label{eq:boost1_loss}
	L(\bm{\phi}_g) &= \frac{1}{2Td} \lVert \mathbf{Y} - \mathbf{X}_g \bm{\phi}_g \rVert_{2}^2.
\end{align}
\begin{proof}
	Consider the variable selection at step $k$ in \cref{eq:group boost obj}. Let $L(\hat{\bm{\phi}}_g^{(k)})$ denote the loss evaluated at $\hat{\bm{\phi}}_g^{(k)}$. The first-order condition for the $j$th variable and all its lags is
	\begin{align} 
\underset{dp \times  d}{\nabla L(\hat{\bm{\phi}}_g^{(k)})} &=
\begin{bmatrix}
\frac{\partial L(\hat{\bm{\phi}}_g^{(k)})}{\partial \hat{\bm{\phi}}_{(1)}^{(k)\prime}}\\
\vdots\\
\frac{\partial L(\hat{\bm{\phi}}_g^{(k)})}{\partial \hat{\bm{\phi}}_{(j)}^{(k)\prime}}\\
\vdots\\
\frac{\partial L(\hat{\bm{\phi}}_g^{(k)})}{\partial \hat{\bm{\phi}}_{(d)}^{(k)\prime}}
\end{bmatrix}
= \begin{bmatrix}
-\frac{1}{Td} \mathbf{X}_{(1)}' (\mathbf{Y} - \mathbf{X}_g \hat{\bm{\phi}}_g)\\
\vdots\\
-\frac{1}{Td} \mathbf{X}_{(j)}' (\mathbf{Y} - \mathbf{X}_g \hat{\bm{\phi}}_g)\\
\vdots\\
-\frac{1}{Td} \mathbf{X}_{(d)}' (\mathbf{Y} - \mathbf{X}_g \hat{\bm{\phi}}_g)
\end{bmatrix}
=\begin{bmatrix}
-\frac{1}{Td} \mathbf{X}_{(1)}' \hat{\mathbf{R}}^{(k)}\\
\vdots\\
-\frac{1}{Td} \mathbf{X}_{(j)}' \hat{\mathbf{R}}^{(k)}\\
\vdots\\
-\frac{1}{Td} \mathbf{X}_{(d)}' \hat{\mathbf{R}}^{(k)}
\end{bmatrix}\nonumber\\
  &= -\frac{1}{Td} \mathbf{X}_g' \left(\mathbf{Y} - \mathbf{X}_g \hat{\bm{\phi}}_g^{(k)}\right) = -\frac{1}{Td} \mathbf{X}_g' \hat{\mathbf{R}}^{(k)}. \label{eq:gradient_for_loss}
\end{align}

Selecting a variable $j$ that gives the smallest MSE in \cref{eq:group boost obj} is equivalent to selecting a variable with a maximum of $\left\Vert \frac{\partial L(\hat{\bm{\phi}}_g^{(k)})}{\partial \hat{\bm{\phi}}_{(j)}^{(k)\prime}} \right\Vert_{2}$ since $\mathbf{X}_{(j)}$ is normalized with $\mathbf{X}_{(j)}' \mathbf{X}_{(j)} = \mathbf{I}_p$. If we define
	\vspace*{15px}
	\begin{equation} \label{eq:group_norm1}
		Td \left\Vert \nabla L(\hat{\bm{\phi}}_{(j_k)}^{(k)\prime}) \right\Vert_{2,\infty} = \max_{j \in \{1,\cdots,d\}} \left\| \mathbf{X}_{(j)}' \hat{\mathbf{R}}^{(k)}\right\|_{2},
	\end{equation}
	where the $j_{k}$th variable is selected at step $k$, we have
	\begin{equation} \label{eq:group_norm2}
		Td \left\Vert \nabla L(\hat{\bm{\phi}}_{(j_k)}^{(k)\prime}) \right\Vert_{2,\infty} =  \left\lVert \mathbf{X}_{(j_{k})}'\hat{\mathbf{R}}^{(k)} \right\rVert_{2}.
	\end{equation}

	
		From \cref{eq:group R update}, we have
	\begin{align} \label{eq:trace_BAB_term0}
	&L(\hat{\bm{\phi}}_g^{(k+1)}) = \frac{1}{2Td} \lVert \hat{\mathbf{R}}^{(k+1)} \rVert_{2}^2 \nonumber \\
	&=\frac{1}{2Td} \tr\left[\left(\hat{\mathbf{R}}^{(k)} - \nu \mathbf{X}_{(j_k)}(\mathbf{X}_{(j_k)}'\mathbf{X}_{(j_k)})^{-1}\mathbf{X}_{(j_k)}' \hat{\mathbf{R}}^{(k)}  \right)' \left(\hat{\mathbf{R}}^{(k)} - \nu \mathbf{X}_{(j_k)}(\mathbf{X}_{(j_k)}'\mathbf{X}_{(j_k)})^{-1}\mathbf{X}_{(j_k)}' \hat{\mathbf{R}}^{(k)}  \right)\right]\nonumber\\
	&=\frac{1}{2Td} \tr\Big[\hat{\mathbf{R}}^{(k)\prime}\hat{\mathbf{R}}^{(k)} - 2\nu \hat{\mathbf{R}}^{(k)\prime}\mathbf{X}_{(j_k)}(\mathbf{X}_{(j_k)}'\mathbf{X}_{(j_k)})^{-1}\mathbf{X}_{(j_k)}'\hat{\mathbf{R}}^{(k)} \nonumber \\
	&\quad+ \nu^2 \hat{\mathbf{R}}^{(k)\prime}\mathbf{X}_{(j_k)}(\mathbf{X}_{(j_k)}'\mathbf{X}_{(j_k)})^{-1}\mathbf{X}_{(j_k)}'\hat{\mathbf{R}}^{(k)} \Big]\nonumber\\
	&= L(\hat{\bm{\phi}}_g^{(k)}) - \frac{1}{2Td}\nu(2-\nu)\tr\left(\hat{\mathbf{R}}^{(k)\prime}\mathbf{X}_{(j_k)}(\mathbf{X}_{(j_k)}'\mathbf{X}_{(j_k)})^{-1}\mathbf{X}_{(j_k)}'\hat{\mathbf{R}}^{(k)}\right) 
	\end{align}
	The matrix $\mathbf{X}_{(j_k)}(\mathbf{X}_{(j_k)}'\mathbf{X}_{(j_k)})^{-1}\mathbf{X}_{(j_k)}'$ in \cref{eq:trace_BAB_term0} can be rewritten to facilitate the proof. Since  $(\mathbf{X}_{(j_k)}'\mathbf{X}_{(j_k)})^{-1}$ is a $p \times p$ positive definite matrix, we have the decomposition $(\mathbf{X}_{(j_k)}'\mathbf{X}_{(j_k)})^{-1} = P_{j_k} \Lambda_{j_k} P_{j_k}' = \tilde{P}_{j_k} \tilde{P}_{j_k}'$, where both $P_{j_k}$ and $\tilde{P}_{j_k}$ are invertible and $\Lambda_{j_k}$ is a diagonal matrix with all positive eigenvalues of $(\mathbf{X}_{(j_k)}'\mathbf{X}_{(j_k)})^{-1}$. If we define $\tilde{\mathbf{X}}_{(j_k)} = \mathbf{X}_{(j_k)} \tilde{P}_{j_k}$, the trace term in \cref{eq:trace_BAB_term0} becomes $\tr\left(\hat{\mathbf{R}}^{(k)\prime}\tilde{\mathbf{X}}_{(j_k)}\tilde{\mathbf{X}}_{(j_k)}'\hat{\mathbf{R}}^{(k)}\right)$; accordingly, minimize the loss in \cref{eq:trace_BAB_term0} is equivalent to maximizing the trace and selecting the maximum group norm in \cref{eq:group_norm1,eq:group_norm2} when $\mathbf{X}_{(j_k)}$ is replaced with $\tilde{\mathbf{X}}_{(j_k)}$.

	Since $\tilde{\mathbf{X}}_{(j_k)}'\tilde{\mathbf{X}}_{(j_k)} = \mathbf{I}_p$, we have
	\begin{equation} \label{eq:eigenval}
		\lambda_{\text{min}}(\tilde{\mathbf{X}}_{(j_k)}'\tilde{\mathbf{X}}_{(j_k)}) = \lambda_{\text{max}}(\tilde{\mathbf{X}}_{(j_k)}'\tilde{\mathbf{X}}_{(j_k)}) = 1.
	\end{equation} 
	The matrix $\tilde{\mathbf{X}}_{(j_k)}'\tilde{\mathbf{X}}_{(j_k)}$ is a submatrix of $\tilde{\mathbf{X}}_g'\tilde{\mathbf{X}}_g$ with $\tilde{\mathbf{X}}_g = \left[\tilde{\mathbf{X}}_{(1)},\cdots, \tilde{\mathbf{X}}_{(d)}\right]$. The minimum nonzero eigenvalue of $\tilde{\mathbf{X}}_g'\tilde{\mathbf{X}}_g$ is smaller than $\lambda_{\text{min}}(\tilde{\mathbf{X}}_{(j_k)}'\tilde{\mathbf{X}}_{(j_k)})$ by the interlacing eigenvalues result in Theorem 4.3.8 in \cite{hornjohnson1985matrixanalysis}. This establish the condition $\lambda_{\text{pmin}}(\tilde{\mathbf{X}}_g'\tilde{\mathbf{X}}_g) < p$ and we can use the same analysis below equation (2.6) of \cite{freundetal2017boosting} to conclude that the linear convergence rate $\gamma < 1$.
	
	 Using $\tilde{\mathbf{X}}_{(j_k)}$, the parameter becomes $\tilde{P}_{j_k}^{-1} \phi_{(j)}^{(k)}$ but we can always obtain $\phi_{(j)}^{(k)}$ by premultiply it by $\tilde{P}_{j_k}$. For this reason, to simplify the notation, we simply assume all $\mathbf{X}_j$'s are transformed in the following proof and rewrite \cref{eq:trace_BAB_term0} as
	
	\begin{equation} \label{eq:trace_BAB_term}
		L(\hat{\bm{\phi}}_g^{(k+1)}) = L(\hat{\bm{\phi}}_g^{(k)}) - \frac{1}{2Td}\nu(2-\nu)\tr\left(\hat{\mathbf{R}}^{(k)\prime}\mathbf{X}_{(j_k)}\mathbf{X}_{(j_k)}'\hat{\mathbf{R}}^{(k)}\right).
	\end{equation}

	Consider the last matrix trace term in \cref{eq:trace_BAB_term}.
	
		\begin{align} \label{eq:trace_BAB_inequality}
	\tr\left(\hat{\mathbf{R}}^{(k)\prime}\mathbf{X}_{(j_k)}\mathbf{X}_{(j_k)}'\hat{\mathbf{R}}^{(k)}\right) 
	&=  \left\Vert \nabla L(\hat{\bm{\phi}}_{(j_k)}^{(k)\prime}) \right\Vert_{2,\infty}^2 T^2 d^2 \nonumber\\
	&\geq  \frac{1}{d} \left\Vert \nabla L_T(\hat{\bm{\phi}}_g^{(k)}) \right\Vert_{2}^2 T^2 d^2\nonumber \\
	&\geq \frac{T^2 d^2}{d} \frac{\lambda_{\text{pmin}}(\frac{1}{Td}\mathbf{X}_g'\mathbf{X}_g)(L(\hat{\bm{\phi}}_g^{(k)}) - L^*)}{2},
	\end{align}
where the last inequality follow from \Cref{lemma:1} and $L^* = L(\bm{\phi}_{g,\text{LS}})$ is the minimum of the loss function, similar to $h^*$ in \cref{eq:quadratic}.
	Substitute \cref{eq:trace_BAB_inequality} into \cref{eq:trace_BAB_term} and subtracting $L^*$ from both sides yields
	\begin{align}
		L(\hat{\bm{\phi}}^{(k+1)})-L^* &\leq (L(\hat{\bm{\phi}}^{(k)}) - L^*) \left(1-\frac{\nu(2-\nu)\lambda_{\text{pmin}}(\frac{1}{Td}\mathbf{X}_g'\mathbf{X}_g)}{4Td^2} T^2d^2\right)\nonumber\\
		&= (L(\hat{\bm{\phi}}^{(k)}) - L^*) \left(1-\frac{\nu(2-\nu)\lambda_{\text{pmin}}(\mathbf{X}_g'\mathbf{X}_g)}{4d} \right)\nonumber\\
		&=(L(\hat{\bm{\phi}}^{(k)}) - L^*) \gamma,
	\end{align}
	where $\gamma$ is the linear convergence rate and is less than $1$ as discussed above.
	
	The remaining proof follows those in \cite{freundetal2017boosting}. Given $L(\hat{\bm{\phi}}_g^{(0)}) = L(\mathbf{0}) = \frac{1}{2Td}\lVert \mathbf{Y} \rVert_{2}^2$, we have
	\begin{align} 
		L(\hat{\bm{\phi}}_g^{(k)}) - L^* &\leq (L(\hat{\bm{\phi}}_g^{(0)}) - L^*) \gamma^k \nonumber \\  
		&= \left[\frac{1}{2Td}\lVert \mathbf{Y} \rVert_{2}^2  - \frac{1}{2Td}\lVert \mathbf{Y} - \mathbf{X}_g \hat{\bm{\phi}}_{g,\text{LS}}\rVert_{2}^2  \right] \gamma^k \nonumber \\
		&= \frac{1}{2Td}  \left[2\tr\left(\mathbf{Y}' \mathbf{X}_g \hat{\bm{\phi}}_{g,\text{LS}} \right) - \tr\left( \hat{\bm{\phi}}_{g,\text{LS}}' \mathbf{X}_g' \mathbf{X}_g \hat{\bm{\phi}}_{g,\text{LS}} \right) \right] \gamma^k  \nonumber\\
		&= \frac{1}{2Td}  \left[2\tr\left(\hat{\bm{\phi}}_{g,\text{LS}}' \mathbf{X}_g'  \mathbf{X}_g \hat{\bm{\phi}}_{g,\text{LS}} \right) - \tr\left( \hat{\bm{\phi}}_{g,\text{LS}}' \mathbf{X}_g' \mathbf{X}_g \hat{\bm{\phi}}_{g,\text{LS}} \right) \right] \gamma^k \nonumber \\
		&= \frac{1}{2Td} \lVert  \mathbf{X}_g \hat{\bm{\phi}}_{g,\text{LS}}  \rVert_2^2 \gamma^k, \label{eq:log_inequality}
	\end{align}
where we use the first-order condition for the LS estimator 
$ \mathbf{X}_g'\mathbf{Y} = \mathbf{X}_g'\mathbf{X}_g \hat{\bm{\phi}}_{g,\text{LS}}$.	Combing the above result with \cref{eq:QP x result} establishes the result in \cref{eq:coef_bound}
\begin{equation*} 
	\lVert \hat{\bm{\phi}}_g^{(k)} - \hat{\bm{\phi}}^{(k)}_{g,\text{LS}} \rVert_{2} =  \sqrt{\frac{2(L(\hat{\bm{\phi}}_g^{(k)}) - L^*)}{\lambda_{\text{pmin}}(\mathbf{X}_g'\mathbf{X}_g / Td)}} \leq \frac{\lVert \mathbf{X}_g \hat{\bm{\phi}}_{\text{g,LS}} \rVert_2}{\sqrt{\lambda_{\text{pmin}}(\mathbf{X}_g'\mathbf{X}_g)}} \gamma^{k/2}.
\end{equation*}
 Finally, substituting \cref{eq:log_inequality} into the result $\lVert \mathbf{X} \hat{\bm{\phi}}_g^{(k)} - \mathbf{X}\bm{\phi}_{g,\text{LS}}^{(k)}  \rVert_2 = \sqrt{2Td(L(\hat{\bm{\phi}}_g^{(k)}) - L^*)}$ gives \cref{eq:prediction_bound}.
\end{proof}

\subsection{Examples in cross-section regression}

The method to compute \textit{p}-value in a VAR can also be used to compute the \textit{p}-value in a cross-section regression. Our R package \texttt{boostvar} provides a function \texttt{boostls} for both parameter estimation and s.e. and p-value computation in cross-section regressions. We demonstrate the calculation of \textit{p}-values with three commonly used data sets: the diabetes dataset in the R package \texttt{lars}, the prostate cancer data from \url{http://statweb.stanford.edu/~tibs/ElemStatLearn/datasets/prostate.data}, and the red wine quality data from the UCI database at \url{https://archive.ics.uci.edu/ml/datasets/wine+Quality}. In all three cases, we report the estimate, s.e. and \textit{p}-value of the LS method, LS-Boost model selected by AIC, and LS-Boost model with $2,000$ iteration steps. The learning rate is $0.1$. We rank the variables based on the \textit{p}-values from the LS solution.  The diabetes data set has 62 variables and we report only $ 10 $ of them.

Rigorously speaking, the \textit{p}-value for the AIC-selected boosting model should be adjusted for additional uncertainty, as is discussed in \Cref{sec:discussions}. We simply use the AIC-selected model for illustration purposes.  The large number of iterations $ 2,000 $ is used to show the convergence of the LS-Boost s.e. and \textit{p}-value to those of the LS estimator. We provide a brief analysis for each data set in the following.

In \Cref{tab:diabetes_table}, the \textit{p}-value is very small for the first few variables, consistent with the result of LS method. However, it starts to deviate from LS result in several other variables. In particular, the coefficient for \texttt{tc.tch} becomes significant when $ k = 2,000 $. More surprisingly, we see the sign can also change. For example, $ \hat{\beta}_{\text{hdl.tch}}  = 1188.409$ in the LS but it is $ -100.177 $ in LS-Boost and significant. A quick calculation shows that the correlation between the dependent variable and \texttt{hdl.tch} is $ -0.00369 $, suggesting the LS-Boost gets the sign right, while the LS result can be more influenced by factors such as multicollinearity. This further demonstrates the benefit of using LS-Boost even when the LS solution is unique. In this data set, at $ k=2,000 $, we still do not seem much convergence of LS-Boost s.e. and \textit{p}-value to those of the LS method, probably due to the relative large number of regressors w.r.t. the sample size of $442$ and possibly other features of the data.

\begin{table}[htp] \centering
	\begin{center}
		\caption{Regression table for the diabetes data} 
		\label{tab:diabetes_table} 
		\begin{threeparttable} 
			\begin{tabular}{lrrrrrrrrr}  
				\toprule
				&\multicolumn{3}{c}{LS method} & \multicolumn{3}{c}{LS-Boost (AIC)} & \multicolumn{3}{c}{LS-Boost (2000 steps)}\\
				\cmidrule(lr){2-4} \cmidrule(lr){5-7} \cmidrule(lr){8-10}
				variable & estimate & s.e. & \textit{p}-value & estimate & s.e. & \textit{p}-value & estimate & s.e. & \textit{p}-value \\
				\midrule
				bmi   & 460.721 & 84.601 & 0.000 & 503.774 & 34.367 & 0.000 & 503.774 & 34.367 & 0.000 \\
				map   & 342.933 & 72.447 & 0.000 & 270.955 & 41.109 & 0.000 & 323.178 & 56.251 & 0.000 \\
				sex   & -267.344 & 65.270 & 0.000 & -147.604 & 42.758 & 0.001 & -228.060 & 56.064 & 0.000 \\
				age.sex & 148.678 & 73.407 & 0.044 & 122.407 & 37.244 & 0.001 & 169.604 & 54.616 & 0.002 \\
				bmi.map & 154.720 & 86.340 & 0.074 & 98.797 & 31.222 & 0.002 & 151.653 & 55.000 & 0.006 \\
				map.glu & -133.476 & 91.314 & 0.145 & $\cdot$ & $\cdot$ & $\cdot$ & -105.633 & 60.878 & 0.083 \\
				tch.2 & 773.375 & 606.967 & 0.203 & $\cdot$ & $\cdot$ & $\cdot$ & 46.156 & 33.736 & 0.171 \\
				tc.tch & -2205.917 & 1761.843 & 0.211 & -5.013 & 5.130 & 0.329 & -136.523 & 70.227 & 0.052 \\
				glu.2 & 114.149 & 94.122 & 0.226 & 80.605 & 29.441 & 0.006 & 115.340 & 47.530 & 0.015 \\
				hdl.tch & 1188.409 & 1002.242 & 0.236 & $\cdot$ & $\cdot$ & $\cdot$ & -100.177 & 55.824 & 0.073 \\	
				\bottomrule
			\end{tabular}
			\begin{tablenotes}[flushleft]
				\setlength\labelsep{0pt}
				\item[] \textit{Notes}: There are $442$ observations and $62$ variables in the diabetes data. Only $10$ variables are reported. The dependent variable is a measure of diabetes progression one year after baseline. Results for the intercept are skipped in this table.
			\end{tablenotes}
		\end{threeparttable}
	\end{center} 
\end{table}

In \Cref{tab:prostate_table}, we see a much more consistent pattern in s.e. and \textit{p}-values between the LS and the LS-Boost estimators when $ k = 2,000 $. It clearly shows the proposed \textit{p}-value can converge to that of LS method when both $ k $ and the sample size is relatively large. Also, we observe that, for variable \texttt{pgg45}, its \textit{p}-value is $ 0.058 $, and it later changes to $ 0.292 $ when $ k $ increases. It exemplifies the unique feature of our incremental hypothesis testing --- a coefficient can be significant when estimated partially and insignificant when estimated more fully. And we want to pick up the coefficient when it is significant during model selection.

\begin{table}[htp] \centering
	\begin{center}
		\caption{Regression table for the prostate cancer data} 
		\label{tab:prostate_table} 
		\begin{threeparttable} 
			\begin{tabular}{lrrrrrrrrr}  
				\toprule
				       &\multicolumn{3}{c}{LS method} & \multicolumn{3}{c}{LS-Boost (AIC)} & \multicolumn{3}{c}{LS-Boost (2000 steps)}\\
				      \cmidrule(lr){2-4} \cmidrule(lr){5-7} \cmidrule(lr){8-10}
				variable & estimate & s.e. & \textit{p}-value & estimate & s.e. & \textit{p}-value & estimate & s.e. & \textit{p}-value \\
				\midrule
				lcavol & 0.564 & 0.088 & 0.000 & 0.496 & 0.045 & 0.000 & 0.564 & 0.087 & 0.000 \\
				svi   & 0.762 & 0.241 & 0.002 & 0.551 & 0.105 & 0.000 & 0.762 & 0.240 & 0.001 \\
				lweight & 0.622 & 0.201 & 0.003 & 0.500 & 0.111 & 0.000 & 0.622 & 0.199 & 0.002 \\
				age   & -0.021 & 0.011 & 0.058 & $\cdot$ & $\cdot$ & $\cdot$ & -0.021 & 0.011 & 0.054 \\
				lbph  & 0.097 & 0.058 & 0.098 & 0.034 & 0.018 & 0.065 & 0.097 & 0.057 & 0.092 \\
				lcp   & -0.106 & 0.090 & 0.241 & $\cdot$ & $\cdot$ & $\cdot$ & -0.106 & 0.089 & 0.235 \\
				pgg45 & 0.004 & 0.004 & 0.310 & 0.001 & 0.001 & 0.058 & 0.004 & 0.004 & 0.292 \\
				gleason & 0.049 & 0.155 & 0.752 & $\cdot$ & $\cdot$ & $\cdot$ & 0.049 & 0.148 & 0.740 \\	
				\bottomrule
			\end{tabular}
			\begin{tablenotes}[flushleft]
				\setlength\labelsep{0pt}
				\item[] \textit{Notes}: There are $97$ observations and $8$ variables in the prostate cancer data. The dependent variable is the log of prostate-specific antigen (lpsa) score. Results for the intercept are skipped in this table.
			\end{tablenotes}
		\end{threeparttable}
	\end{center} 
\end{table}

In \Cref{tab:redwine_table}, again we see the convergence in s.e. and \textit{p}-value of the LS-Boost estimator to those of the LS estimator. A similar pattern also exists: a variable such as \texttt{FA} is significant when $ k $ is small but insignificant when $ k $ is large. This data set is used in \cite{lockhart2014lassopval}.

\begin{table}[htp] \centering
	\begin{center}
		\caption{Regression table for the red wine quality data} 
		\label{tab:redwine_table} 
		\begin{threeparttable} 
			\begin{tabular}{lrrrrrrrrr}  
				\toprule
				&\multicolumn{3}{c}{LS method} & \multicolumn{3}{c}{LS-Boost (AIC)} & \multicolumn{3}{c}{LS-Boost (2000 steps)}\\
				\cmidrule(lr){2-4} \cmidrule(lr){5-7} \cmidrule(lr){8-10}
				variable & estimate & s.e. & \textit{p}-value & estimate & s.e. & \textit{p}-value & estimate & s.e. & \textit{p}-value \\
				\midrule
				alcohol & 0.276 & 0.026 & 0.000 & 0.285 & 0.014 & 0.000 & 0.278 & 0.024 & 0.000 \\
				VA & -1.084 & 0.121 & 0.000 & -1.037 & 0.065 & 0.000 & -1.083 & 0.109 & 0.000 \\
				sulphates & 0.916 & 0.114 & 0.000 & 0.814 & 0.092 & 0.000 & 0.913 & 0.111 & 0.000 \\
				TSD & -0.003 & 0.001 & 0.000 & -0.003 & 0.000 & 0.000 & -0.003 & 0.001 & 0.000 \\
				chlorides & -1.874 & 0.419 & 0.000 & -1.684 & 0.326 & 0.000 & -1.876 & 0.382 & 0.000 \\
				pH    & -0.414 & 0.192 & 0.031 & -0.353 & 0.089 & 0.000 & -0.424 & 0.165 & 0.010 \\
				FSD & 0.004 & 0.002 & 0.045 & 0.002 & 0.001 & 0.040 & 0.004 & 0.002 & 0.035 \\
				citric acid & -0.183 & 0.147 & 0.215 & $\cdot$ & $\cdot$ & $\cdot$ & -0.179 & 0.132 & 0.173 \\
				RS & 0.016 & 0.015 & 0.276 & 0.001 & 0.001 & 0.531 & 0.016 & 0.014 & 0.270 \\
				FA & 0.025 & 0.026 & 0.336 & 0.003 & 0.001 & 0.001 & 0.023 & 0.021 & 0.266 \\
				density & -17.881 & 21.633 & 0.409 & $\cdot$ & $\cdot$ & $\cdot$ & -16.354 & 18.106 & 0.366 \\	
				\bottomrule
			\end{tabular}
			\begin{tablenotes}[flushleft]
				\setlength\labelsep{0pt}
				\item[] \textit{Notes}: There are $1599$ observations and $11$ variables in the red wine quality data. The dependent variable is a wine quality score between $0$ and $10$. The 2nd, 4th, 7th, 9th and 10th variables are volatile acidity (VA), total sulfur dioxide (TSD), free sulfur dioxide (FSD), residual sugar (RS), and fixed acidity (FA), respectively. Results for the intercept are skipped in this table.
			\end{tablenotes}
		\end{threeparttable}
	\end{center} 
\end{table}

\end{appendices}


\end{document}